%%%%%%%%%%%%%%%%%%%%%%%%%%%%%%%%%%%%%%%%%%%%%%%%%%%%%%%%%%%%%%%%%%%%%
%%                                                                 %%
%% Please do not use \input{...} to include other tex files.       %%
%% Submit your LaTeX manuscript as one .tex document.              %%
%%                                                                 %%
%% All additional figures and files should be attached             %%
%% separately and not embedded in the \TeX\ document itself.       %%
%%                                                                 %%
%%%%%%%%%%%%%%%%%%%%%%%%%%%%%%%%%%%%%%%%%%%%%%%%%%%%%%%%%%%%%%%%%%%%%

%%\documentclass[referee,sn-basic]{sn-jnl}% referee option is meant for double line spacing

%%=======================================================%%
%% to print line numbers in the margin use lineno option %%
%%=======================================================%%

%%\documentclass[lineno,sn-basic]{sn-jnl}% Basic Springer Nature Reference Style/Chemistry Reference Style

%%======================================================%%
%% to compile with pdflatex/xelatex use pdflatex option %%
%%======================================================%%

%%\documentclass[pdflatex,sn-basic]{sn-jnl}% Basic Springer Nature Reference Style/Chemistry Reference Style

%%\documentclass[sn-basic]{sn-jnl}% Basic Springer Nature Reference Style/Chemistry Reference Style
%\documentclass[sn-mathphys]{sn-jnl}% Math and Physical Sciences Reference Style
%\documentclass[sn-aps]{sn-jnl}% American Physical Society (APS) Reference Style
\RequirePackage{tikz}
\documentclass{article}% Vancouver Reference Style
%%\documentclass[sn-apa]{sn-jnl}% APA Reference Style
%%\documentclass[sn-chicago]{sn-jnl}% Chicago-based Humanities Reference Style
%%\documentclass[sn-standardnature]{sn-jnl}% Standard Nature Portfolio Reference Style
%%\documentclass[default]{sn-jnl}% Default
%%\documentclass[default,iicol]{sn-jnl}% Default with double column layout

%%%% Standard Packages
%%<additional latex packages if required can be included here>
%%%%

\usepackage{url}
\usepackage{comment}
\usetikzlibrary{automata,positioning}
\usepackage{subcaption}
\usepackage{amsmath,amsthm,amsfonts}
\usepackage[nointegrals]{wasysym}
\usepackage[margin=1.3in]{geometry}
%\usepackage{dirtytalk}

%%%%%=============================================================================%%%%
%%%%  Remarks: This template is provided to aid authors with the preparation
%%%%  of original research articles intended for submission to journals published 
%%%%  by Springer Nature. The guidance has been prepared in partnership with 
%%%%  production teams to conform to Springer Nature technical requirements. 
%%%%  Editorial and presentation requirements differ among journal portfolios and 
%%%%  research disciplines. You may find sections in this template are irrelevant 
%%%%  to your work and are empowered to omit any such section if allowed by the 
%%%%  journal you intend to submit to. The submission guidelines and policies 
%%%%  of the journal take precedence. A detailed User Manual is available in the 
%%%%  template package for technical guidance.
%%%%%=============================================================================%%%%

%\jyear{2022}%

%% as per the requirement new theorem styles can be included as shown below
%\theoremstyle{thmstyleone}%
\newtheorem{theorem}{Theorem}%  meant for continuous numbers
\newtheorem{lemma}[theorem]{Lemma}%  meant for continuous numbers
\newtheorem{corollary}[theorem]{Corollary}%  meant for continuous numbers
%%\newtheorem{theorem}{Theorem}[section]% meant for sectionwise numbers
%% optional argument [theorem] produces theorem numbering sequence instead of independent numbers for Proposition
\newtheorem{proposition}[theorem]{Proposition}% 
\newtheorem{example}{Example}%
\newtheorem{remark}{Remark}%

\newtheorem{definition}{Definition}%

%\raggedbottom
%%\unnumbered% uncomment this for unnumbered level heads

\newcommand{\ZZ}{\mathbb{Z}}
\newcommand{\End}{\mathrm{End}}
\newcommand{\ID}{\mathrm{id}}
\newcommand{\N}{\mathbb{N}}
\newcommand{\lang}{\mathcal{L}}
\newcommand{\INF}{{^\omega}}
\newcommand\xqed[1]{%
  \leavevmode\unskip\penalty9999 \hbox{}\nobreak\hfill
  \quad\hbox{#1}}
\newcommand\qee{\xqed{$\fullmoon$}}
\newcommand\Var{\mathrm{Var}}
\newcommand\Prob{\mathrm{Pr}}
\newcommand{\vl}{{\mid}}

\begin{document}

\title{On von Neumann regularity of cellular automata}
\author{Ville Salo}

\maketitle

\abstract{We show that a cellular automaton on a one-dimensional two-sided mixing subshift of finite type is a von Neumann regular element in the semigroup of cellular automata if and only if it is split epic onto its image in the category of sofic shifts and block maps. It follows from previous joint work of the author and T\"orm\"a that von Neumann regularity is a decidable condition, and we decide it for all elementary CA, obtaining the optimal radii for weak generalized inverses. Two sufficient conditions for non-regularity are having a proper sofic image or having a point in the image with no preimage of the same period. We show that the non-regular ECA $9$ and $28$ cannot be proven non-regular using these methods. We also show that a random cellular automaton is non-regular with high probability.}

\section{Introduction}

The von Neumann regular elements -- elements $a$ having a weak inverse $b$ such that $aba = a$ -- of cellular automaton (CA) semigroups are studied in \cite{CaGa20}. We show that in the context of cellular automata on one-dimensional two-sided mixing subshifts of finite type, von Neumann regularity coincides with the notion of split epicness onto the image, another generalized invertibility notion from category theory.

Question 1 of \cite{CaGa20} asks which of the so-called elementary cellular automata (ECA) are von Neumann regular. They determine this for all ECA except ones equivalent to those with numbers 6, 7, 9, 23, 27, 28, 33, 41, 57, 58 and 77, see the next section for the definition of the numbering scheme. 

What makes this question interesting is that von Neumann regularity of one-dimensional cellular automata is not obviously\footnote{Specifically, many things about ``one-step behavior'' of cellular automata (like surjectivity and injectivity) are decidable using automata theory, or the decidability of the MSO logic of the natural numbers under successor. No decision algorithm for split epicness using these methods is known. Split epicness is a first-order property where we quantify over cellular automata/block maps, and many such properties, for example conjugacy \cite{JaKa20}, are undecidable.} decidable -- clearly checking if a given CA $f$ has a weak inverse is semidecidable, but it is not immediately clear how to semidecide the nonexistence of a weak inverse. However, split epicness has been studied previously in \cite{SaTo15a}, and in particular it was shown there that split epicness of a morphism between two sofic shifts is a decidable condition. This means Question 1 of \cite{CaGa20} can in theory be decided algorithmically.

As the actual bound stated in \cite{SaTo15a} is beyond astronomical, it is an interesting question whether the method succeeds in actually deciding each case. Using this method we give a human proof that 9, 27, 28, 41 and 58 are not von Neumann regular, and we prove by computer that ECA 6, 7, 23, 33, 57 and 77 are von Neumann regular, answering the remaining cases of Question 1 of \cite{CaGa20}.

The von Neumann regular CA on this list have weak inverses of radius at most four. Non-regularity is proved in each case by looking at eventually periodic points of eventual period one and applying the method of \cite{SaTo15a}, namely the strong periodic point condition. For ECA 27, 41 and 58, non-regularity can also be proved by simply observing that their images are proper sofic. For ECA 9 and ECA 28, this method cannot be used, as the image is an SFT. The ECA $9, 27, 28, 58$ admit preimages of the same period for all periodic points, so the method used in \cite{CaGa20} cannot be used to prove their non-regularity. In particular $9$ and $28$ seem to require using the strong periodic point condition.

This is an extended version of the paper \cite{Sa21a}. The new results about ECA are the precise optimal radii for the weak inverses (one of those reported in \cite{Sa21a} was suboptimal, and we had only checked the optimality in one case), and the observation that $9$ and $28$ admit preimages of the same period for periodic points. We also prove that a random CA is not von Neumann regular with high probability (Theorem~\ref{thm:RandomCA}), which refutes an almost-conjecture stated in \cite{Sa21a}.

\section{Preliminaries}

The \emph{full shift} is $\Sigma^\ZZ$ where $\Sigma$ is a finite alphabet, carrying the product topology. It is homeomorphic to the Cantor set, thus compact and metrizable. It is a dynamical system under the \emph{shift} $\sigma(x)_i = x_{i+1}$. Its subsystems (closed shift-invariant subsets) are called \emph{subshifts}, and the bi-infinite words $x \in \Sigma^\ZZ$ are their \emph{points}.

A \emph{cellular automaton (CA)} is a shift-commuting continuous function $f : X \to X$ on a subshift $X$. The cellular automata on a subshift $X$ form a monoid $\End(X)$ under function composition. A CA $f$ is \emph{reversible} if $\exists g: f \circ g = g \circ f = \ID$ where $g$ is a cellular automaton. Reversibility is equivalent to bijectivity by a compactness argument.

A CA has a local rule, that is, there exists a \emph{radius} $r \in \N$ such that $f(x)_i$ is determined by $x \vl_{[i-r,i+r]}$ for all $x \in X$ (and does not depend on $i$). More generally, a \emph{neighborhood} is a finite subset $N \subset \ZZ$ such that $f(x)_i$ is determined by $x \vl_{i+N}$. The \emph{elementary cellular automata} (ECA) are the CA on the binary full shift $\{0,1\}^\ZZ$ which can be defined with radius $1$. There is a numbering scheme for such CA: If $n \in [0,255]$ has base $2$ representation $b_7b_6...b_1b_0$, then ECA number $n$ is the one mapping $f(x)_i = b_{(x_{[i-1,i+1]})_2}$ where $(x_{[i-1,i+1]})_2$ is the number represented by $x_{[i-1,i+1]}$ in base $2$.
This numbering scheme is from \cite{Wo83}. The usage of base 10 for $n$ in this notation is standard, and some CA researchers remember ECA by these numbers. From radius $2$ onward, we switch to hexadecimal notation.

We recall \cite[Definition~3]{CaGa20}: define maps $R, S : \{0,1\}^\ZZ \to \{0,1\}^\ZZ$ by the formulas $R(x)_i = x_{-i}$ (this reverses the configuration) and $S(x)_i = 1-x_i$ (this flips all the bits in the configuration). Two cellular automata $f, g \in \End(\{0,1\}^\ZZ)$ are \emph{equivalent} if $f \in \langle S \rangle \circ g \circ \langle S \rangle \cup \langle S \rangle \circ R \circ g \circ R \circ \langle S \rangle$, where $\circ$ denotes function composition and $\langle S \rangle = \{\mbox{id}, S\}$. This means that we may pre- or postcompose by the bit flip $S$, and we may conjugate the CA by the reversal $R$. Composing by $R$ from just one side almost never gives a cellular automaton, so this is not allowed.

A subshift can be defined by forbidding a set of finite words from appearing as subwords of its points (which themselves are infinite words), and this is in fact a characterization of subshifts. A subshift is \emph{of finite type} or \emph{SFT} if it can be defined by a finite set of forbidden words, and \emph{sofic} if it can be defined by forbidding a regular language, in the sense of automata and formal languages. A \emph{proper sofic} subshift is one that is sofic but not SFT.

The \emph{language} $\lang(X)$ of a subshift $X$ is the set of finite words that appear in its points. A subshift $X$ is \emph{mixing} if for all words $u, v$ appearing in the language of $X$, for all large enough $n$ some word $uwv$ with $\vl w \vl = n$ appears in the language of $X$. In the case of a sofic shift $X$ with language $L$, we can simplify this to $\exists m: \forall u,v \in L: \exists w \in L: \vl w \vl = m \wedge uwv \in L$.

The language of a subshift determines it uniquely, and thus a convenient way to define a subshift is to describe its language. For a language $L$ which is \emph{extendable} meaning $\forall v \in L: \exists u, w \in L: uvw \in L$, there exists a unique smallest subshift whose language contains $L$. We write this subshift as $\lang^{-1}(L)$. For $L$ regular, this gives a sofic shift. We note that this is not an inverse for $\lang$, but rather $\lang(\lang^{-1}(L))$ is the \emph{factor closure} of $L$, i.e.\ $v \in \lang(\lang^{-1}(L)) \iff \exists u, w: uvw \in L$.

If $u \in A^*$ is a finite nonempty word, we write $u^\ZZ$ for the $\vl u\vl$-periodic point (i.e.\ fixed point of $\sigma^{\vl u\vl }$) in $A^\ZZ$ whose subword at $[0, \vl u\vl -1]$ is equal to $u$. We say two points are \emph{left-asymptotic} if their values agree at $-n$ for large enough $n$, and symmetrically define \emph{right-asymptoticity}. Points are \emph{asymptotic} if they are left- and right-asymptotic. The notation $\INF uwv \INF$ denotes a point left-asymptotic to (a shift-image of) $u^\ZZ$ and right-asymptotic to (a shift-image of) $v^\ZZ$, with the word $w$ between. When the positioning is important, we include a decimal point.

The \emph{de Bruijn graph} (with parameter $n$) is the graph with nodes for words of length $n-1$. When considering a cellular automaton with neighborhood consisting of $n$ consecutive positions, we typically label the transition from $au$ to $ub$ ($|u| = n-2$) by the image of $aub$ under the local rule. This is a convenient way to write the local rule of an ECA (in the case $n = 3$). Images of configurations can be found by following the corresponding path in the de Bruijn path (looking at consecutive pairs of symbols) and reading the edge labels.

In some parts of the paper, we assume knowledge of automata theory and formal languages. In automata theoretic terms, the de Bruijn graph can be directly seen as a non-deterministic finite-state automaton for the language of a CA with neighborhood $[0, n-1]$ (note that shifting the neighborhood does not affect the image): simply consider every node as both initial and final. This is called the \emph{de Bruijn automaton}.

See standard references for more information on symbolic dynamics \cite{LiMa95} or automata theory and formal languages \cite{HoMoUl06}.

A \emph{semigroup} is a set with an associative product, denoted by juxtaposition. A \emph{category} is specified by a class of objects, for each pair of objects $A, B$ a class of morphisms $f : A \to B$, a composition $gf = g \circ f : A \to C$ for morphisms $f : A \to B, g : B \to C$, and identity morphisms $\mathrm{id} : A \to A$ for each object. The composition should be associative, and the identity morphism should act trivially in composition. See \cite{Ma71} for more information on categories (very little is needed here).

\section{Split epicness and von Neumann regularity}

In this section, we show split epicness and von Neumann regularity are equivalent concepts on mixing SFTs. On full shifts, this is simply a matter of defining these terms.

If $S$ is a semigroup, then $a \in S$ is \emph{(von Neumann) regular} if $\exists b \in S: a b a = a \wedge b a b = b$. We say $b$ is a \emph{generalized inverse} of $a$. If $aba = a$ (but not necessarily $bab = b$), then $b$ is a \emph{weak inverse} of $a$. (More properly, one might call this a weak generalized inverse, but we use the shorter term as it is unambiguous.)

\begin{lemma}
If $a$ has a weak inverse, then it has a generalized inverse and thus is regular.
\end{lemma}

\begin{proof}
If $aba = a$, then letting $c = bab$, we have
$aca = ababa = aba = a$
and
$cac = bababab = babab = bab = c$. 
\end{proof}

If $\mathcal{C}$ is a category, a morphism $f : X \to Y$ is \emph{split epic} if there is a morphism $g : Y \to X$ such that $f \circ g = \ID_Y$. Such a $g$ is called a \emph{right inverse} or a \emph{section}.

Note that, in general, category-theoretic concepts depend on the particular category at hand, but if $\mathcal{C}$ is a full subcategory of $\mathcal{D}$ (meaning a subcategory induced by a subclass of the objects, by taking all the morphisms between them), then split epicness for a morphism $f : X \to Y$ where $X, Y$ are objects of $\mathcal{C}$ means the same in both.

We are in particular interested in the categories $K2, K3, K4$ (in the naming scheme of \cite{SaTo15a}) with respectively SFTs, sofic shifts, or all subshifts as objects, and \emph{block maps}, i.e. shift-commuting continuous functions $f : X \to Y$ as morphisms. Note that $K2$ is a full subcategory of $K3$, which in turn is a full subcategory of $K4$. By the previous paragraph, the choice does not really matter.

The following theorem is essentially only a matter of translating terminology, and works in many concrete categories.

\begin{theorem}
\label{thm:Equivalence}
Let $X$ be a subshift, and $f : X \to X$ a CA. Then the following are equivalent:
\begin{itemize}
\item $f : X \to f(X)$ has a right inverse $g : f(X) \to X$ which can be extended to a morphism $h : X \to X$ such that $h\vl_{f(X)} = g$,
\item $f$ is regular as an element of $\End(X)$.
\end{itemize}
\end{theorem}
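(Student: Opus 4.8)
The plan is to prove the two conditions are equivalent by direct translation, establishing each implication separately. The key observation is that regularity of $f$ in $\End(X)$ asserts the existence of a genuine cellular automaton $h : X \to X$ with $fhf = f$, whereas the first condition speaks about a right inverse $g$ on the image $f(X)$ that extends to a block map on all of $X$. I expect the bridge between these to be the identity $fhf = f$ read as "$h$ inverts $f$ on the image," so the work is really about restricting and extending morphisms correctly.

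For the forward direction, suppose $f : X \to f(X)$ has a right inverse $g : f(X) \to X$ with $f \circ g = \ID_{f(X)}$, and suppose $g$ extends to a block map $h : X \to X$ with $h\vl_{f(X)} = g$. I would then compute $f \circ h \circ f$ pointwise: for any $x \in X$, the point $f(x)$ lies in $f(X)$, so $h(f(x)) = g(f(x))$, and applying $f$ gives $f(g(f(x))) = (f \circ g)(f(x)) = f(x)$ since $f \circ g = \ID_{f(X)}$. Hence $fhf = f$, so $h$ is a weak inverse of $f$. By the first Lemma of the excerpt, having a weak inverse implies $f$ is regular, which gives the second condition.

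For the reverse direction, suppose $f$ is regular, so there is $h \in \End(X)$ with $fhf = f$ (a weak inverse suffices; regularity gives at least this). I would define $g := h\vl_{f(X)} : f(X) \to X$, which is a block map since restrictions of block maps to subshifts are block maps. To check $g$ is a right inverse of $f : X \to f(X)$, I take an arbitrary $y \in f(X)$, write $y = f(x)$ for some $x \in X$, and compute $f(g(y)) = f(h(f(x))) = (fhf)(x) = f(x) = y$, so $f \circ g = \ID_{f(X)}$. By construction $h$ is the required extension of $g$, so the first condition holds.

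The main obstacle, such as it is, is bookkeeping rather than mathematics: one must be careful that $g$ is only claimed to be a one-sided inverse on $f(X)$ (not a two-sided inverse), and that the extension $h$ need not agree with any canonical inverse off the image — it is only constrained on $f(X)$. The equivalence holds in any concrete category of this shape precisely because both conditions unwind to the single equation $fhf = f$ together with $h$ being a morphism $X \to X$; no special structure of subshifts or mixing is used here, consistent with the remark that this works in many concrete categories.
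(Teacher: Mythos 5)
Your proof is correct and follows essentially the same route as the paper's: both directions come down to restricting the weak inverse $h$ to $f(X)$ (respectively extending $g$ to $h$) and the pointwise computation $fhf(x) = fg(f(x)) = f(x)$, with the first Lemma upgrading the weak inverse to a generalized inverse. The only difference is cosmetic — you present the implications in the opposite order from the paper.
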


\begin{proof}
Suppose first that $f$ is regular, and $h \in \End(X)$ satisfies $fhf = f$ and $hfh = h$. Then the restriction $g = h\vl_{f(X)} : f(X) \to X$ is still shift-commuting and continuous, and $\forall x: fg(f(x)) = f(x)$ implies that for all $y \in f(X)$, $fg(y) = y$, i.e. $g$ is a right inverse for the codomain restriction $f : X \to f(X)$ and it extends to the map $h : X \to X$ by definition.

Suppose then that $fg = \ID_{f(X)}$ for some $g : f(X) \to X$, as a right inverse of the codomain restriction $f : X \to f(X)$. Let $h : X \to X$ be such that $h\vl_{f(X)} = g$, which exists by assumption. Then $fh(f(x)) = fg(f(x)) = f(x)$. Thus $f$ is regular, and $hfh$ is a generalized inverse for it by the proof of the previous lemma. 
\end{proof}

Note that when $X$ is a full shift, extending morphisms is trivial: if $g : Y \to X$ has been defined by a local rule of radius $r$, defined on the words of length $2r+1$ in $Y$, we can simply fill in the local rule arbitrarily to obtain a cellular automaton $h : X \to X$ with $h\vl_Y = g$. This does not work for general $Y, X$, as the image of $h$ might not be contained in $X$, but it turns out that partially defined CA on mixing SFTs can be extended in an analogous way.

For two subshifts $X, Y$ write $X \overset{\mathrm{per}}{\rightarrow} Y$ if the period of every periodic point of $X$ is divisible by the period of some periodic point of $Y$. The following is Boyle's extension lemma \cite{Bo83}.

\begin{lemma}
Let $\tilde{S} \subset S$ be subshifts, let $T$ be a mixing SFT. Suppose $\tilde{\phi} : \tilde{S} \to T$ is shift-commuting and continuous, and $S \overset{\mathrm{per}}{\rightarrow} T$. Then there exists a shift-commuting continuous map $\phi : S \to Y$ such that $\phi\vl_{\tilde S} = \tilde{\phi}$.
\end{lemma}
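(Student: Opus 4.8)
The plan is to prove this by a marker-and-splicing argument, the standard toolkit for Boyle-type extension results. First I would normalize: by passing to a higher block presentation of $S$ (and hence of $\tilde S \subseteq S$) and of $T$, assume that $\tilde\phi$ is a block code of some radius $r$, that $T$ is a $1$-step SFT (legality of a configuration being determined by adjacent pairs), and fix the mixing constant $m$ of $T$ from the simplified mixing condition for sofic shifts recalled above. The point to exploit is that on a mixing SFT any two allowed words can be joined by an allowed bridge of every sufficiently large length, so $T$ is very flexible as a target once we have enough room to interpolate.

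Next I would invoke the Marker Lemma for $S$: for a parameter $N \gg r + m$ there is a clopen set $F \subseteq S$, recognized by a finite window and hence by a block condition, whose translates $F, \sigma F, \dots, \sigma^{N-1} F$ are pairwise disjoint and which meets the orbit of every point of $S$ except those of period at most $N$. For such a non-exceptional $x$, the marked coordinates $\{\, i : \sigma^i x \in F \,\}$ cut $\ZZ$ into consecutive segments of bounded length (with gaps roughly between $N$ and $2N$), and this cutting commutes with $\sigma$ and depends only on a bounded window of $x$. On each segment I would define $\phi(x)$ as follows: at coordinates whose surrounding window lies in $\lang(\tilde S)$, output the value dictated by $\tilde\phi$'s local rule. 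Wherever the relevant windows come from $\tilde S$ these outputs are automatically mutually consistent legal transitions of $T$, and in particular for $x \in \tilde S$ every coordinate is of this type, so $\phi$ will restrict to $\tilde\phi$. On the remaining coordinates, which form runs separated from the segment endpoints by a buffer of width $\ge m$, I would fill in an allowed word of $T$ of the correct length whose two ends match the prescribed boundary symbols, using mixing; the buffer width guarantees such a bridge exists.

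Then I would handle the exceptional part: the periodic points of period at most $N$ that the markers miss, together with any bi-infinite runs of coordinates that never resemble $\tilde S$. For a periodic orbit of $S$ of period $q$, the hypothesis $S \overset{\mathrm{per}}{\rightarrow} T$ furnishes a periodic point of $T$ of some period dividing $q$, and sending $\sigma^i x$ to the $i$-th shift of this $T$-point gives a well-defined shift-commuting assignment on the orbit. The real work is to do this so that it glues continuously with the generic definition of the previous step, i.e. so that a high-period or aperiodic point approximating a short orbit receives a nearby image; this is arranged by using the same mixing bridges to steer the generic image toward the chosen periodic image along the long stretches where $x$ resembles the short orbit. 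Finally I would verify that the resulting $\phi$ is well defined, locally determined (hence a block map), shift-commuting, lands in $T$, and equals $\tilde\phi$ on $\tilde S$.

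I expect the main obstacle to be precisely this gluing: making the three regimes — the $\tilde\phi$-controlled coordinates, the mixing-filled buffers, and the exceptional periodic orbits — agree on overlaps and vary continuously, while never producing a forbidden transition of $T$. Two things make it manageable. First, taking the marker scale $N$ large relative to $r$ and $m$ creates enough slack that every interpolation has room to use mixing. Second, the condition $S \overset{\mathrm{per}}{\rightarrow} T$ is exactly what rescues the points the marker lemma cannot reach; it is the only place that hypothesis enters, and indeed a periodic point of $S$ whose period is a multiple of no $T$-period could have no legal image, so the condition is also necessary.
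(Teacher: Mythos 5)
The first thing to note is that the paper does not prove this statement at all: it is imported verbatim as Boyle's extension lemma, with a citation to \cite{Bo83}, and is then used as a black box in the proof of Theorem~\ref{thm:MixingSFTEquivalence}. So there is no in-paper proof to match; what you have written is an attempt to reprove an external result. Your general strategy --- markers, mixing bridges in a $1$-step presentation of $T$, and the period condition $S \overset{\mathrm{per}}{\rightarrow} T$ reserved for the points the markers miss --- is indeed the standard toolkit and is in the spirit of Boyle's original argument, and your closing remark that the period condition is also necessary is correct.

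However, as a proof the proposal has a genuine gap, and it sits exactly where you locate ``the main obstacle.'' The marker lemma does \emph{not} cut $\ZZ$ into segments of length roughly between $N$ and $2N$: marker-free runs are unbounded, because a point of $S$ may agree with a point of period $\leq N$ along arbitrarily long (even one-sided infinite) stretches without being periodic. Handling these unboundedly long periodic-looking stretches --- mapping each such stretch near a periodic point of $T$ of dividing period, with phase bookkeeping, and splicing that regime into the generic regime by mixing bridges so that the resulting rule is still given by a finite window --- is not a gluing detail to be deferred; it \emph{is} the proof, and your sketch only asserts that it can be ``arranged.'' A second, related gap: periodic points of period $\leq N$ that lie in $\tilde S$ receive two conflicting definitions in your scheme (every window of such a point lies in $\lang(\tilde S)$, so the $\tilde\phi$-rule applies, while your exceptional-orbit step assigns an arbitrary period-compatible point of $T$); you must use $\tilde\phi$-images on those orbits and steer the steering/gluing toward \emph{those} images, or else either $\phi\vl_{\tilde S} = \tilde\phi$ or continuity fails. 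Finally, a minor fixable slip: window-by-window membership of $x$ in $\lang(\tilde S)$ at radius $r$ does not by itself make consecutive outputs a legal transition of $T$ (the two windows need not admit a common extension in $\lang(\tilde S)$); one should define ``$\tilde\phi$-controlled'' using radius $r+1$, so that the pair of outputs at $i, i+1$ is the image of a single word of $\lang(\tilde S)$ and hence lies in $\lang(T)$.
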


\begin{lemma}
Let $X$ be a mixing SFT, $Y \subset X$ a subshift, and $g : Y \to X$ any shift-commuting continuous map. Then $g = h\vl_Y$ for some cellular automaton $h : X \to X$.
\end{lemma}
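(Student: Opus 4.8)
The plan is to deduce this lemma directly from Boyle's extension lemma, which is stated immediately before it. The only gap between the hypotheses of the present lemma and those of Boyle's lemma is the periodic-point divisibility condition $X \overset{\mathrm{per}}{\rightarrow} X$, so the real content is verifying that this condition holds trivially when source and target coincide, and then checking that $g : Y \to X$ fits the template $\tilde\phi : \tilde S \to T$ with $\tilde S = Y$, $S = X$, and $T = X$.

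First I would set $T = X$ (which is a mixing SFT by hypothesis), $S = X$, and $\tilde S = Y \subset X = S$, and take $\tilde\phi = g$, which is shift-commuting and continuous by assumption. To apply Boyle's lemma I must check $S \overset{\mathrm{per}}{\rightarrow} T$, i.e. $X \overset{\mathrm{per}}{\rightarrow} X$. By the definition given in the excerpt, this says that the period of every periodic point of $X$ is divisible by the period of some periodic point of $X$. This is immediate: for any periodic point $p \in X$ of period $n$, the point $p$ itself is a periodic point of $X$ whose period $n$ divides $n$. Hence the relation holds reflexively for every subshift, and in particular for our mixing SFT $X$.

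With all hypotheses of Boyle's extension lemma verified, it yields a shift-commuting continuous map $\phi : X \to Y'$ (in the notation of that lemma, the codomain written as $Y$, but to avoid clashing with our $Y$ I would call the ambient codomain $X$, since Boyle's map lands in the mixing SFT $T = X$) extending $\tilde\phi = g$, that is, $\phi\vl_Y = g$. A shift-commuting continuous map $X \to X$ on a subshift is exactly a cellular automaton, so setting $h = \phi$ gives a CA $h : X \to X$ with $h\vl_Y = g$, which is the desired conclusion.

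I do not anticipate a genuine obstacle here, since the lemma is essentially a specialization of Boyle's lemma to the case where the source and target subshifts are equal. The one point requiring care is purely notational: Boyle's lemma as stated uses the symbol $Y$ for the codomain of the extended map while the present lemma uses $Y$ for the subdomain, so I would be explicit that the extended map's codomain is $T = X$ and rename accordingly to avoid confusion. The substantive observation, that the periodic-point condition is trivially reflexive, is the only step with any mathematical content, and it is a one-line check.
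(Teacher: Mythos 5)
Your proof is correct and is essentially identical to the paper's: both instantiate Boyle's extension lemma with $S = T = X$, $\tilde S = Y$, $\tilde\phi = g$, and observe that the condition $X \overset{\mathrm{per}}{\rightarrow} X$ holds trivially (the paper states this in one line; you spell out the reflexivity). Your notational remark about the codomain is also apt, since the paper's statement of Boyle's lemma indeed writes $\phi : S \to Y$ where $\phi : S \to T$ is meant.
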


\begin{proof}
In the previous lemma, let $S = T = X, \tilde S = Y$, $\tilde{\phi} = g$. The condition $S \overset{\mathrm{per}}{\rightarrow} T$, i.e.\ $X \overset{\mathrm{per}}{\rightarrow} X$, is trivial.
\end{proof}

\begin{theorem}
\label{thm:MixingSFTEquivalence}
Let $X$ be a mixing SFT, and $f : X \to X$ a CA. Then the following are equivalent for $n = 2,3,4$:
\begin{itemize}
\item $f : X \to f(X)$ is split epic in $Kn$.
\item $f$ is regular as an element of $\End(X)$.
\end{itemize}
\end{theorem}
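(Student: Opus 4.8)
The plan is to reduce Theorem~\ref{thm:MixingSFTEquivalence} to Theorem~\ref{thm:Equivalence}, which already characterizes regularity of $f \in \End(X)$ as the existence of a right inverse $g : f(X) \to X$ of the codomain restriction that \emph{extends} to a morphism $h : X \to X$. The only gap between the two theorems is the extension requirement: Theorem~\ref{thm:Equivalence} demands that the section $g$ on $f(X)$ be the restriction of a full cellular automaton on $X$, whereas split epicness in $Kn$ asks merely for the existence of \emph{some} section $g : f(X) \to X$ in the relevant category, with no mention of extending it to all of $X$. So the heart of the argument is to show that, under the mixing SFT hypothesis, every section automatically extends, making the two conditions coincide.

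First I would handle the direction "regular $\Rightarrow$ split epic." This is the easy direction and essentially immediate: if $f$ is regular, Theorem~\ref{thm:Equivalence} hands me a right inverse $g : f(X) \to X$ of $f : X \to f(X)$. Since $f(X)$ is itself a subshift (the continuous shift-commuting image of a subshift is a subshift), and in the sofic/SFT cases one checks that $f(X)$ is an object of the relevant category $Kn$ — the image of an SFT under a block map is sofic, hence an object of $K3$ and $K4$; for $K2$ one uses that we only need split epicness to mean the same thing in the ambient full subcategory, as noted in the excerpt — the map $g$ is exactly a section witnessing that $f : X \to f(X)$ is split epic. Here I would remark that by the full-subcategory observation preceding Theorem~\ref{thm:Equivalence}, the choice of $n \in \{2,3,4\}$ is immaterial, so it suffices to argue in one of them.

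Next comes the direction "split epic $\Rightarrow$ regular," which is where the extension lemma earns its keep. Suppose $f : X \to f(X)$ has a section $g : f(X) \to X$ in $Kn$, so $g$ is a block map with $fg = \ID_{f(X)}$. I want to invoke Theorem~\ref{thm:Equivalence}, and for that I must produce an $h : X \to X$ with $h\vl_{f(X)} = g$. This is precisely what the second extension lemma (the corollary of Boyle's lemma) provides: taking $Y = f(X) \subseteq X$ — note $f(X)$ is a subshift of the mixing SFT $X$ — and applying the lemma to $g : f(X) \to X$ yields a cellular automaton $h : X \to X$ with $h\vl_{f(X)} = g$. Then $g$ is a right inverse that extends to a morphism on $X$, so Theorem~\ref{thm:Equivalence} gives regularity of $f$, with $hfh$ as an explicit generalized inverse.

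The step I expect to be the genuine content — as opposed to bookkeeping — is verifying that the extension lemma applies, i.e.\ confirming its hypotheses hold in our setting. Boyle's lemma requires the codomain to be a mixing SFT (satisfied, since the codomain of $h$ is $X$ itself) and the periodic-point divisibility condition $X \overset{\mathrm{per}}{\rightarrow} X$, which the author has already observed is trivial. So in fact the potential obstacle is defused entirely by the mixing SFT hypothesis on $X$: it is exactly this hypothesis that collapses the distinction between "has a section" and "has a section extending to $X$," and hence makes split epicness onto the image equivalent to von Neumann regularity. The remaining work is purely formal — checking that $f(X)$ is an object of each $Kn$ and that the equivalence is insensitive to $n$ via the full-subcategory remark.
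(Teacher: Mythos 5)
Your proposal is correct and follows essentially the same route as the paper: reduce to Theorem~\ref{thm:Equivalence}, note the forward implication is immediate, and use the corollary of Boyle's extension lemma (with $Y = f(X)$ inside the mixing SFT $X$) to show every section $g : f(X) \to X$ extends to a CA $h : X \to X$, which closes the gap. Your additional bookkeeping about $f(X)$ being an object of $Kn$ and the insensitivity to $n$ via the full-subcategory remark matches the paper's preliminary discussion rather than diverging from it.
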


\begin{proof}
By Theorem~\ref{thm:Equivalence}, we need to check that ``$f : X \to f(X)$ has a right inverse $g : f(X) \to X$ which can be extended to a morphism $h : X \to X$ such that $h\vl_{f(X)} = g$'' is equivalent to split epicness. The forward implication is trivial, and the backward implication follows from the previous lemma, since every $g : f(X) \to X$ extends to such $h$.
\end{proof}

%\begin{proof}
%It is enough to show that any right inverse $g : f(X) \to X$ can be extended to $h : X \to X$ such that $h\vl_{f(X)} = g$. By the Extension Lemma \cite{LiMa95}, it is enough to show the ``$X \searrow X$ condition'' \cite{LiMa95}, which means that for every point $x \in X$ with minimal period $p$, there is a point $y \in X$ with minimal period dividing $p$. This holds trivially with $y = x$. 
%\end{proof}

Theorem~\ref{thm:MixingSFTEquivalence} clearly implies that regularity respects equivalence of CA (this is not difficult to obtain directly from the definition either).

\begin{corollary}
If $f, g \in \End(\{0,1\}^\ZZ)$ are equivalent, then $f$ is regular if and only if $g$ is regular.
\end{corollary}

\section{Deciding split epicness}

We recall the characterization of split epicness \cite[Theorem~1]{SaTo15a}. This is Theorem~\ref{thm:SplitEpicSolved} below.

\begin{definition}
\label{def:SPP}
Let $X, Y$ be subshifts and let $f : X \to Y$ be a morphism. Define
\[ \mathcal{P}_p(Y) = \{ u \in \lang(Y) \;\vl\; u^\ZZ \in Y, \vl u\vl \leq p \}. \]
We say $f$ satisfies the \emph{strong $p$-periodic point condition} if there exists a length-preserving function $G : \mathcal{P}_p(Y) \to \lang(X)$ such that for all $u, v \in \mathcal{P}_p(Y)$ and $w \in \lang(Y)$ with $\INF u .w v \INF \in Y$, there exists an $f$-preimage for $\INF u .w v \INF$ of the form
$\INF G(u) w' . w'' w''' G(v) \INF \in X$
where $\vl u\vl$ divides $\vl w'\vl $, $\vl v\vl $ divides $\vl w'''\vl $ and $\vl w\vl  = \vl w''\vl $. The \emph{strong periodic point condition} is that the strong $p$-periodic point condition holds for all $p \in \N$.
\end{definition}

The condition states that we can pick, for each eventually periodic point, a preimage whose tails have the same eventual periods as the image, and that we can make these choices consistently (determined by a function $G$).

The strong periodic point condition is an obvious necessary condition for having a right inverse, as the right inverse must consistently pick preimages for periodic points, and since it is a CA, eventually it only sees the periodic pattern when determining the preimage, and begins writing a periodic preimage. Let us show the XOR CA with neighborhood $\{0,1\}$ is not regular using this method -- this is clear from the fact it is surjective and noninjective, or from the fact there are $1$-periodic points with no preimage of period $1$, but it also neatly illustrates the strong periodic point method.

\begin{example}
\label{ex:XOR}
The CA $f : \{0,1\}^\ZZ \to \{0,1\}^\ZZ$ defined by
\[ f(x)_i = 1 \iff x_i + x_{i+1} \equiv 1 \bmod 2 \]
is not regular. To see this, consider the strong $p$-periodic point condition for $p = 1$. Since $f(0^\ZZ) = f(1^\ZZ) = 0^\ZZ$, the point $0^\ZZ$ has two preimages, and we must have either $G(0) = 0$ or $G(0) = 1$. It is enough to show that neither choice of $a = G(0)$ %$z \in \{0^\ZZ, 1^\ZZ\}$
is consistent, i.e. there is a point $y$ which is in the image of $f$ such that $y$ has no preimage that is left and right asymptotic to $a^\ZZ$. This is shown by considering the point
\[ y = ...0000001000000... \]
(which is in the image of $f$ since $f$ is surjective). It has two preimages, and the one left-asymptotic to $a^\ZZ$ is right-asymptotic to $(1-a)^\ZZ$. \qee % to the same point $z \in \{0^\ZZ, 1^\ZZ\}$ -- if a preimage is left-asymptotic to $a^\ZZ$, it is right-asymptotic to $(1-a)^\ZZ$. \qee
\end{example}

In \cite[Theorem~1]{SaTo15a}, it is shown that the strong periodic point condition actually characterizes split epicness, in the case when $X$ is an SFT and $Y$ is a sofic shift.

\begin{theorem}
\label{thm:SplitEpicSolved}
Given two sofic shifts $X \subset S^\ZZ$ and $Y \subset R^\ZZ$ and a morphism $f : X \to Y$, it is decidable whether $f$ is split epic. If $X$ is an SFT, split epicness is equivalent to the strong periodic point condition.
\end{theorem}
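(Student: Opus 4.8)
The plan is to prove the two assertions of Theorem~\ref{thm:SplitEpicSolved} in sequence, tackling the exact characterization first (for SFT domain) and then bootstrapping decidability from it. For the characterization, the easy direction is that split epicness implies the strong periodic point condition: given a right inverse $g : Y \to X$ realized by a block map of some radius $r$, I would define $G(u)$ by letting $g$ act on the periodic point $u^\ZZ$ and reading off the periodic block it produces, noting that $|G(u)| = |u|$ can be arranged since $g(u^\ZZ)$ is $|u|$-periodic (after possibly passing to a power, but a length-preserving $G$ can be extracted by taking the appropriate aligned window). The key point is that for any eventually periodic $\INF u.wv\INF \in Y$, its image $g(\INF u.wv\INF)$ is a valid preimage, and because $g$ has finite radius, far to the left and right it sees only the periodic tails $u^\ZZ$, $v^\ZZ$ and hence outputs exactly the periodic patterns $G(u)^\ZZ$, $G(v)^\ZZ$; this forces the claimed form $\INF G(u)w'.w''w'''G(v)\INF$ with the divisibility conditions.

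The substantial direction is the converse: assuming $X$ is an SFT and the strong periodic point condition holds, construct an actual right inverse $g$. This is where I expect the main obstacle to lie. The idea is that the function $G$ prescribes the behavior of $g$ on periodic tails, and I must glue these local prescriptions into a single globally consistent, shift-commuting, continuous map defined on all of $Y$, landing inside the SFT $X$. I would approach this by a marker/compactness construction: for an arbitrary point $y \in Y$, approximate it by eventually periodic points $\INF u.w v\INF$ (which are dense among points with periodic tails), use the strong periodic point condition to pick preimages of the prescribed form, and take a limit. The delicate part is ensuring the preimages can be chosen \emph{consistently} across overlapping windows so that the limit is well-defined and shift-commuting — this is exactly what the ``consistency via a single function $G$'' in the definition is meant to provide, but turning pointwise preimage existence into a genuine continuous section requires the SFT property of $X$ (to freely concatenate valid $X$-blocks across transition regions using mixing-type gluing) and Boyle's extension lemma or a direct marker argument.

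For decidability, the plan is to reduce to checking the strong $p$-periodic point condition for only finitely many $p$. The SFT case is cleanest: I would argue that the strong periodic point condition is equivalent to its restriction to $p$ up to an effectively computable bound, because the set of periodic words $\mathcal{P}_p(Y)$ and the admissible transition words $w$ are governed by finite automata (the de Bruijn/sofic presentations of $X$ and $Y$ and of the morphism $f$), so the existence of a suitable length-preserving $G$ together with the required preimages becomes a statement expressible and checkable over finitely many automaton states. The general sofic (non-SFT) case of decidability does not go through the periodic point condition directly, so I would invoke the full decision procedure of \cite{SaTo15a}, treating the periodic point characterization as the special SFT ingredient; the honest statement is that the first sentence is the general decidability result (cited) and the second sentence is the sharper equivalence that holds under the SFT hypothesis.

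Overall, the hard step is the converse construction of the section from $G$: verifying that local, periodic-tail data assembles into a globally defined block map into the SFT $X$. I would lean on the SFT structure of $X$ and an extension/gluing lemma in the spirit of Boyle's lemma stated earlier, and I expect the detailed bookkeeping of transition regions (the words $w', w'', w'''$ and their length divisibility constraints) to be the most technical part.
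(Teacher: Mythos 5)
First, note what the paper actually does with this statement: it does not prove it. Theorem~\ref{thm:SplitEpicSolved} is recalled verbatim from \cite[Theorem~1]{SaTo15a}; the only surrounding discussion in the present paper is the informal observation that the strong periodic point condition is an \emph{obviously necessary} condition for split epicness, plus remarks on how the decision procedure behaves in practice. So your proposal has to be measured against the proof in the cited reference, and against the question of whether your sketch could stand alone as a proof. It cannot.

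Your forward direction (split epic $\Rightarrow$ strong periodic point condition) is correct and routine: $g(u^\ZZ)$ is fixed by $\sigma^{\vl u\vl}$, so one may set $G(u) = g(u^\ZZ)\vl_{[0,\vl u\vl-1]}$, and the finite radius of $g$ forces the asymptotic form of $g(\INF u.wv\INF)$. The genuine gap is the converse, which is the entire content of the cited theorem. Your plan --- approximate $y \in Y$ by eventually periodic points, choose preimages of the prescribed form, and pass to a limit --- does not produce a section. The eventually periodic points are dense in the sofic shift $Y$, so they do not form a closed proper subsystem to which Boyle's extension lemma could be applied (that lemma extends a map already defined and continuous on a \emph{subshift}); and a limit of chosen preimages along an approximating sequence is not well defined: different approximating sequences, or different admissible choices of the transition words $w', w'', w'''$, yield different limits, and nothing in your sketch enforces a uniform radius, continuity, or shift-equivariance of the resulting assignment. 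The function $G$ constrains only the periodic tails; it provides no coherence whatsoever between the transition regions of overlapping windows, and manufacturing that coherence (equivalently, extracting a bounded-radius candidate inverse) is precisely the hard work done in \cite{SaTo15a}. Flagging this as ``the most technical part'' and gesturing at ``a marker argument'' leaves the theorem unproved. The decidability half is likewise incomplete in both branches: for SFT $X$ you assert, without argument, that the strong $p$-periodic point condition up to an effectively computable bound on $p$ suffices, whereas the route the paper attributes to \cite{SaTo15a} is different (bound the radius of a minimal right inverse, then search over the finitely many block maps of that radius); and for general sofic $X$ you simply invoke the decision procedure of \cite{SaTo15a}, which is circular as a proof of that procedure's correctness --- though, to be fair, citing it is exactly how the present paper itself treats the statement.
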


We note that Definition~\ref{def:SPP} is equivalent to a variant of it where $G$ is only defined on Lyndon words \cite{Lo02}, i.e. lexicographically minimal representative words of periodic orbits: if $G$ is defined on those, it can be extended to all of $\mathcal{P}_p$ in an obvious way, and the condition being satisfied by minimal representatives implies it for all eventually periodic points.

\begin{remark}
It is observed in \cite[Theorem~1]{CaGa20} that if $f : X \to Y$ is split epic, then every periodic point in $Y$ must have a preimage of the same period in $X$ -- this is a special case of the above, and could thus be called the \emph{weak periodic point condition}. In \cite[Example~5]{SaTo15a}, an example is given of morphism between mixing SFTs which satisfies the weak periodic point condition but not the strong one. We will see later that ECA 9 and ECA 28 are non-regular, but satisfy the weak periodic point condition. In \cite[Theorem~4]{CaGa20}, for full shifts on finite groups, the weak periodic point condition is shown to be equivalent to split epicness (when CA are considered to be morphisms onto their image). In the context of CA on $\ZZ^2$, there is no useful strong periodic point condition in the sense that split epicness is undecidable, see Corollary~\ref{cor:2D}. %\qee
\end{remark}

In the proof of Theorem~\ref{thm:SplitEpicSolved} in \cite{SaTo15a}, decidability is obtained from giving a bound on the radius of a minimal inverse, and a very large one is given, as we were only interested in the theoretical decidability result. The method is, however, quite reasonable in practise:
\begin{itemize}
\item To semidecide non-(split epicness), look at periodic points one by one, and try out different possible choices for their preimages. Check by automata-theoretic methods (or ``by inspection'') which of these are consistent in the sense of Definition~\ref{def:SPP}.
\item To semidecide split epicness, invent a right inverse -- note that here we can use the other semialgorithm (running in parallel) as a tool, as it tells us more and more information about how the right inverse must behave on periodic points, which tells us more and more values of the local rule.
\end{itemize}
One of these is guaranteed to finish eventually by \cite{SaTo15a}. For the regular elementary cellular automata, a simpler method sufficed, we essentially used only the weak periodic point condition combined with brute force search, see Section~\ref{sec:Searches}.

We finish this section with two more conditions for regularity. Proposition~\ref{prop:soficnotsplit} below is a slight generalization of \cite[Proposition~1]{SaTo15a}. We give a proof here, as the proof in \cite{SaTo15a} unnecessarily applies a more difficult result of S. Taati (and thus needs the additional assumption of ``mixing''). %This proposition will allow us to obtain non-regularity of all of the non-regular ECA considered in this paper, apart from ECA 9 and 28, though we do also provide a strong periodic point condition argument for all the non-regular ECA.

\begin{lemma}
If $X$ is an SFT and $f : X \to X$ is idempotent, i.e. $f^2 = f$, then $f(X)$ is an SFT.
\end{lemma}

\begin{proof}
Clearly $x \in f(X) \iff f(x) = x$, which is an SFT condition. Namely, if the radius of $f$ is $r$, the forbidden patterns are the words of length $2r+1$ where the local rule changes the value of the current cell. 
\end{proof}

\begin{proposition}
\label{prop:soficnotsplit}
If $X$ is an SFT and $f : X \to X$ is regular, then $f(X)$ is of finite type.
\end{proposition}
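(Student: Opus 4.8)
The plan is to reduce everything to the idempotent case already handled by the lemma immediately preceding the statement. Since $f$ is regular, there is a generalized inverse $h \in \End(X)$ satisfying $fhf = f$ and $hfh = h$; only the first relation will actually be needed (and by the first lemma of the section, even a mere weak inverse would supply it). The key move is to set $g = fh$ and observe that $g$ is idempotent: computing $g^2 = fhfh = (fhf)h = fh = g$, where the middle equality is exactly $fhf = f$. So $g$ is an idempotent CA on $X$.

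Next I would verify that $g$ has the same image as $f$. The inclusion $g(X) = f(h(X)) \subseteq f(X)$ is immediate from $g = fh$. For the reverse inclusion, take any $y \in f(X)$, write $y = f(x)$, and compute $g(y) = fhf(x) = f(x) = y$; thus every point of $f(X)$ is a fixed point of $g$ and in particular lies in $g(X)$. Hence $f(X) = g(X)$.

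Finally, since $g : X \to X$ is an idempotent CA and $X$ is an SFT, the previous lemma yields that $g(X)$ is an SFT, and therefore so is $f(X) = g(X)$, as desired. I do not expect any genuine obstacle here: the entire content is the observation that regularity manufactures the idempotent $fh$ with image equal to that of $f$. The only point requiring a modicum of care is keeping track of which algebraic relation is used where, namely that $fhf = f$ (rather than $hfh = h$) is what drives both the idempotency computation and the image equality.
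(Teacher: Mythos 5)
Your proof is correct, and it takes a cleaner route than the paper at one key point. Both proofs share the same core strategy: use regularity to manufacture an idempotent CA on $X$ and then invoke the preceding lemma that the image of an idempotent CA on an SFT is an SFT. The difference lies in which idempotent is chosen. The paper takes a weak inverse $g$ and uses $g \circ f$, whose image is $g(f(X))$ rather than $f(X)$; it must then argue that the restrictions $g\vl_{f(X)} : f(X) \to g(f(X))$ and $f\vl_{g(f(X))} : g(f(X)) \to f(X)$ are mutually inverse conjugacies, and appeal to the (standard but nontrivial) fact that being an SFT is invariant under conjugacy, to transfer SFT-ness back to $f(X)$. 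You instead take the idempotent $f \circ h$, whose image you show equals $f(X)$ on the nose: the inclusion $fh(X) \subseteq f(X)$ is trivial, and every $y = f(x) \in f(X)$ satisfies $fh(y) = (fhf)(x) = f(x) = y$, so $f(X) \subseteq \mathrm{Fix}(fh) \subseteq fh(X)$. This choice eliminates the conjugacy step and the citation of conjugacy-invariance entirely, making the argument self-contained modulo the idempotent lemma. Your bookkeeping is also accurate: only the relation $fhf = f$ is used, so a weak inverse suffices, exactly as in the paper.
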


\begin{proof}
Let $g : X \to X$ be a weak inverse. Then $g \circ f : X \to X$ is idempotent, so $g(f(X))$ is an SFT. Note that the domain-codomain restriction $g\vl_{f(X), g(f(X))} : f(X) \to g(f(X))$ is a \emph{conjugacy}, meaning shift-commuting homeomorphism, between $f(X)$ and $g(f(X))$: its two-sided inverse is $f\vl_{g(f(X)), f(X)} : g(f(X)) \to f(X)$ by a direct computation. Thus $f(X)$ is also an SFT, since being an SFT is preserved under conjugacy \cite[Theorem~2.1.10]{LiMa95}.
\end{proof}

We also mention another condition, Proposition~\ref{prop:SurjectiveNotInjective} below, although it is not applicable to the ECA we consider.

\begin{lemma}
\label{lem:InjImpliesSurj}
Let $X$ be a subshift with dense periodic points and $f : X \to X$ a CA. If $f$ is injective, it is surjective.
\end{lemma}

\begin{proof}
The set $X_p = \{x \in X \;\vl\; \sigma^p(x) = x\}$ satisfies $f(X_p) \subset X_p$. Since $f$ is injective and $X_p$ is finite, we must have $f(X_p) = X_p$. Thus $f(X)$ is a closed set containing the periodic points. If periodic points are dense, $f(X) = X$. 
\end{proof}

We are interested mainly in mixing SFTs, where periodic points are easily seen to be dense. We remark in passing that in the case of mixing SFTs, the previous lemma can also be proved with an entropy argument: An injective CA cannot have a \emph{diamond}\footnote{This means a pair of distinct words whose long prefixes and suffixes agree, and which the local rule maps the same way, see \cite{LiMa95}.} when seen as a block map, so \cite[Theorem~8.1.16]{LiMa95} shows that the entropy of the image $f(X)$ of an injective CA is equal to the entropy of $X$. By \cite[Corollary~4.4.9]{LiMa95}, a mixing SFT $X$ is \emph{entropy minimal}, that is, it has no proper subshifts of the same entropy, and it follows that $f(X) = X$.

\begin{proposition}
\label{prop:SurjectiveNotInjective}
Let $X$ be a mixing SFT and $f : X \to X$ a surjective CA. Then $f$ is injective if and only if it is regular.
\end{proposition}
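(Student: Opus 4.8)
The plan is to prove both directions by reducing regularity to the existence of a genuine two-sided inverse, exploiting surjectivity to upgrade a weak inverse into a right inverse and then invoking Lemma~\ref{lem:InjImpliesSurj}.

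For the forward direction (injective implies regular), I would observe that an injective CA which is also surjective is bijective, hence reversible: by the compactness argument recalled in the preliminaries, its set-theoretic inverse $g$ is again a CA $X \to X$. Then $fg = gf = \ID$ immediately gives $fgf = f$ and $gfg = g$, so $g$ is a generalized inverse and $f$ is regular. This direction is essentially immediate.

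For the converse (regular implies injective), the first step is to extract from regularity a weak inverse $g \in \End(X)$ with $fgf = f$. The key trick is that surjectivity turns this into an honest right inverse: for any $y \in X$, write $y = f(x)$ using surjectivity, so that $fg(y) = fg(f(x)) = (fgf)(x) = f(x) = y$, whence $fg = \ID_X$. From $fg = \ID_X$ it follows that $g$ is injective. Now I would apply Lemma~\ref{lem:InjImpliesSurj}, which is available since a mixing SFT has dense periodic points: the injective CA $g$ is therefore surjective, hence bijective, hence reversible with a CA inverse $g^{-1}$. Post-composing $fg = \ID_X$ with $g^{-1}$ gives $f = g^{-1}$, so $f$ is itself bijective and in particular injective.

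I expect the only subtle point to be bookkeeping about where the maps live: one must be sure that the weak inverse provided by regularity is a full cellular automaton $X \to X$ (so that Lemma~\ref{lem:InjImpliesSurj} applies to $g$ as stated), rather than a partial map defined only on $f(X)$; this is exactly what the definition of regularity in $\End(X)$ guarantees. The remaining ingredients, namely denseness of periodic points in a mixing SFT and the equivalence of bijectivity with reversibility, are standard and already recorded in the excerpt.
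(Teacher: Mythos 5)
Your proof is correct and follows essentially the same route as the paper: the forward direction via bijectivity and compactness-based reversibility, and the converse by upgrading the weak inverse $g$ to a right inverse using surjectivity of $f$, deducing injectivity of $g$, and invoking Lemma~\ref{lem:InjImpliesSurj} to conclude $g$ (hence $f$) is bijective. The only difference is that you spell out explicitly the steps the paper leaves implicit (why $g$ is injective and why $f$ inherits bijectivity), which is fine.
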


\begin{proof}
Suppose $f$ is a surjective CA on a mixing SFT. If it is also injective, it is bijective, thus reversible (by compactness of $X$), thus regular. Conversely, let $f$ be surjective and regular, and let $g : X \to X$ be a weak inverse. Then $g$ is injective, so it is surjective by the previous lemma. Thus $f$ must be bijective as well. 
\end{proof}

More generally, the previous proposition works on \emph{surjunctive subshifts} in the sense of \cite[Exercise~3.29]{CeCo10}, i.e.\ subshifts (on groups) where injective cellular automata are surjective. In particular this is the case for full shifts on surjunctive groups \cite{Go73,We00} such as abelian ones. Since injectivity is undecidable for surjective CA on $\ZZ^d$, $d \geq 2$ by \cite{Ka94a}, we obtain the following corollary.

\begin{corollary}
\label{cor:2D}
Regularity is undecidable for two-dimensional cellular automata. In fact, given a surjective CA $f : \Sigma^{\ZZ^2} \to \Sigma^{\ZZ^2}$, it is undecidable whether $f$ is split epic.
\end{corollary}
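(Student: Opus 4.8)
The plan is to reduce the two-dimensional regularity problem to the known undecidability of injectivity for surjective two-dimensional cellular automata, using the machinery already assembled in this section. By Kari's result \cite{Ka94a}, it is undecidable whether a given surjective CA $f : \Sigma^{\ZZ^2} \to \Sigma^{\ZZ^2}$ is injective. So if I can show that, for surjective CA on the full shift $\Sigma^{\ZZ^2}$, regularity (equivalently, split epicness onto the image) coincides with injectivity, then an algorithm deciding regularity would decide injectivity of surjective CA, a contradiction.

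The key conceptual step is that Proposition~\ref{prop:SurjectiveNotInjective} generalizes from mixing SFTs on $\ZZ$ to full shifts on $\ZZ^2$. Reading the proof of that proposition, the only facts about $X$ that it uses are: (i) injectivity implies bijectivity implies reversibility (by compactness, which holds on any full shift), and (ii) Lemma~\ref{lem:InjImpliesSurj}, that an injective CA on $X$ is surjective. The remark following Lemma~\ref{lem:InjImpliesSurj} already points to the right framework: the relevant property is \emph{surjunctivity}, and $\Sigma^{\ZZ^2}$ is a full shift on the abelian (hence surjunctive) group $\ZZ^2$ by \cite{Go73,We00}. So first I would restate the surjunctive version of the proposition: if $X$ is a surjunctive full shift and $f : X \to X$ is a surjective CA, then $f$ is injective if and only if it is regular. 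The forward direction uses compactness to get reversibility; the converse takes a weak inverse $g$, notes $g$ is injective (since $fgf = f$ and $f$ is surjective forces $g$ to be injective on the relevant set), applies surjunctivity to conclude $g$ is surjective, and hence $f$ is bijective.

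One subtlety I should check is that the equivalence between regularity and split epicness (onto the image) survives in this two-dimensional setting, since Theorem~\ref{thm:MixingSFTEquivalence} was stated for mixing SFTs on $\ZZ$ and used Boyle's extension lemma, which is one-dimensional. However, for the corollary I do not actually need the full strength of that equivalence: Theorem~\ref{thm:Equivalence} is stated for an arbitrary subshift $X$ and shows regularity is equivalent to having a right inverse $g : f(X) \to X$ that extends to a morphism $h : X \to X$. When $f$ is injective and surjective, so that $f$ is reversible, the two-sided inverse serves as such an $h$ directly, and when $f$ is regular a weak inverse provides the extending morphism. So the statement ``$f$ is split epic'' in the corollary can be read through Theorem~\ref{thm:Equivalence} without invoking the mixing-SFT extension lemma, and regularity and split-epicness-onto-image agree for the surjective CA under consideration.

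Putting the pieces together: given a surjective $f : \Sigma^{\ZZ^2} \to \Sigma^{\ZZ^2}$, it is regular (equivalently split epic onto its image, which is all of $\Sigma^{\ZZ^2}$ by surjectivity) if and only if it is injective, by the surjunctive version of Proposition~\ref{prop:SurjectiveNotInjective}. Since injectivity of such $f$ is undecidable by \cite{Ka94a}, regularity and split epicness are undecidable as well, giving both sentences of the corollary. The main obstacle I anticipate is purely bookkeeping: making sure surjunctivity of $\ZZ^2$ is invoked correctly and that the ``if and only if'' in the generalized proposition is clean, since the hard arithmetic content is entirely outsourced to Kari's theorem and Gottschalk's surjunctivity of abelian groups.
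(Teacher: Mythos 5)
Your proposal is correct and follows essentially the same route as the paper: the paper also obtains the corollary by observing that Proposition~\ref{prop:SurjectiveNotInjective} generalizes to surjunctive subshifts (in particular full shifts on the abelian, hence surjunctive, group $\ZZ^2$) and then invoking Kari's undecidability of injectivity for surjective CA on $\ZZ^d$, $d \geq 2$. Your extra care in checking that regularity and split epicness onto the image coincide here via Theorem~\ref{thm:Equivalence} (trivially, since surjectivity gives $f(X) = X$, so no extension lemma is needed) is a detail the paper leaves implicit, but it is the same argument.
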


\section{Von Neumann non-regularity of elementary CA}
\label{sec:NonRegularityOfECA}

In \cite{CaGa20}, regularity was resolved for all ECA where the weak periodic point condition failed for period at most three, or there was a weak inverse directly among the ECA. The remaining cases up to equivalence are $6,7,9,23,27,28,33,41,57,58,77$. In this section, we prove the non-regular cases.

\begin{theorem}
The elementary CA with numbers 9, 27, 28, 41 and 58 are not regular.
\end{theorem}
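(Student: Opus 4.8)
The plan is to prove non-regularity of each of the five ECA by exhibiting an obstruction to the strong periodic point condition (Definition~\ref{def:SPP}), which by Theorem~\ref{thm:SplitEpicSolved} and Theorem~\ref{thm:MixingSFTEquivalence} is equivalent to regularity on the full shift. For ECA $27$, $41$ and $58$, the excerpt's introduction announces a shortcut: their images are proper sofic, so by Proposition~\ref{prop:soficnotsplit} (whose contrapositive says a non-SFT image forces non-regularity) they cannot be regular. Thus for those three the only real work is to compute the image subshift and verify it is sofic but not of finite type. I would do this by writing down the de Bruijn automaton for each local rule, determinizing and minimizing to read off the image language, and then exhibiting a forbidden-pattern obstruction of unbounded length (a word that is locally admissible at every scale but globally excluded) to certify the image is proper sofic rather than SFT.

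For ECA $9$ and $28$ the image is an SFT, so Proposition~\ref{prop:soficnotsplit} gives nothing and the weak periodic point condition also holds (as the introduction stresses); here I must genuinely run the strong periodic point argument at some finite period $p$. The strategy mirrors Example~\ref{ex:XOR}: fix a short period (likely $p=1$ or $p=2$), enumerate the periodic points $u^{\ZZ}$ in $\mathcal{P}_p$ of the image, and argue that \emph{no} length-preserving choice function $G$ can consistently pick periodic preimages. Concretely, I would find a single image point of the form $\INF u .w v \INF$ with $u,v \in \mathcal{P}_p$ whose every $f$-preimage has left tail asymptotic to one periodic word and right tail forced into an incompatible one, so that the two boundary conditions $G(u)$ and $G(v)$ demanded by Definition~\ref{def:SPP} cannot be met simultaneously. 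This is exactly the kind of ``the preimage of a defect forces a phase change'' phenomenon the XOR example exhibits, and the task is to locate the right defect word $w$ for each of ECA $9$ and $28$.

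The main obstacle is the $9$/$28$ case: because these satisfy the weak periodic point condition, any finite periodic point does have a same-period preimage, so the contradiction must come from the \emph{consistency} requirement across different periodic points rather than from outright absence of preimages. I expect to have to examine the de Bruijn / preimage structure carefully to see how the local rule propagates periodic tails through a finite perturbation, identifying a point where the choice of preimage on the left periodic region uniquely determines (and mismatches) the required preimage on the right. Getting the bookkeeping right — tracking which periodic blocks each node of the de Bruijn graph can emit, and which transitions connect them — is where the real effort lies; the proper-sofic cases for $27,41,58$ should be comparatively routine once the image languages are computed.

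Once each obstruction is in hand, the write-up assembles them: apply Proposition~\ref{prop:soficnotsplit} for $27$, $41$, $58$, and the explicit strong-periodic-point contradictions for $9$ and $28$, concluding non-regularity in all five cases via Theorem~\ref{thm:MixingSFTEquivalence} and Theorem~\ref{thm:SplitEpicSolved}.
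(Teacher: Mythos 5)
Your case split and toolkit are sound, and the difference from the paper is mostly organizational. For ECA $27$, $41$ and $58$ you lean entirely on Proposition~\ref{prop:soficnotsplit}; the paper explicitly endorses this shortcut (``Proposition~\ref{prop:soficnotsplit} directly shows that the CA can not be regular'') and substantiates it in Section~\ref{sec:SFTImages}, where proper soficness is certified via synchronizing words for the de Bruijn automaton (e.g.\ for ECA $27$, the word $0100$ is synchronizing, $\INF 0110(00)^n1\INF$ is in the image but $\INF 01100(00)^n1\INF$ is not). However, the paper's actual proofs of the theorem give direct strong-periodic-point arguments for all five rules, keeping the image computations as an independent check; and for ECA $41$ it records a strictly cheaper obstruction your plan misses: the weak periodic point condition itself fails, since $(010)^\ZZ = f((000101)^\ZZ)$ has a $6$-periodic preimage but no $3$-periodic one. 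Your route buys uniformity (one decidable image computation plus one proposition handles three cases); the paper's buys a uniform illustration of the SPP machinery across all five rules.

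For ECA $9$ and $28$ your plan coincides with the paper's proofs, but note that it is a plan: the entire mathematical content of those lemmas is the exhibition of concrete defect words, which you defer to a future search. For ECA $28$ the paper's witnesses have exactly the XOR-style shape you anticipate: $f(0^\ZZ)=f(1^\ZZ)=0^\ZZ$, the choice $g(0^\ZZ)=0^\ZZ$ is killed by the point $\INF 0.10\INF = f(\INF 1.0\INF)$, and the choice $g(0^\ZZ)=1^\ZZ$ by $\INF 011.0\INF = f(\INF 01.0\INF)$. For ECA $9$, though, the obstruction is not the left/right ``phase mismatch'' you predict: here $G(0)=1$ is forced (since $f(0^\ZZ)=1^\ZZ$ and $f(1^\ZZ)=0^\ZZ$), and the point $x=\INF 011.0\INF$ in the image has \emph{no} preimage whose right tail consists of $1$s at all --- its preimages, such as $\INF(10)101000.01(01)\INF$, have period-two tails. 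So the failure is outright nonexistence of a preimage with the forced asymptotics, not a choice between two incompatible tails. This still falls under your general framing (no consistent $G$ exists at $p=1$), but the search target is ``no preimage asymptotic to $G(0)^\ZZ$'' rather than ``two tails forced into different phases'', and your bookkeeping for ECA $9$ would need to track period-$2$ tails of preimages, exactly as in the paper's Lemma~\ref{lem:9}.
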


\begin{proof}
See the lemmas below. 
\end{proof}

For the reader's convenience, we include the de Bruijn graphs in Figure~\ref{fig:DeBruijnGraphs}.

\begin{figure}
\centering
 \begin{subfigure}[b]{0.3\textwidth}
         \centering
        \begin{tikzpicture}[scale=1.8]
\node[draw,circle,inner sep=2] (zz) at (0,0) {$00$};
\node[draw,circle,inner sep=2] (zo) at (1,0) {$01$};
\node[draw,circle,inner sep=2] (oz) at (0,-1) {$10$};
\node[draw,circle,inner sep=2] (oo) at (1,-1) {$11$};
\draw (zz) edge[loop left] node {$1$} (zz);
\draw (oo) edge[loop right] node {$0$} (oo);
\draw (zz) edge[->] node[above] {$0$} (zo);
\draw (oo) edge[->] node[below] {$0$} (oz);
\draw (oz) edge[->] node[left] {$0$} (zz);
\draw (zo) edge[->] node[right] {$1$} (oo);
\draw (zo) edge[->,bend left=10] node[below right] {$0$} (oz);
\draw (oz) edge[->,bend left=10] node[above left] {$0$} (zo);
\end{tikzpicture}
         \caption{ECA 9.}
         \label{fig:ECA9}
     \end{subfigure}
     \hfill
 \begin{subfigure}[b]{0.3\textwidth}
         \centering
        \begin{tikzpicture}[scale=1.8]
\node[draw,circle,inner sep=2] (zz) at (0,0) {$00$};
\node[draw,circle,inner sep=2] (zo) at (1,0) {$01$};
\node[draw,circle,inner sep=2] (oz) at (0,-1) {$10$};
\node[draw,circle,inner sep=2] (oo) at (1,-1) {$11$};
\draw (zz) edge[loop left] node {$1$} (zz);
\draw (oo) edge[loop right] node {$0$} (oo);
\draw (zz) edge[->] node[above] {$1$} (zo);
\draw (oo) edge[->] node[below] {$0$} (oz);
\draw (oz) edge[->] node[left] {$1$} (zz);
\draw (zo) edge[->] node[right] {$1$} (oo);
\draw (zo) edge[->,bend left=10] node[below right] {$0$} (oz);
\draw (oz) edge[->,bend left=10] node[above left] {$0$} (zo);
\end{tikzpicture}
         \caption{ECA 27.} % 000 001 011 100
         \label{fig:ECA27}
     \end{subfigure}
     \hfill
 \begin{subfigure}[b]{0.3\textwidth}
         \centering
        \begin{tikzpicture}[scale=1.8]
\node[draw,circle,inner sep=2] (zz) at (0,0) {$00$};
\node[draw,circle,inner sep=2] (zo) at (1,0) {$01$};
\node[draw,circle,inner sep=2] (oz) at (0,-1) {$10$};
\node[draw,circle,inner sep=2] (oo) at (1,-1) {$11$};
\draw (zz) edge[loop left] node {$0$} (zz);
\draw (oo) edge[loop right] node {$0$} (oo);
\draw (zz) edge[->] node[above] {$0$} (zo);
\draw (oo) edge[->] node[below] {$0$} (oz);
\draw (oz) edge[->] node[left] {$1$} (zz);
\draw (zo) edge[->] node[right] {$1$} (oo);
\draw (zo) edge[->,bend left=10] node[below right] {$1$} (oz);
\draw (oz) edge[->,bend left=10] node[above left] {$0$} (zo);
\end{tikzpicture}
         \caption{ECA 28.} % \{010, 011, 100\}
         \label{fig:ECA28}
     \end{subfigure}
     \newline
\hspace*{1.5cm}
 \begin{subfigure}[b]{0.3\textwidth}
         \centering
        \begin{tikzpicture}[scale=1.8]
\node[draw,circle,inner sep=2] (zz) at (0,0) {$00$};
\node[draw,circle,inner sep=2] (zo) at (1,0) {$01$};
\node[draw,circle,inner sep=2] (oz) at (0,-1) {$10$};
\node[draw,circle,inner sep=2] (oo) at (1,-1) {$11$};
\draw (zz) edge[loop left] node {$1$} (zz);
\draw (oo) edge[loop right] node {$0$} (oo);
\draw (zz) edge[->] node[above] {$0$} (zo);
\draw (oo) edge[->] node[below] {$0$} (oz);
\draw (oz) edge[->] node[left] {$0$} (zz);
\draw (zo) edge[->] node[right] {$1$} (oo);
\draw (zo) edge[->,bend left=10] node[below right] {$0$} (oz);
\draw (oz) edge[->,bend left=10] node[above left] {$1$} (zo);
\end{tikzpicture}
         \caption{ECA 41.} % \{000, 011, 101\}
         \label{fig:ECA41}
     \end{subfigure}
     \hfill
 \begin{subfigure}[b]{0.3\textwidth}
 \centering
        \begin{tikzpicture}[scale=1.8]
\node[draw,circle,inner sep=2] (zz) at (0,0) {$00$};
\node[draw,circle,inner sep=2] (zo) at (1,0) {$01$};
\node[draw,circle,inner sep=2] (oz) at (0,-1) {$10$};
\node[draw,circle,inner sep=2] (oo) at (1,-1) {$11$};
\draw (zz) edge[loop left] node {$0$} (zz);
\draw (oo) edge[loop right] node {$0$} (oo);
\draw (zz) edge[->] node[above] {$1$} (zo);
\draw (oo) edge[->] node[below] {$0$} (oz);
\draw (oz) edge[->] node[left] {$1$} (zz);
\draw (zo) edge[->] node[right] {$1$} (oo);
\draw (zo) edge[->,bend left=10] node[below right] {$0$} (oz);
\draw (oz) edge[->,bend left=10] node[above left] {$1$} (zo);
\end{tikzpicture}
         \caption{ECA 58.}
         \label{fig:ECA58} %\{001, 011, 100, 101\}
     \end{subfigure}
\hspace*{1.5cm}
\caption{De Bruijn graphs of the non-regular ECA.}
\label{fig:DeBruijnGraphs}
\end{figure}
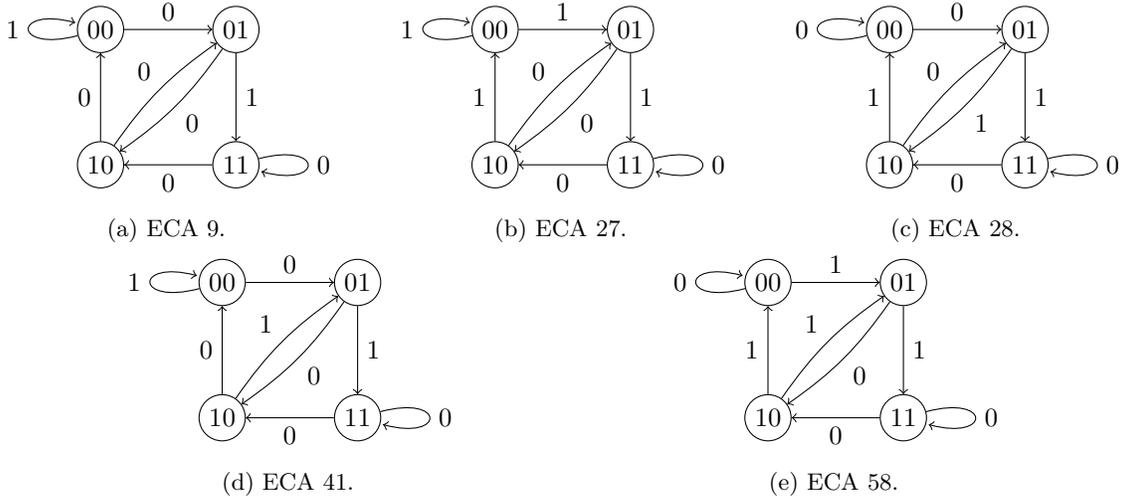

\begin{lemma}
\label{lem:9}
The elementary CA 9 is not regular.
\end{lemma}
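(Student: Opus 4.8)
The plan is to reduce to the strong periodic point condition and then defeat it with a single point. By Theorem~\ref{thm:MixingSFTEquivalence} the full shift $\{0,1\}^\ZZ$ is a mixing SFT, so ECA $9$ is regular iff the codomain restriction $f : X \to f(X)$ is split epic; and since $X$ is an SFT, Theorem~\ref{thm:SplitEpicSolved} says this holds iff $f$ satisfies the strong periodic point condition. Hence it suffices to exhibit one eventually periodic point that already violates the strong $1$-periodic point condition.

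First I would put the local rule in a usable form. From $9 = 2^3 + 2^0$ one reads off that $f(abc) = 1 \iff abc \in \{000,011\}$, equivalently
\[ f(x)_i = 1 \iff x_{i-1} = 0 \ \text{and}\ x_i = x_{i+1}. \]
Both constant points lie in the image, since $f(0^\ZZ) = 1^\ZZ$ and $f(1^\ZZ) = 0^\ZZ$, so $\mathcal{P}_1(f(X)) = \{0,1\}$. Any admissible $G$ must satisfy $f(G(0)^\ZZ) = 0^\ZZ$ (apply $f$ to the left tail of the prescribed preimage), and the only period-$1$ point mapping to $0^\ZZ$ is $1^\ZZ$; thus $G(0) = 1$ is forced. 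Consequently it is enough to produce a point $\INF 0 . w 0 \INF \in f(X)$ having no $f$-preimage that is asymptotic to $1^\ZZ$ on both sides.

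The witness I would use is $y = \INF 0 . 110 \INF$, with $1$'s exactly at positions $0,1$. It lies in the image: the configuration equal to $1$ on positions ${\le} -2$, equal to $0000$ on $[-1,2]$, and equal to the alternating tail $(10)^\ZZ$ from position $3$ onward maps to $y$ (a direct check of the handful of relevant windows using the rule above). Note this preimage is right-asymptotic to the period-$2$ point $(01)^\ZZ$, not to $1^\ZZ$. The crux is then to show that \emph{no} preimage of $y$ is right-asymptotic to $1^\ZZ$: from $y_0 = y_1 = 1$ the rule forces $x_{-1}=x_0=x_1=x_2 = 0$, so every preimage has a $0$ at position $2$. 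If some preimage $x$ were right-asymptotic to $1^\ZZ$ it would have a rightmost $0$, at some position $k \ge 2$; but then $x_{k+1} = x_{k+2} = 1$, so $f(x)_{k+1} = f(011) = 1$ is a stray $1$ at position $k+1 \ge 3$, contradicting that $y$ has $1$'s only at $0,1$. Therefore $y$ has no preimage asymptotic to $1^\ZZ$ on both sides, and with $G(0)=1$ forced the strong $1$-periodic point condition fails; by the reduction above, ECA $9$ is not regular.

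The only real obstacle is isolating the right witness and seeing this last step: one must notice that the ``$011 \mapsto 1$'' clause turns the rightmost $0$ of any would-be $1^\ZZ$-asymptotic preimage into an unwanted $1$, which is exactly what prevents the preimage from closing back up to $1^\ZZ$ while keeping the image equal to an isolated ``$11$''. Everything else is bookkeeping with the explicit local rule.
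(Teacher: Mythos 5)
Your proof is correct and follows essentially the same route as the paper's: you force $G(0)=1$ from $f(1^\ZZ)=0^\ZZ$, use the same witness point (a single block $11$ in a sea of $0$'s, just shifted), and derive the contradiction by observing that the rightmost $0$ of any preimage right-asymptotic to $1^\ZZ$ produces a stray $1$ in the image via the clause $011 \mapsto 1$. The only cosmetic differences are that you phrase the reduction explicitly through the strong $1$-periodic point condition (the paper works directly with a right inverse $g$, then remarks afterwards that this is the SPP for $p=1$) and that you pin down the preimage of $11$ as $0000$ before locating the rightmost zero, whereas the paper instead pins the rightmost zero's position to $n=-1$ and contradicts at the coordinate to its left.
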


\begin{proof}
Let $f$ be the ECA $9$, i.e.\ $f(x)_i = 1 \iff x_{[i-1,i+1]} \in \{000, 011\}$. Let $X$ be the image of $f$.

% 0100100001001001 % here's one
% 0000001100000000

We have $f(0^\ZZ) = 1^\ZZ$ and $f(1^\ZZ) = 0^\ZZ$, so if $g : X \to \{0,1\}^\ZZ$ is a right inverse for $f$, then $g(0^\ZZ) = 1^\ZZ$. Consider now the configuration
\[ x = ...000011.00000... \]
where coordinate $0$ is to the left of the decimal point (i.e. at the rightmost $1$ of the word $11$). 

We have $x \in X$ because $f(...10101000.010101...) = x$. (Alternatively, by Proposition~\ref{prop:ECA9Image}, $X$ is precisely the SFT defined by forbidden words $1011$, $10101$, $11001$, $11000011$ and $110000101$.)

Let $g(x) = y$. Then $y_i = 1$ for all large enough $i$ and $y_i = 0$ for some $i$. Let $n$ be maximal such that $y_n = 0$. Then $y_{[n,n+2]} = 011$ so $f(y)_{n+1} = 1$ and $f(y)_{n+1+i} = 0$ for all $i \geq 1$. Since $f(y) = x$, we must have $n = -1$ and since $\{000, 011\}$ does not contain a word of the form $a01$, it follows that $f(y)_{-1} = 0 \neq x_{-1}$, a contradiction. 
\end{proof}

The proof above shows that the ECA $9$ does not have the strong periodic point property for $p = 1$. In general, for fixed $p$ one can use automata theory to decide whether it holds up to that $p$, though here (and in all other proofs) we found the contradictions by hand before we had to worry about implementing this.

Next, we cover ECA $27$. The image $X$ of $f$ can be shown to be proper sofic, so Proposition~\ref{prop:soficnotsplit} directly shows that the CA can not be regular. Nevertheless, we give a direct proof to illustrate the method.

\begin{lemma}
The ECA $27$ is not regular.
\end{lemma}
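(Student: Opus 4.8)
The plan is to show that the ECA $27$ fails the strong periodic point condition for $p=1$, following the same template as Lemma~\ref{lem:9}. First I would read off the local rule from the de Bruijn graph in Figure~\ref{fig:ECA27}: ECA $27$ has base-$2$ representation giving $f(x)_i = 1 \iff x_{[i-1,i+1]} \in \{000, 001, 011, 100\}$. The first step is to compute the images of the two fixed points: $f(0^\ZZ) = 1^\ZZ$ and $f(1^\ZZ) = 0^\ZZ$, so as in the ECA $9$ case the only $1$-periodic point in the image whose preimage we must track is $0^\ZZ$, and any right inverse $g$ of $f$ onto its image $X$ must satisfy $g(0^\ZZ) = 1^\ZZ$ (since $0^\ZZ$ has $1^\ZZ$ as its unique fully-periodic preimage under $f$, assuming $0^\ZZ$ is itself not in the image, which I would verify).

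Next I would select a specific configuration $x \in X$ that is left- and right-asymptotic to $0^\ZZ$ but whose preimages are forced to break the required asymptotic form. Concretely I would take a point of the shape $x = {}\INF 0 \, w . \, 0 \INF$ for a short finite word $w$ chosen so that $x$ lies in the image of $f$ (exhibited by writing down an explicit preimage, exactly as the ECA $9$ proof does with $f(\INF(10)1000.0(10)\INF) = x$), and then argue that any $g$-preimage $y = g(x)$ cannot be right-asymptotic to $1^\ZZ$ while being left-asymptotic to $1^\ZZ$. The mechanism, as in Lemma~\ref{lem:9}, is to take $n$ maximal (or minimal) with $y_n$ differing from the expected periodic value $1$, read off the forced local window $y_{[n-1,n+1]}$, push it through the local rule, and derive a contradiction with the known value of $x$ at the corresponding coordinate.

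The main obstacle will be choosing the witness word $w$ and verifying membership $x \in X$ together with the combinatorial contradiction: I must find a short pattern whose unique (or both) preimages under $f$ are forced to flip from the $1^\ZZ$ tail on one side to a $0^\ZZ$ tail on the other, so that no preimage is simultaneously left- and right-asymptotic to $1^\ZZ$. This amounts to examining the de Bruijn graph of ECA $27$ and tracing which bi-infinite paths project to $x$; the delicate part is ensuring the chosen $x$ genuinely lies in the image (so the condition is nonvacuous) while admitting no preimage of the correct asymptotic type. Since the remark after the statement tells us the image of ECA $27$ is proper sofic, Proposition~\ref{prop:soficnotsplit} already settles non-regularity abstractly; thus the only real work in this lemma is packaging the hands-on $p=1$ argument cleanly, and I expect the contradiction to fall out of a single short window computation analogous to the $011 \mapsto$ isolated-$1$ argument used for ECA $9$.
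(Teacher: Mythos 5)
Your setup coincides with the paper's: you read off the local rule correctly, the computations $f(0^\ZZ)=1^\ZZ$, $f(1^\ZZ)=0^\ZZ$ do force $g(0^\ZZ)=1^\ZZ$ for any right inverse $g$, and the contradiction is indeed supposed to come from an eventually-constant point of the image admitting no preimage with the tails that the strong $1$-periodic point condition demands. But what you have written is a plan, not a proof: the entire content of this lemma is the explicit witness configuration together with the verification that (a) it lies in the image and (b) no preimage of it has the required asymptotic tails, and this is exactly what you defer as ``the main obstacle.'' The gap is genuine rather than routine, because the choice of witness matters: with your template $x = \INF 0\, w\, 0 \INF$ the most obvious choice $w=1$ \emph{fails}, since $\INF 0 1 0 \INF = f(\INF 1 0 1 \INF)$ has a preimage asymptotic to $1^\ZZ$ on both sides. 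Two further slips: the parenthetical ``assuming $0^\ZZ$ is itself not in the image'' is off --- $0^\ZZ = f(1^\ZZ)$ \emph{is} in the image; what you need (and already computed) is that $0^\ZZ$ is not a fixed point of $f$, so that $1^\ZZ$ is its unique $1$-periodic preimage. And the fallback ``the remark tells us the image is proper sofic, so Proposition~\ref{prop:soficnotsplit} settles it'' is not a blind proof: it cites an assertion you have not established (it is proved independently later in the paper, so that route could in principle be completed, but you do not complete it either).

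For what it is worth, your template can be finished, and then it essentially reproduces the paper's computation. Take $w=11$: then $x = \INF 0\, 11\, 0 \INF = f(\INF 1 \,00\, (10) \INF) \in X$. If $y$ is any preimage of $x$ that is right-asymptotic to $1^\ZZ$, let $n$ be maximal with $y_n = 0$. Then $y_{[n,n+2]} = 011$ gives $f(y)_{n+1}=1$, and $f(y)_{n+1+i}=0$ for all $i \geq 1$ since $111 \mapsto 0$; hence $n+1$ is the position of the last $1$ of $x$. Next, $x_n = 1$ and $y_{[n-1,n+1]} \in \{001,101\}$ force $y_{n-1}=0$, because $101 \mapsto 0$. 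Finally $x_{n-1}=0$ requires $y_{[n-2,n]} \in \{000,100\}$ to map to $0$, but both map to $1$ --- a contradiction, so no preimage of $x$ is right-asymptotic to $1^\ZZ$ and the strong $1$-periodic point condition fails. The paper's own witness, $\ldots 111111011.000\ldots$, is left-asymptotic to $1^\ZZ$ rather than $0^\ZZ$ (with image membership witnessed by the preimage $\ldots 000001100.101010\ldots$), but the decisive local analysis at the terminal pattern $011$ followed by $0$s is identical to the one above; so your approach is salvageable, it is just that the decisive computation is missing from your write-up.
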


% 27 = 16 + 8 + 2 + 1 = 000 001 011 100

\begin{proof}
Let $f$ be the ECA $27$, i.e.\ $f(x)_i = 1 \iff x_{[i-1,i+1]} \in \{000, 001, 011, 100\}$. Let $X$ be the image of $f$.

Again, we will see that this CA does not satisfy the strong periodic point condition for $p = 1$. Observe that $f(1^\ZZ) = 0^\ZZ$ and $f(0^\ZZ) = 1^\ZZ$ so if $g$ is a right inverse from the image to $\{0,1\}^\ZZ$, then $g(0^\ZZ) = 1^\ZZ$ and $g(1^\ZZ) = 0^\ZZ$.

We have
\begin{align*}
f(...&000001100.10101010...) = \\
         ...&111111011.00000000... = x \in X.
\end{align*}
We now reason similarly as in Lemma~\ref{lem:9}. We have $g(x)_i = 1$ for all large enough $i$, and if $n$ is maximal such that $g(x)_n = 0$, then $f(g(x))_{n+1} = 1$ and $f(g(x))_{n+1+i} = 0$ for all $i \geq 1$, so again necessarily $n = -1$. A short combinatorial analysis shows that no continuation to the left from $n$ produces $f(g(x))_{n} = 1$ and $f(g(x))_{n-1} = 0$, that is, the image of $g$ has no possible continuation up to coordinate $-3$. 
\end{proof}

\begin{lemma}
The ECA $28$ is not regular.
\end{lemma}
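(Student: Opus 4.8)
The plan is to show that $f$, the ECA $28$ with local rule $f(x)_i = 1 \iff x_{[i-1,i+1]} \in \{010, 011, 100\}$ (read off from Figure~\ref{fig:ECA28}), fails the strong $1$-periodic point condition; by Theorem~\ref{thm:MixingSFTEquivalence} this is equivalent to non-regularity. Concretely I would argue directly, as in the proof of Lemma~\ref{lem:9} and the ECA $27$ lemma, by contradicting the existence of a right inverse $g : X \to \{0,1\}^\ZZ$ of the codomain restriction $f : \{0,1\}^\ZZ \to X$, where $X = f(\{0,1\}^\ZZ)$. Note that here the two easy methods are unavailable: $X$ is an SFT, so Proposition~\ref{prop:soficnotsplit} does not apply, and every periodic point has a preimage of the same period, so the weak periodic point condition of \cite{CaGa20} is satisfied.

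First I would record the period-$1$ data. Every window of $0^\ZZ$ and of $1^\ZZ$ lies outside $\{010,011,100\}$, so $f(0^\ZZ) = f(1^\ZZ) = 0^\ZZ$, and these are the only $\sigma$-fixed preimages of $0^\ZZ$: any preimage of $0^\ZZ$ avoids $010$ and $011$, hence avoids the factor $01$ altogether, hence is nonincreasing from left to right; if it contained both symbols it would exhibit $100$ at its unique descent, which is also forbidden, so it is constant. Consequently $g(0^\ZZ) = a^\ZZ$ for some $a \in \{0,1\}$, and — in contrast with ECA $9$ and $27$, where the analogous value was forced outright — \emph{both} choices of $a$ must be refuted. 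This case split, together with the fact that no single witness point obviously kills both values (the ``isolated $1$'' defect can flip the preimage phase), is the conceptual crux.

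The two refutations are de Bruijn forcing arguments on the graph of Figure~\ref{fig:ECA28}; I write the states as $00,01,10,11$ and read a preimage of a configuration as a bi-infinite path spelling it. To refute $a = 1$, I use $x_1 = \ldots 0011 00 \ldots \in X$ (witnessed by the preimage $\ldots 0001000\ldots$). The two consecutive output $1$'s force the state between them to be the unique state that is simultaneously a target and a source of label-$1$ edges, namely $10$; the next edge is then $10 \to 00$, and from $00$ an infinite run of output $0$'s is possible only along the $00$ self-loop. Thus every preimage of $x_1$ is right-asymptotic to $0^\ZZ$, never to $1^\ZZ$, so $g(x_1)$ being right-asymptotic to $g(0^\ZZ) = 1^\ZZ$ is impossible. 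To refute $a = 0$, I use $x_2 = \ldots 00100 \ldots \in X$ (witnessed by $\ldots 0001111\ldots$). An infinite run of output $0$'s to the right lives on the $00$ or the $11$ self-loop, so the lone output $1$ must enter one of them; entering $00$ forces that $1$-edge to be $10 \to 00$, whose source $10$ is reached by label-$0$ edges only via $11 \to 10$, which pins the left tail onto the $11$ self-loop. Hence every preimage of $x_2$ that is right-asymptotic to $0^\ZZ$ is left-asymptotic to $1^\ZZ$, so $x_2$ has no preimage asymptotic to $0^\ZZ$ on both sides, contradicting that $g(x_2)$ is asymptotic to $g(0^\ZZ) = 0^\ZZ$ on both sides.

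Both values of $a$ thus yield contradictions, so no right inverse exists and $f$ is not regular. I expect the main obstacle to be the two-sided bookkeeping in the $a=0$ case: unlike ECA $9$, the witness $x_2$ genuinely does admit preimages that are right-asymptotic to $0^\ZZ$ (and separately ones left-asymptotic to $0^\ZZ$), so a one-sided argument cannot succeed, and one must verify that the phases on the two sides cannot simultaneously be $0$. Everything else is a routine traversal of the de Bruijn graph.
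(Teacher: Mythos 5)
Your proof is correct and is essentially the paper's own argument: after noting $f(0^\ZZ)=f(1^\ZZ)=0^\ZZ$, you refute both candidate values of $g(0^\ZZ)$ using exactly the same two witness points as the paper (up to shift) --- the isolated $1$, namely $f({}^\omega 1.0^\omega) = {}^\omega 0.10^\omega$, against $g(0^\ZZ)=0^\ZZ$, and the double $11$, namely $f({}^\omega 0 1.0^\omega) = {}^\omega 0 1.1 0^\omega$, against $g(0^\ZZ)=1^\ZZ$. The only differences are presentational: you treat the cases in the opposite order and spell out the de Bruijn path-forcing that the paper compresses into ``computing the preimage from right to left.''
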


%16 + 8 +4

\begin{proof}
Let $f$ be the ECA $28$, i.e.\ $f(x)_i = 1 \iff x_{[i-1,i+1]} \in \{010, 011, 100\}$. Let $X$ be the image of $f$.

% preimages:
% 111111110000000
% 000000001000000
% and
% 000000010000000
% 000000011000000

We have $f(0^\ZZ) = f(1^\ZZ) = 0^\ZZ$. We have
\begin{align*}
f(...&1111.0000...) = \\
 ...&0000.1000... = x \in X.
\end{align*}
(Alternatively, $x \in X$ by Proposition~\ref{prop:ECA28Image} which states that $X$ is the SFT with the single forbidden word $111$.)

This contradicts the choice $g(0^\ZZ) = 0^\ZZ$ by a similar analysis as in Example~\ref{ex:XOR}: computing the preimage from right to left, the asymptotic type necessarily changes to $1$s. Thus we must have $g(0^\ZZ) = 1^\ZZ$.

On the other hand, if $g(0^\ZZ) = 1^\ZZ$, then going from right to left, we cannot find a consistent preimage for
\begin{align*} f(...&0001.00000...) = \\
 ...&0001.10000... \end{align*}
(Alternatively, going from left to right, the asymptotic type necessarily changes to $0$s or never becomes $1$-periodic.)

It follows that $g(0^\ZZ)$ has no consistent possible choice, a contradiction. 
\end{proof}

Next we consider ECA $41$. Again Proposition~\ref{prop:soficnotsplit} would yield the result, because the image is proper sofic. For ECA 41, in fact even the weak periodic point condition fails, see Section~\ref{sec:WPP}.

\begin{lemma}
The ECA $41$ is not regular.
\end{lemma}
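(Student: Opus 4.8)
The plan is to show that ECA $41$ is not regular, and following the pattern of the preceding lemmas, the cleanest route is to demonstrate that it fails the strong periodic point condition at $p=1$. First I would record the local rule from Figure~\ref{fig:ECA41}: ECA $41$ has $n = 41 = 00101001_2$, so $f(x)_i = 1 \iff x_{[i-1,i+1]} \in \{000, 011, 101\}$. I would then compute the images of the two $1$-periodic points. Reading off the de Bruijn loops, $f(0^\ZZ) = 1^\ZZ$ (since $000 \mapsto 1$) and $f(1^\ZZ) = 0^\ZZ$ (since $111 \mapsto 0$). Consequently any right inverse $g$ from the image $X = f(\{0,1\}^\ZZ)$ must satisfy $g(0^\ZZ) = 1^\ZZ$ and $g(1^\ZZ) = 0^\ZZ$; the value of $G(0)$ in the strong periodic point condition is therefore forced to be $1$, with no freedom to exploit.

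Next I would construct a single eventually periodic point $x \in X$ whose unique possible preimage tails cannot be made to match the forced asymptotic type $1^\ZZ$ on the appropriate side. Concretely, I would look for a point $x$ left-asymptotic to $0^\ZZ$ and right-asymptotic to $1^\ZZ$ (or the symmetric choice), exhibit an explicit preimage witnessing $x \in X$, and then run the right-to-left (or left-to-right) propagation argument exactly as in Lemmas~\ref{lem:9} and the ECA~$27$ lemma: take $n$ maximal (or minimal) with $g(x)_n$ differing from the forced periodic value, analyze the length-$3$ window $g(x)_{[n-1,n+1]}$ using the rule $\{000,011,101\}$, and derive that the demanded image symbol at the boundary coordinate cannot be produced. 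The forcing should pin down the boundary coordinate (as $n=-1$ in the earlier proofs), and then a short case check on which windows in $\{000,011,101\}$ can sit at that boundary yields the contradiction.

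The main obstacle I anticipate is choosing the witness point $x$ so that the combinatorial window analysis actually closes off every continuation, rather than merely one. Because the preimage set of an ECA can branch in the de Bruijn graph, I need $x$ whose $f$-preimages are sufficiently rigid near the transition between the two periodic tails; the rule set $\{000,011,101\}$ is not as degenerate as the XOR rule, so I would verify by tracing paths in Figure~\ref{fig:ECA41} that the relevant transition edges force a unique local preimage behavior. If a one-point witness proves awkward, the fallback is to invoke the remark that ECA $41$ fails even the \emph{weak} periodic point condition (as noted in the text and Section~\ref{sec:WPP}): one then simply exhibits a periodic point in $X$ admitting no preimage of the same period, which by the weak periodic point condition (a special case of the strong one) immediately precludes split epicness, hence regularity by Theorem~\ref{thm:MixingSFTEquivalence}. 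That weak-condition route is likely the shortest and most robust, so I would present it as the primary argument and relegate the direct strong-condition computation to an illustration.
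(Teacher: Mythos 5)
Your proposal is correct, but the argument you ultimately commit to is a genuinely different route from the paper's. The paper proves this lemma directly via the strong periodic point condition at $p=1$: it records $f(0^\ZZ)=1^\ZZ$ and $f(1^\ZZ)=0^\ZZ$ (so, as you observe, $G(0)=1$ is forced), exhibits the explicit witness $x = \ldots 11111100.00000000\ldots \in X$ together with its preimage $\ldots 00000001.00100100\ldots$, and then checks right-to-left that $x$ admits no preimage right-asymptotic to $1^\ZZ$, a contradiction. Your ``primary plan'' is exactly this scheme, but you never produce the witness point, so on its own that half of your proposal is only a sketch; what makes the proposal a complete proof is the fallback you then elevate to the main argument: ECA $41$ fails the \emph{weak} periodic point condition, and WPP is necessary for split epicness onto the image, hence for regularity by Theorem~\ref{thm:MixingSFTEquivalence}. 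That route is valid and non-circular, since the relevant proposition in Section~\ref{sec:WPP} is proved by a direct computation independent of this lemma: $f((000101)^\ZZ)=(010)^\ZZ$ puts $(010)^\ZZ$ in the image, while the images of the points of period dividing $3$ are only $0^\ZZ$, $1^\ZZ$ and the orbit of $(110)^\ZZ$, so $(010)^\ZZ$ has no preimage of period $3$. Ideally you would include this two-line computation rather than lean on the citation, so the proof does not rest on an unproven claim. As for what each approach buys: your WPP route (also available via the proper soficity of the image and Proposition~\ref{prop:soficnotsplit}, as the paper notes) is shorter and more robust, but it is special to ECA $41$ --- WPP holds for ECA $9$, $27$, $28$ and $58$ (Table~\ref{tab:Conditions}) --- whereas the paper's direct SPP argument illustrates the one method that also settles the genuinely hard cases $9$ and $28$, which is the point of that section.
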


%41 = 32 + 8 + 1 = 000 + 011 + 101

\begin{proof}
Let $f$ be the ECA $41$, i.e.\ $f(x)_i = 1 \iff x_{[i-1,i+1]} \in \{000, 011, 101\}$. %The image $X$ of $f$ is proper sofic, we omit the automaton and argue directly in terms of configurations. 

%Again, we will see that this CA does not satisfy the strong periodic point condition for $p = 1$.
We have $f(1^\ZZ) = 0^\ZZ$ and $f(0^\ZZ) = 1^\ZZ$. In the usual way (right to left), we verify that the point %Let $y = ...00000001.00100100...$ so
\begin{align*}
f(...&00000001.00100100...) = \\
         ...&11111100.00000000...  \in X.
\end{align*}
has no preimage that is right asymptotic to $1^\ZZ$, obtaining a contradiction. 
\end{proof}

Next we consider ECA $58$. Again Proposition~\ref{prop:soficnotsplit} would also yield the result, since the image is proper sofic.

\begin{lemma}
The ECA $58$ is not regular.
\end{lemma}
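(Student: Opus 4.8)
The plan is to follow the same strong-periodic-point method used for ECA $9$, $27$, $28$ and $41$, namely to show the failure of the strong $p$-periodic point condition already at $p=1$. First I would read off the local rule of ECA $58$ from its de Bruijn graph (Figure~\ref{fig:ECA58}): $f(x)_i = 1 \iff x_{[i-1,i+1]} \in \{001, 011, 100, 101\}$ (the edges labeled $1$). The starting observation is the behavior on the two fixed points of the shift: from the loops and the relevant edges one computes $f(0^\ZZ) = 1^\ZZ$ and $f(1^\ZZ) = 0^\ZZ$. Hence any right inverse $g$ on the image $X = f(\{0,1\}^\ZZ)$ is forced to satisfy $g(0^\ZZ) = 1^\ZZ$ (and $g(1^\ZZ) = 0^\ZZ$), so there is a single candidate value $G(0) = 1$ to rule out, exactly as in the $9$, $27$, $41$ cases rather than the branching situation of $28$.

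Next I would exhibit a point $x \in X$ that is asymptotic to $0^\ZZ$ on the right but whose only preimages fail to be right-asymptotic to $1^\ZZ$, contradicting the forced choice $g(0^\ZZ) = 1^\ZZ$. Concretely, I would pick a short witness of the form $\INF 0 \, w \, 0 \INF$ (or with a $1^\ZZ$ tail on the left, depending on which is cleaner) lying in the image, produced by an explicit preimage computation $f(\,\cdots\,) = x$ analogous to the displayed equations in the preceding lemmas. Then I would run the same right-to-left argument: let $y = g(x)$, so $y_i = 1$ for all large $i$; let $n$ be maximal with $y_n = 0$; analyze which length-$3$ windows $y_{[n,n+2]}$ are compatible with $y$ having a $1^\ZZ$ right tail and with $f(y)$ matching the prescribed right tail of $x$, and push the forced window assignments leftward until they collide with the values of $x$ near coordinate $0$. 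The de Bruijn graph of $58$ is the tool that makes this finite: reading the in-edges to the relevant nodes tells me exactly which local continuations can produce the required output symbols.

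The main obstacle, and the only genuinely case-specific work, is choosing the witness word $w$ and verifying that the leftward propagation of forced symbols really does reach a contradiction rather than stabilizing into a consistent periodic preimage. Because $58$ is not bit-symmetric and its rule set $\{001,011,100,101\}$ differs from those of the earlier ECA, the contradiction may surface a few cells in (as in ECA $27$, where it appeared only by coordinate $-3$), so I would allow the analysis to track two or three cells to the left of $0$. I would confirm, exactly as the remark after Lemma~\ref{lem:9} notes, that for the fixed $p=1$ this is a finite automata-theoretic check, but in practice the contradiction is found by hand from the graph. Since the image of ECA $58$ is stated to be proper sofic, Proposition~\ref{prop:soficnotsplit} already guarantees non-regularity, so this direct argument is only an illustration and need not be the sole justification; that gives a useful sanity check that the witness-based argument is on the right track.
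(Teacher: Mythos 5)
Your opening computation of the rule's action on the homogeneous points is wrong, and the error breaks the structure of the entire plan. You read the rule set correctly as $\{001,011,100,101\}$, but $000$ is not in that set: the loop at node $00$ in Figure~\ref{fig:ECA58} is labeled $0$ (equivalently, $58$ in base $2$ is $00111010$, so $b_0 = 0$). Hence $f(0^\ZZ) = 0^\ZZ$, not $1^\ZZ$. Since also $f(1^\ZZ) = 0^\ZZ$, the point $0^\ZZ$ has \emph{two} $1$-periodic preimages, and a right inverse $g$ is not forced to satisfy $g(0^\ZZ) = 1^\ZZ$: both $G(0) = 0$ and $G(0) = 1$ are candidate choices. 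So ECA $58$ is precisely in the branching situation of ECA $28$ --- the opposite of your assertion that it behaves like the $9$, $27$, $41$ cases. (With the correct computation, $1^\ZZ$ is not even in the image, so your constraint $g(1^\ZZ) = 0^\ZZ$ is vacuous.) Consequently, your plan of exhibiting a single witness ruling out $G(0) = 1$ cannot complete the proof: even if that witness works, the choice $G(0) = 0$ remains open and untouched.

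A correct proof along these lines must kill both branches with two separate witnesses, which is what the paper does. Against $g(0^\ZZ) = 1^\ZZ$ it uses $x = f(\INF 1.0\INF) = \INF 0.10\INF$: if $y$ is a preimage right-asymptotic to $1^\ZZ$ and $n$ is maximal with $y_n = 0$, then $f(y)_{n+1} = 1$ (from $011 \mapsto 1$) forces $n = -1$, but then $f(y)_{-1} = 1$ because both $001$ and $101$ map to $1$, contradicting $x_{-1} = 0$. Against $g(0^\ZZ) = 0^\ZZ$ it uses $x = f(\INF 0.01\INF) = \INF 0.110\INF$: if $y$ is a preimage right-asymptotic to $0^\ZZ$ and $n$ is maximal with $y_n = 1$, then $f(y)_{n+1} = 1$ (from $100 \mapsto 1$) forces $n = 0$, but then $f(y)_0 = 0$ because both $010$ and $110$ map to $0$, contradicting $x_0 = 1$. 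Finally, your fallback via Proposition~\ref{prop:soficnotsplit} is indeed a valid alternative route (the paper notes it), but as you present it the proper soficity of the image is merely quoted as ``stated'' rather than proven, so it cannot substitute for the broken main argument; the paper establishes it separately via the synchronizing word $0100$, showing $\INF 0 1 0 \INF$ is in the image while $\INF 0 1 0^n 1 0 \INF$ is not.
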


% 58 = 32 + 16 + 8 + 2

\begin{proof}
Let $f$ be the ECA $58$, i.e.\ $f(x)_i = 1 \iff x_{[i-1,i+1]} \in \{001, 011, 100, 101\}$.

The point $0^\ZZ$ has two $1$-periodic preimages. We show neither choice satisfies the strong periodic point condition: if $g(0^\ZZ) = 1^\ZZ$, then $g$ cannot give a preimage for
\begin{align*}
f(...&11111.0000...) = \\
          ...&00000.1000...
\end{align*}
If $g(0^\ZZ) = 0^\ZZ$, then it cannot give a preimage for
\begin{align*}
f(...&00000.01111...) = \\
          ...&00000.11000...
\end{align*}
Thus, ECA 58 is not regular.
%It is easy to find preimages for these two configurations, however, so ECA 58 is not regular.
\end{proof}

% preimages:
% 111111110000000
% 000000001000000
% and
% 000000011111111
% 000000110000000
%

\section{Weak periodic point condition and SFT images} %The case of $9$ and $28$}
\label{sec:WPP}

In this section we look at the weak periodic point condition (WPP) and the condition of having SFT images, for the five non-regular CA that were left open in \cite{CaGa20} (where the WPP was checked up to period $3$) and for which we showed the strong periodic point condition (SPP) fails in the previous section. The results are summarized in Table~\ref{tab:Conditions}, ``no'' means that the particular method can be used to conclude non-regularity (because the necessarily condition does not hold), ``yes'' means that it can not.

\begin{table}
\begin{center}
\begin{tabular}{|l|l|l|l|}
\hline
         & SFT image & WPP & SPP \\
\hline
ECA $9$  & yes       & yes & no \\
\hline
ECA $27$ & no        & yes & no \\
\hline
ECA $28$ & yes       & yes & no \\ 
\hline
ECA $41$ & no        & no  & no \\
\hline
ECA $58$ & no        & yes & no \\
\hline
\end{tabular}
\end{center}
\caption{Necessary conditions for regularity satisfied by the five non-regular ECA.}
\label{tab:Conditions}
\end{table}

\subsection{Weak periodic point condition}

We show that the ECA $9$, $27$, $28$ and $58$ satisfy the weak periodic point condition, i.e.\ every periodic point in the image has a preimage with the same period, while $41$ does not.

\begin{proposition}
ECA 41 does not satisfy the weak periodic point condition.
\end{proposition}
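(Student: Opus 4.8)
The plan is to refute the weak periodic point condition by exhibiting a single periodic point that lies in the image $X = f(\{0,1\}^\ZZ)$ of ECA $41$ but has no preimage of the same period; by the observation of \cite[Theorem~1]{CaGa20} recalled in the remark following Theorem~\ref{thm:SplitEpicSolved}, this suffices. My witness is the period-three point $(001)^\ZZ$, where as always $f(x)_i = 1 \iff x_{[i-1,i+1]} \in \{000,011,101\}$.

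The first half of the argument is a finite check: I would verify that no point fixed by $\sigma^3$ maps to $(001)^\ZZ$. The only such points are the eight $u^\ZZ$ with $u \in \{0,1\}^3$, and applying $f$ to each gives $000^\ZZ \mapsto 1^\ZZ$, $111^\ZZ \mapsto 0^\ZZ$, the three shifts of $001$ all mapping to $0^\ZZ$, and the three shifts of $011$ permuted cyclically among themselves ($011 \mapsto 110 \mapsto 101 \mapsto 011$). In particular $(001)^\ZZ$ is not among these images, so it has no period-three preimage.

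The second half — the one with real content — is to show that $(001)^\ZZ$ nevertheless belongs to $X$. The systematic way is to read the label word $001$ along the de Bruijn automaton of Figure~\ref{fig:ECA41} and record the induced transition relation on the states $\{00,01,10,11\}$: reading $001$ sends $00 \mapsto 01$, $01 \mapsto 00$, $10 \mapsto 11$ and $11 \mapsto \{00,01\}$. This relation has no fixed point (which re-proves that no closed walk of length three reads $001$), yet it contains the two-cycle $00 \leftrightarrow 01$. Traversing that two-cycle yields a closed walk of length six reading $(001)^\ZZ$, hence a period-six preimage; concretely $f((010100)^\ZZ) = (001)^\ZZ$, which one also checks directly. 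Thus $(001)^\ZZ$ is a period-three point of $X$ with no period-three preimage, and the weak periodic point condition fails.

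The only genuine difficulty is locating the witness, since the failure is invisible if one inspects only periodic points that occur as images of periodic points of the same period: $(001)^\ZZ$ enters the image solely through longer-period (here period-six) configurations. The transition-relation computation is exactly the tool that makes this visible — the condition can fail at period $p$ precisely when reading some length-$p$ label word around the de Bruijn graph admits a cycle in the induced state relation but no fixed point — and the $+1$ rotation that reading $0$ induces on $\{00,01,10\}$ is what obstructs the orbit of $001$. I would also record the quick checks that periods $1$ and $2$ admit no such word, so that period three is genuinely the smallest witness.
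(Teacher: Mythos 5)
Your proof is correct and is essentially the paper's own argument: the paper exhibits $(010)^\ZZ$ (a shift of your witness $(001)^\ZZ$) as a period-$3$ point in the image with the period-$6$ preimage $(000101)^\ZZ$ (a shift of your $(010100)^\ZZ$) but no period-$3$ preimage, verified by the same finite check of all period-$3$ points ($f(0^\ZZ)=1^\ZZ$, $f(1^\ZZ)=0^\ZZ$, $f((001)^\ZZ)=0^\ZZ$, $f((011)^\ZZ)=(110)^\ZZ$). Your de Bruijn transition-relation bookkeeping is a correct but optional repackaging of how one finds that preimage.
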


\begin{proof}
In the local rule of this ECA $f$, we have $000, 011, 101 \rightarrow 1$, others map to $0$. We have $f((000101)^\ZZ) = (010)^\ZZ$, $f((001)^\ZZ) = 0^\ZZ, f((011)^\ZZ) = (110)^\ZZ$, so $(010)^\ZZ$ has a $6$-periodic preimage, but no $3$-periodic preimage.
\end{proof}

\begin{proposition}
\label{prop:ECA28WPP}
ECA 28 satisfies the weak periodic point condition.
\end{proposition}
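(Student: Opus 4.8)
The plan is to describe the action of $f$ on the run (maximal constant block) structure of a configuration, invert this description on periodic points, and track periods throughout. Recall from Proposition~\ref{prop:ECA28Image} that the image $X$ is the SFT avoiding the word $111$. The key structural fact I would establish first is that $f$ acts as an \emph{in-place, length-preserving substitution on runs}: reading the local rule $f(x)_i = 1 \iff x_{[i-1,i+1]} \in \{010,011,100\}$ off the four cases of $(x_{i-1},x_i)$, one finds that $f(x)_i = 1$ exactly when $i$ is the left endpoint of a maximal $1$-run, or the left endpoint of a maximal $0$-run of length at least $2$, and $f(x)_i = 0$ otherwise. Consequently a maximal run occupying positions $[a,a+\ell-1]$ is sent, in place, to $1 0^{\ell-1}$ if it is a $1$-run or a $0$-run with $\ell \ge 2$, and to the single symbol $0$ if it is a $0$-run with $\ell = 1$. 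The constant points $0^\ZZ$ and $1^\ZZ$ both map to $0^\ZZ$ and are treated separately ($0^\ZZ$ is its own preimage, and it is the only constant point of $X$).

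Given this, finding a $p$-periodic preimage of a $p$-periodic $y \in X$ with $y \neq 0^\ZZ$ becomes a parsing problem. First I would cut $y$ cyclically just before each $1$, producing blocks of the form $1 0^{m}$ with $m \ge 0$; since $y$ avoids $111$, no two length-one blocks ``$1$'' are adjacent. I then declare each block a run of the preimage $x$ and assign run-symbols that alternate around the cycle, subject to the compatibility dictated by the substitution: a block $1 0^{m}$ may be a $1$-run (of length $m+1$) for any $m \ge 0$, or a $0$-run (of length $m+1 \ge 2$) when $m \ge 1$, while a lone ``$0$'' block must be a unit $0$-run. Equivalently, a block ``$1$'' is forced to carry symbol $1$, a block ``$0$'' is forced to carry symbol $0$, and the longer $1 0^m$ blocks are free. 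The preimage $x$ is obtained by making $x$ constant on each block with the assigned symbol, and it is $p$-periodic by construction.

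The main obstacle is the \emph{global consistency} of this alternating $2$-colouring around the cyclic block sequence: an alternating assignment exists only if the number of blocks is even and the forced colours line up, and for instance $y = (100)^\ZZ$ has a single $1$ per period, so the naive parse yields one block per period and admits no alternation (its genuine preimage $(110)^\ZZ$ uses a unit $0$-run). The remedy I would use is to insert unit $0$-runs: a free block $1 0^{m}$ with $m \ge 1$ can always be split into $1 0^{m-1}$ followed by a lone ``$0$'' block, which adds one run (changing the global parity) and a forced symbol-$0$ vertex (shifting the colour phase). Since $y \in X$ and $y \neq 1^\ZZ$, at least one splittable block exists, and one verifies the parity identity that the number of $1$s in one period is congruent mod $2$ to the number of unit $0$-runs used. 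The heart of the argument is then the short counting/parity check showing that finitely many such splits can always be placed so that the forced colours (``$1$''-blocks receiving $1$, inserted ``$0$''-blocks receiving $0$) are simultaneously realised by a single alternating colouring. Once the colouring is fixed, the resulting $x$ is a $p$-periodic preimage, which establishes the weak periodic point condition.
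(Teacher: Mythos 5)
Your run-substitution analysis of ECA 28 is correct: $f(x)_i = 1$ precisely when $i$ is the left endpoint of a maximal $1$-run or of a maximal $0$-run of length at least $2$, so $f$ maps each run of length $\ell$ in place to $10^{\ell-1}$, except unit $0$-runs, which map to $0$. The reduction of the weak periodic point condition to a cyclic parsing-and-$2$-colouring problem is also sound, and this is a genuinely different route from the paper's proof, which argues in two cases: on the part of the image avoiding $11$ (the golden mean shift) the paper exhibits an explicit section, namely left shift composed with bit flip (because $f$ itself acts as right-shift-plus-flip on the SFT forbidding $00$), and for a $p$-periodic point containing $11$ it uses that $11$ is synchronizing (its only preimage word is $0100$), so an arbitrary preimage can be cut at the synchronized positions and repeated with period $p$.

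The problem is that your proof stops exactly where the work is. The claim you yourself label ``the heart of the argument'' --- that finitely many splits can always be placed so that the forced colours are simultaneously realised by one alternating colouring --- is asserted, never proved. The parity identity you offer (number of $1$s per period $\equiv$ number of unit $0$-runs used $\pmod 2$) is only the evenness condition for a proper $2$-colouring of a cycle to exist; it says nothing about whether the forced colours line up. What actually has to be shown is this: listing the segments $10^{m_1},\dots,10^{m_k}$ cyclically, the segments with $m_j = 0$ are forced to colour $1$ and cannot be split, while splitting segment $j$ forces colour $1$ both on its own first block and on the first block of segment $j+1$; hence in every gap between cyclically consecutive forced segments whose distance is odd, one must find a splittable segment at a position of the correct parity. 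This is in fact always possible --- since $111$ is forbidden, forced segments are never adjacent, so an odd distance is at least $3$ and the segment two steps past the left forced segment is splittable and of the right parity, with the cases of zero or one forced segment handled separately --- but none of this appears in your write-up, and your stated substitute, ``at least one splittable block exists,'' is too weak: one needs splittable blocks at prescribed positions modulo $2$, not merely somewhere in the cycle. As written, the proposal is an outline of a workable strategy with its central lemma missing.
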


\begin{proof}
In this ECA, $010, 011, 100 \rightarrow 1$, others map to $0$. It is easy to check that this ECA amounts to a right shift followed by a bit flip when restricted to the flipped golden mean shift, i.e.\ the SFT with the unique forbidden word $00$. Thus, on the golden mean shift (the SFT with unique forbidden word $11$), a section for this CA is left shift composed with bit flip.

Thus, any periodic point in the golden mean shift will have a preimage with the same period. Consider then a $p$-periodic point $x$ not in the golden mean shift, we may assume $x = (11u)^\ZZ$ for some $u$, say $\vl u\vl  = p-2$. Now take any preimage $y$ for $x$, and observe that necessarily $y_{[n-1,n+3]} = 0100$ for all $n \in p\ZZ$, by a quick look at the local rule. In automata theoretic terms, $11$ is a synchronizing word for the de Bruijn automaton. This synchronization means that also $(y_{[0,p-1]})^\ZZ$ is a preimage.
\end{proof}

\begin{proposition}
ECA 9 satisfies the weak periodic point condition.
\end{proposition}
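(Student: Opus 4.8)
The plan is to fix the ECA $9$, $f(x)_i = 1 \iff x_{[i-1,i+1]} \in \{000,011\}$, with image $X$, and to produce, for every $p$-periodic $x \in X$, a $p$-periodic preimage by exhibiting a closed walk of length $p$ that reads one period of $x$ in the de Bruijn graph of Figure~\ref{fig:ECA9}. The constant points are immediate, since $f(1^\ZZ)=0^\ZZ$ and $f(0^\ZZ)=1^\ZZ$ give period-$1$ preimages. For a non-constant $x$ I would split into two cases according to whether $x$ contains the factor $11$, treating these by synchronization and by an explicit cycle construction respectively.

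If $x$ contains $11$, I would argue exactly as in Proposition~\ref{prop:ECA28WPP}: inspecting the local rule shows that $x_k = x_{k+1}=1$ forces $y_{[k-1,k+2]} = 0000$ in every preimage $y$, i.e.\ $11$ is synchronizing and pins the de Bruijn state to $00$ at such positions. Since the occurrences of $11$ recur with period $p$, the de Bruijn state at some position and at that position plus $p$ is forced to $00$ for any preimage (one exists as $x \in X$), and cutting one period out of that preimage yields a $p$-periodic preimage.

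The main work, and the main obstacle, is the case where $x$ has no $11$. Then $x$ is a bi-infinite string of isolated $1$'s separated by gaps (maximal blocks of $0$'s) of length $\geq 1$, and since $10101 \notin \lang(X)$ no two consecutive gaps both have length $1$; this I would verify directly, as the constraints imposed by the three $1$'s of $10101$ force a $000$ block that makes an intervening cell take value $1$, so $10101$ has no $f$-preimage. Now reading a $1$ in the de Bruijn graph is possible only from state $00$ or $01$, landing in $00$ or $11$ respectively, and I would call these the two \emph{post-$1$ states}. A short computation of the states reachable under runs of $0$'s yields the gap-transition rules: a unit gap can only be traversed from post-$1$ state $00$ and forces the following post-$1$ state to be $11$, while a gap of length $\geq 2$ can be traversed from $11$ and can be made to end in either $00$ or $11$. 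I would then assign post-$1$ state $00$ exactly at the $1$'s immediately preceding a unit gap and $11$ at all other $1$'s; the hypothesis ``no two consecutive unit gaps'' is precisely what makes this cyclic choice consistent, and it glues the per-gap walks into a single closed walk of length $p$ reading one period of $x$, producing a $p$-periodic preimage.

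The delicate point to get right is the bookkeeping in the gap-transition table, since some of the short gaps behave slightly irregularly, so I would isolate this as a small lemma and check that the ``default $11$, forced $00$ before unit gaps'' assignment is always realizable; the examples $(100)^\ZZ \leftarrow (110)^\ZZ$ and $(10100)^\ZZ \leftarrow (00110)^\ZZ$ are useful sanity checks that the construction closes up correctly.
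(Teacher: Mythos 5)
Your proof is correct, but it splits the work differently from the paper. The paper's proof also handles points containing $11$ by the synchronization/cut-and-repeat argument of Proposition~\ref{prop:ECA28WPP}, but it additionally observes that $101$ is synchronizing (its unique preimage word is $00011$), which disposes of \emph{all} periodic points containing a unit gap by the same argument; the only remaining points are those whose $1$s are separated by gaps of length at least $2$, i.e.\ points of the subshift $\lang^{-1}((1000^*)^*)$, and on that part of the image an explicit section is given in closed form by left shift composed with bit flip (the same trick used for ECA $28$ on the golden mean shift). You instead use only the synchronizing word $11$ and then treat every isolated-$1$ point, unit gaps included, by your closed-walk construction in the de Bruijn graph of Figure~\ref{fig:ECA9}. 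I checked your gap-transition table and it is right: a unit gap is traversable only from post-$1$ state $00$ (forcing the next post-$1$ state to be $11$, since from $11$ both $0$-successors $11$ and $10$ are dead ends for reading a $1$ immediately), a gap of length exactly $2$ is \emph{not} traversable from $00$ but from $11$ reaches either pre-$1$ state via $11 \to 10 \to \{00,01\}$, and longer gaps from $11$ idle on the self-loop first; your ``$00$ exactly before unit gaps'' assignment is then consistent precisely because $10101 \notin \lang(X)$ (which you correctly verify, and which is one of the first offenders listed in Proposition~\ref{prop:ECA9Image}). What each approach buys: the paper's is shorter and reuses the synchronization lemma twice plus a one-line formula for the section, while yours is more constructive and self-contained --- it explicitly builds the $p$-periodic preimage and in effect re-derives a section on the whole $11$-free part of the image --- at the price of the delicate state bookkeeping you rightly flagged.
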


\begin{proof}
In this one, $000, 011 \rightarrow 1$, others map to $0$. It is easy to check that on the subshift $\lang^{-1}((1000^*)^*)$ (on the image side), a section is again given by left shift composed with bit flip. The words $11$, $101$ are easily seen to force a unique preimage word of length at least $2$ (i.e.\ are synchronizing for the automaton), and we can argue as in the previous proof for all periodic points containing one of these words.
\end{proof}

For ECA $27$ and $58$ we decided the weak periodic point condition by computer.

\begin{proposition}
\label{prop:SFTDecidable}
The weak periodic point condition is decidable.
\end{proposition}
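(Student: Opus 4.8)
The plan is to reduce the \emph{a priori} infinite quantification over all periods $p$ to a single finite computation in the transition monoid of the output-labeled de Bruijn automaton of $f$. I describe the argument for a CA $f : X \to X$ with $X$ a full shift or SFT and image $Y = f(X)$ (the case needed for the ECA); the general case of a morphism between sofic shifts is identical after replacing the de Bruijn graph by any labeled cover in which periodic points lift to cycles.

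First I would fix conventions. Shifting the neighborhood changes neither $Y$ nor the periodic-preimage structure, so assume $f$ has neighborhood $[0,m-1]$ and let $G$ be the de Bruijn graph on vertex set $V = \lang(X) \cap A^{m-1}$, each edge $s \to t$ (overlapping in $m-1$ letters) carrying the output label assigned by the local rule. For each letter $a$ let $M_a \in \{0,1\}^{V \times V}$ be the Boolean matrix with $M_a[s,t] = 1$ exactly when $G$ has an edge $s \to t$ with output $a$, and for $u = u_1 \cdots u_p$ set $M_u = M_{u_1}\cdots M_{u_p}$ (Boolean product); then $M_u[s,t]=1$ iff $G$ has a length-$p$ walk $s \to t$ outputting $u$.

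The heart of the proof is the following reformulation, valid for every nonempty word $u$ of length $p$: (i) $u^\ZZ \in Y$ iff the directed graph of $M_u$ contains a cycle (equivalently $M_u^k$ has a nonzero diagonal for some $k \le |V|$); and (ii) $u^\ZZ$ has a $p$-periodic $f$-preimage iff $M_u$ has a nonzero diagonal entry. Claim (ii) is just the standard identification of $p$-periodic preimages with length-$p$ closed walks of $G$ outputting $u$, i.e.\ with self-loops $(s,s)$ of the relation $R_u$ whose matrix is $M_u$. Claim (i) is a compactness argument: $u^\ZZ \in Y$ iff some bi-infinite walk of $G$ outputs $\ldots uuu\ldots$; reading off the states at the period boundaries produces a bi-infinite walk in the finite relation $R_u$, and such a walk exists iff $R_u$ has a cycle.

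With (i) and (ii) in hand, the WPP --- that every $u^\ZZ \in Y$ has a $p$-periodic preimage --- becomes the statement that for every nonempty $u$, if the graph of $M_u$ has a cycle then $M_u$ has a nonzero diagonal. Since both clauses depend on $u$ only through the Boolean matrix $M_u$, and since $\{M_u : u\}$ is exactly the finite multiplicative monoid $\mathcal{M} = \langle M_a \rangle$ (at most $2^{|V|^2}$ elements), this is a finite check: compute $\mathcal{M}$ by closing the generators under Boolean products, and for each $M \in \mathcal{M}$ decide whether cycle-existence in the graph of $M$ forces a nonzero diagonal. Each test is elementary, so the WPP is decidable. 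The main obstacle is exactly the reformulation step --- recognizing that membership $u^\ZZ \in Y$ and existence of a same-period preimage are functions of $M_u$ alone, which is what collapses the unbounded family of periods into the finite monoid; the only fiddly points are the de Bruijn bookkeeping and the phase alignment in the compactness argument for (i).
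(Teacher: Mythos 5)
Your proof is correct, and the case you actually prove (full shift or SFT domain) is exactly the case the paper needs for the ECA. At heart it is the same finite-state reduction as the paper's, but packaged differently. The paper's proof is two lines: the language $L$ of words $v$ admitting a same-length periodic $f$-preimage (the right projection of the pair language $\{(u,v) : f(u^\ZZ) = v^\ZZ\}$) and the language $K$ of words $v$ with $v^\ZZ$ in the image are both ``clearly regular'', WPP holds if and only if $K = L$, and equality of regular languages is decidable. Your claims (i) and (ii) are precisely the facts that make those regularity assertions true: membership of $u^\ZZ$ in the image and existence of a same-period preimage are functions of the Boolean matrix $M_u$ alone, hence recognized by the finite transition monoid. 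So where the paper delegates to closure properties of regular languages and decidability of language equivalence, you unfold that machinery into an explicit algorithm: enumerate the monoid $\mathcal{M} = \langle M_a \rangle$ (size at most $2^{|V|^2}$) and check ``cycle implies nonzero diagonal'' elementwise. Your version buys self-containedness and an explicit, if crude, complexity bound; the paper's buys brevity and modularity.

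One caveat: your parenthetical reduction of the general sofic-domain case to ``any labeled cover in which periodic points lift to cycles'' is too weak, since \emph{every} finite cover has that property by pigeonhole, while your claim (ii) needs periodic points to lift to cycles of the \emph{same} period. For instance, present the full shift $\{0,1\}^\ZZ$ by the cover with vertex set $\ZZ/3\ZZ$ and, for each letter, an edge from $i$ to $i+1$ carrying that letter: all periodic points lift to cycles, yet for $f = \mathrm{id}$ and $u = 0$ the matrix $M_0$ has empty diagonal even though $0^\ZZ$ has a fixed preimage. This does not affect the SFT case you prove in full, which is also the implicit level of generality of the paper's own proof, but the general sofic statement would need a cover with same-period lifts of periodic points rather than an arbitrary one.
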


\begin{proof}
Let $f$ be a CA. The language of words $(u,v)$ of the same length such that $f(u^\ZZ) = v^\ZZ$ is clearly regular, let $L$ be its right projection. The image of $f$ is sofic, so let $K$ be the language of words $v$ such that $v^\ZZ$ is in the image subshift of $f$, which is also clearly regular. If every periodic point admits an $f$-preimage of the same period, then clearly $K = L$. If $u^\ZZ$ does not admit a point of the same period, and $\vl u\vl $ is minimal for this point, then $u \in K \setminus L$.
\end{proof}

An implementation (as a Python/SAGE script) is included in \cite{Sa22RegularScripts} as {\tt check\_WPP.sage}.

\begin{proposition}
ECA 27 and 58 satisfy the weak periodic point condition.
\end{proposition}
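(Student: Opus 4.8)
The plan is to reduce the weak periodic point condition to a finite-state question about the de Bruijn automaton and then dispatch it by the decision procedure of Proposition~\ref{prop:SFTDecidable}. For an output word $v$ of length $p$, let $T_v$ denote the partial transition function induced on the four de Bruijn states $00,01,10,11$ by reading $v$ along label-matching edges. A bi-infinite preimage of $v^\ZZ$ is a bi-infinite path whose label sequence is $v^\ZZ$, so tracking the state at the positions in $p\ZZ$ turns such a preimage into a bi-infinite orbit of $T_v$; by a pigeonhole argument on the finitely many states, $v^\ZZ$ lies in the image if and only if $T_v$ has a cyclic orbit, while $v^\ZZ$ admits a $p$-periodic preimage if and only if $T_v$ has a \emph{fixed} state. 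Thus the weak periodic point condition is equivalent to the assertion that, for every $v$, whenever $T_v$ has a cycle it also has a fixed point. This is exactly the pattern that fails for ECA $41$, where reading $v = 010$ gives a $2$-cycle with no fixed state (matching $f((000101)^\ZZ) = (010)^\ZZ$ and the absence of a $3$-periodic preimage), so the task is to show no analogous obstruction occurs for $27$ and $58$.

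First I would isolate the periodic points handled by synchronization. If $v$ contains an output factor $s$ that is synchronizing for the de Bruijn automaton of $f$ -- meaning $T_s$ collapses all its defined states to a single state $c$ -- then $T_v$ is constant on its domain, so any cyclic orbit of $T_v$ must be a single state equal to that constant value, hence a fixed point; such $v^\ZZ$ therefore has a $p$-periodic preimage. This is the same mechanism used in Proposition~\ref{prop:ECA28WPP}. I would locate the short synchronizing output words for each rule directly from Figures~\ref{fig:ECA27} and~\ref{fig:ECA58}, together with an invertible ``good'' subshift (a golden-mean-type SFT on which $f$ acts as a shift composed with a bit flip, as in the $9$ and $28$ arguments), so that every periodic point either lies in the good subshift or contains a synchronizing factor.

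The main obstacle is precisely the periodic points that slip past both filters: those avoiding every synchronizing factor and not lying in the good subshift. For ECA $41$ such a point exists and breaks the condition, so for $27$ and $58$ one must certify that no image periodic point yields a fixed-point-free cycle of $T_v$. I do not expect a short closed-form argument to cover all residual periods by hand; instead I would appeal to Proposition~\ref{prop:SFTDecidable}, computing the regular language $K$ of words $v$ with $v^\ZZ$ in the image and the regular language $L$ of words admitting a same-period preimage, and verifying $K = L$. This is the computation carried out by the script {\tt check\_WPP.sage} of \cite{Sa22RegularScripts}, and it is the step that genuinely separates $27, 58$ (where $K = L$) from $41$ (where $K \setminus L \neq \emptyset$); the synchronization analysis above serves mainly to explain why the equality holds and to confine the finite-state check to the few residual state-cycle patterns.
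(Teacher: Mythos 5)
Your proposal is correct and takes essentially the same route as the paper: the paper gives no hand proof for ECA 27 and 58 at all, but simply decides the weak periodic point condition by computer via the regular-language equality $K = L$ from its decidability proposition (implemented in {\tt check\_WPP.sage}), which is precisely the check your argument ultimately falls back on; your synchronizing-word discussion is additional framing in the spirit of the paper's hand proofs for ECA 9 and 28, but is not needed for (and does not complete) the proof. One minor inaccuracy worth noting: the de Bruijn automaton is nondeterministic, so your $T_v$ is a relation rather than a partial function, though the cycle/fixed-point characterization and the synchronization argument survive unchanged.
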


\subsection{SFT images}
\label{sec:SFTImages}

We show that ECA $9$ and $28$ have SFT images, while the others do not. We begin with the observation that one can do this easily by computer:

\begin{proposition}
\label{prop:SFTDecidable}
It is decidable whether the image of a CA is an SFT.
\end{proposition}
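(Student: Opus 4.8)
The plan is to reduce the question to the decidability of whether a given sofic shift is an SFT, which is a classical and decidable problem in symbolic dynamics.

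First I would observe that the image $f(X)$ of a cellular automaton on a (sofic) subshift is itself a sofic shift, and moreover one can effectively compute a presentation of it. Concretely, starting from the de Bruijn automaton for the domain, the local rule of $f$ relabels each edge by the image symbol it produces; the resulting labeled graph is a nondeterministic automaton presenting exactly $\lang(f(X))$, and this presentation is computable from the local rule of $f$. Thus the image is a \emph{sofic shift given by an explicit presentation}, and the entire question becomes: given a sofic presentation, decide whether the presented subshift is of finite type.

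Next I would invoke the known decidability of being an SFT for sofic shifts. The standard route is via the minimal right-resolving (Fischer) presentation: from any sofic presentation one can effectively pass to the minimal deterministic presentation of the sofic shift, and a sofic shift is an SFT if and only if this presentation satisfies a checkable local-memory (or ``follower-set closure'') condition. Equivalently, a sofic shift $Y$ is an SFT if and only if there exists some $N$ such that $Y$ equals the SFT defined by forbidding the length-$N$ words absent from $\lang(Y)$, and the minimal presentation provides an effective bound on the window $N$ one must check. Either formulation gives a terminating algorithm, so once the image is presented as a sofic shift, deciding the SFT property is routine.

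The main obstacle, such as it is, is purely a matter of citing and assembling the right effective constructions rather than any genuine difficulty: one must be careful that passing from the raw de Bruijn relabeling (which is nondeterministic and may present the language with extraneous states) to a form on which the SFT test applies is done effectively, typically by determinizing and minimizing the presentation. Since all of these steps --- relabeling, determinization, minimization, and the follower-set test for the SFT property --- are standard effective procedures on finite automata and finite labeled graphs, the composite algorithm terminates and the proposition follows. I would therefore keep the proof short, pointing to the de Bruijn presentation of the image and to the classical decidability of the SFT property for sofic shifts as found in standard references.
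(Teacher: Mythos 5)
Your proposal is correct and shares the paper's first (trivial) step---the image is an effectively presented sofic shift, so the question reduces to deciding whether a given sofic shift is an SFT---but it settles that question by a genuinely different route. The paper never passes to a minimal presentation: it invokes the characterization (\cite[Exercise~1.3.8]{LiMa95}) that a subshift is an SFT if and only if its set of \emph{first offenders} (minimal forbidden words) is finite, notes that if $L$ is the complement of the language of the image then the first offenders form exactly the regular language $L \setminus (\Sigma^+ L \cup L \Sigma^+)$, and then simply tests finiteness of a regular language. That argument is self-contained, uses only elementary closure properties of regular languages, and needs no structural hypotheses on the image. Your route through the minimal right-resolving (Fischer) presentation is legitimate, but two points deserve attention: first, the Fischer cover is canonically defined, and the classical SFT characterizations are usually stated, for \emph{irreducible} sofic shifts---harmless here, since the image of a CA on a full shift is mixing and hence irreducible, but worth saying explicitly (or one should phrase the test via follower sets of words, which also covers reducible sofic shifts); second, the ``checkable follower-set closure condition'' and the ``effective bound on the window $N$'' carry the real content of your argument and are only asserted---they are indeed classical (for instance, one can check on the pair automaton of the determinized, follower-separated presentation whether arbitrarily long non-synchronizing words exist), so this is a citation-level gap rather than a mathematical one. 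What your approach buys is contact with standard symbolic-dynamics machinery and, in the positive case, an explicit step bound for the SFT; what the paper's approach buys is a shorter, case-free, purely automata-theoretic decision procedure.
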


\begin{proof}
The image subshift is sofic. To check if a sofic shift is an SFT, observe that by \cite[Exercise~1.3.8]{LiMa95}, a subshift is an SFT if and only if the set of \emph{first offenders}, i.e.\ the minimal forbidden words, is finite. If $L$ is the complement of the language of $X \subset \Sigma^\ZZ$, then the first offenders are by definition exactly the language $L \setminus (\Sigma^+ L \cup L \Sigma^+)$, where $\Sigma^+$ denotes the non-empty words over $\Sigma$; regular languages are effectively closed under these operations and it is trivial to check if a regular language is finite.
\end{proof}

In the remainder of this section, we give more or less ad hoc human proofs of SFTness and proper soficness of images, not following this algorithm.

\begin{proposition}
\label{prop:ECA28Image}
ECA 28 has SFT image, defined by a single forbidden word $111$.
\end{proposition}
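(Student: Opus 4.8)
The plan is to prove the two inclusions $f(\{0,1\}^\ZZ) \subseteq Y$ and $Y \subseteq f(\{0,1\}^\ZZ)$ separately, where $Y$ denotes the SFT with the single forbidden word $111$. The first is a finite check on the local rule. The triples mapping to $1$ are exactly $\{010, 011, 100\}$; if we had $f(x)_{[i,i+2]} = 111$, then the three overlapping windows $x_{[i-1,i+1]}$, $x_{[i,i+2]}$, $x_{[i+1,i+3]}$ would all lie in this set, and a short case analysis shows this is impossible. Equivalently, inspecting the label-$1$ edges $01 \to 10$, $01 \to 11$, $10 \to 00$ of Figure~\ref{fig:ECA28}, the only way to concatenate two of them into a path is $01 \to 10 \to 00$, and no third $1$-labeled edge can follow. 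Hence $f(\{0,1\}^\ZZ) \subseteq Y$.

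The main work is the reverse inclusion. I would read a point $y \in Y$ as a bi-infinite sequence of edge labels and build a corresponding bi-infinite path in the de Bruijn graph; its node sequence then yields a preimage $x$. The key local observation, already isolated in Proposition~\ref{prop:ECA28WPP}, is that a factor $11$ in $y$ can only be produced by the path $01 \to 10 \to 00$, so each occurrence of $11$ forces $x$ to contain the block $0100$ in the aligned position, after which the next label is forced to be $0$ (which is exactly $111$-freeness). On the complementary stretches, where $y$ avoids $11$, the restriction of $f$ to the flipped golden mean shift is the right shift followed by the bit flip, so these stretches lift by the section described in Proposition~\ref{prop:ECA28WPP} (left shift composed with the bit flip).

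To make the lifting rigorous and uniform, I would run the subset construction on the de Bruijn automaton of Figure~\ref{fig:ECA28} (all nodes initial and final), tracking the set of nodes reachable by a given label prefix. One checks that every reachable state admits both a $0$- and a $1$-transition to a nonempty state, with the sole exception that reading a $1$ when the previous two labels were $11$ leads to the empty set; thus the forward-infinite label language of the automaton is exactly the set of $111$-free sequences. Since every node of the graph has both an incoming and an outgoing edge, the automaton is essential, and the set of bi-infinite label sequences is therefore exactly the set of $111$-free points, giving $Y \subseteq f(\{0,1\}^\ZZ)$.

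The main obstacle is this reverse inclusion, and in particular the fact that one cannot lift $y$ by a single cellular-automaton section: ECA $28$ is not regular, so by Theorem~\ref{thm:MixingSFTEquivalence} no continuous shift-commuting right inverse $Y \to \{0,1\}^\ZZ$ exists. The preimage must therefore be assembled \emph{pointwise} -- gluing the forced $0100$ blocks at each $11$ to the flip-lifted golden-mean stretches between them -- rather than by a uniform local rule. The care needed lies precisely in this gluing, equivalently in the verification that the subset construction above never reaches the empty set except immediately after a forbidden $111$.
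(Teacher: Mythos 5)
Your proof is correct, but it settles the decisive inclusion by a genuinely different route than the paper. Both arguments begin the same way: the forward inclusion $f(\{0,1\}^\ZZ) \subseteq Y$ is the finite check that $111$ has no preimage, coming from the forced block $0100$ above any occurrence of $11$. For the reverse inclusion the paper stays dynamical: it describes an explicit transducer-like procedure that constructs a preimage for every \emph{transitive} point of $Y$ (gluing the forced $0100$ blocks over the occurrences of $11$ and filling the gaps to the right by the shift-and-flip rule), and then concludes because transitive points are dense in $Y$ and $f(\{0,1\}^\ZZ)$ is compact, hence closed. You sketch the same gluing as motivation, but your actual rigorous argument is automata-theoretic: verify via the subset construction that the de Bruijn automaton reaches the empty state only upon reading the third $1$ of a $111$, so that every $111$-free finite word labels a path; essentiality of the de Bruijn graph then gives $\lang(f(\{0,1\}^\ZZ)) = \lang(Y)$, and equal languages force equal subshifts. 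Your finite check is right: the reachable nonempty subset states are $\{00,01,10,11\}$, $\{00,10,11\}$, $\{00\}$, $\{00,01\}$, $\{10,11\}$ and $\{01,10,11\}$, and only $\{00\}$, which is entered exactly after reading $11$, lacks a nonempty $1$-successor. As to what each approach buys: the paper's route exhibits concrete preimages but leaves the correctness of its filling procedure and the final density-plus-closedness step informal (indeed the paper says it deliberately avoids the algorithmic route in this section); yours replaces that by a mechanical six-state computation plus standard sofic-shift facts, at the cost of saying less about what the preimages actually look like. Your closing observation that no block-map section $Y \to \{0,1\}^\ZZ$ can exist, by non-regularity of ECA $28$ together with Theorem~\ref{thm:MixingSFTEquivalence}, is also correct and accurately explains why both proofs must assemble preimages pointwise rather than by a single local rule.
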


\begin{proof}
Again recall that this ECA is defined by
\[ f(x)_i = 1 \iff x_{[i-1,i+1]} \in \{010, 011, 100\}. \]
The reader may also find the labeled de Bruijn graph given in Figure~\ref{fig:DeBruijnGraphs} helpful.

As we saw in the proof of Proposition~\ref{prop:ECA28WPP}, the only preimage of $11$ is $0100$, so $111$ does not have a preimage.

We describe a procedure (in fact, a type of transducer) giving preimages for transitive points (points containing every finite word that appears in the image, which are clearly dense). As we saw in the proof of Proposition~\ref{prop:ECA28WPP}, the only preimage of $11$ is $0100$, so $111$ does not have a preimage. Now, consider a transitive point in the SFT with $111$ forbidden (note that this SFT is mixing, so one exists). To construct a preimage, on top of every $11$ put the word $0100$. Now start filling the gaps to the right from each $0100$ already filled. For the leftmost run of $0$s, write $0$s on top. When you reach the first $1$, write $1$ on top of it. If it is part of a $11$, you have completed the run successfully. Otherwise, start filling according to the shift-and-flip-rule.
\end{proof}

%For the remainder of the ECA, we use another method which is easier to carry out by hand. This can easily be performed by hand for the de Bruijn graphs, and now making the empty set final and the universal state initial gives a DFA for the \emph{orphans}, i.e.\ the words without a preimage word.

\begin{proposition}
ECA $27$, $41$ and $58$ have proper sofic images.
\end{proposition}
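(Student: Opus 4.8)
ECA 27, 41 and 58 have proper sofic images.

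The plan is to show that each image is sofic (automatic, since images of SFTs under block maps are always sofic) and then to exhibit, for each of the three rules, a concrete witness that the image fails the defining property of an SFT. The cleanest such witness is a family of forbidden patterns of unbounded length: I would find words $w_n$, each of which is forbidden in the image (no $f$-preimage exists), but all of whose proper subwords are allowed. If the set of minimal forbidden words (first offenders) is infinite, the image cannot be an SFT by the criterion recalled in Proposition~\ref{prop:SFTDecidable} (via \cite[Exercise~1.3.8]{LiMa95}). Equivalently, and often easier to present, I would produce for each $n$ two words $a_n, b_n$ that are separately extendable in the image but such that the concatenation $a_n b_n$ is forbidden precisely because the ``defect'' created at the junction can only be resolved by inspecting arbitrarily far to one side — this is the hallmark of proper soficness.

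Concretely, I would read off each local rule and locate a ``synchronization obstruction.'' For ECA 41 ($000,011,101\to 1$) the weak periodic point condition already fails (an image point $(010)^\ZZ$ has a $6$-periodic but no $3$-periodic preimage), and I would leverage exactly this: by splicing a long block forcing one periodic behavior against a block forcing another, I get image words whose admissibility depends on a parity/length count across an unbounded gap, so the forbidden words cannot be bounded. For ECA 27 and 58 I would exploit the counting phenomenon already visible in the non-regularity proofs: in the Lemma for ECA 27, a single isolated feature forces the preimage's asymptotic type, and pushing two such features apart by distance $n$ creates a word that is admissible iff $n$ satisfies a global constraint. The key step for each is to track, via the de Bruijn graph in Figure~\ref{fig:DeBruijnGraphs}, how a preimage edge-path is forced once one picks an entry state, and to show the two ends of a long block impose incompatible forcing for all large $n$.

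I expect the main obstacle to be the bookkeeping of exactly which words are first offenders, i.e. verifying that the unbounded family $w_n$ is genuinely forbidden in the image while every strict subword remains admissible. Showing $w_n \notin \lang(X)$ amounts to checking that no path in the de Bruijn graph of length $|w_n|$ carries $w_n$ as its label sequence, which is a finite but $n$-dependent reachability argument; showing the subwords are admissible requires producing explicit preimages, which is routine but must be done uniformly in $n$. A convenient shortcut, which I would use wherever the image turns out to be the image of a specific periodic obstruction, is simply to apply Proposition~\ref{prop:soficnotsplit} in reverse: since these three CA are already proven non-regular and, by Table~\ref{tab:Conditions}, the non-regularity of ECA 27, 41, 58 is attributed to proper soficness, I need only confirm the image is not SFT; the cleanest rigorous route remains exhibiting the infinite first-offender family above, and I would present that family explicitly for each of the three rules rather than invoking the abstract decision procedure of Proposition~\ref{prop:SFTDecidable}.
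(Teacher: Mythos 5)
Your overall strategy is the right one, and it is in fact how the paper proceeds: soficity of the image is automatic, and non-SFTness is witnessed by unboundedly long obstructions found by analyzing the de Bruijn automaton (the paper's remark following this proposition says explicitly that the proofs were found via the subset construction and cycle analysis, then rephrased in terms of synchronizing words). However, your proposal stops at the strategy, and the entire mathematical content of this proposition is the explicit witnesses together with their verification --- you never produce them. The paper does: for ECA $27$ the word $0100$ is synchronizing for the de Bruijn automaton, and $\INF 0110(00)^n1\INF$ lies in the image while $\INF 01100(00)^n1\INF$ does not; for ECA $41$ the word $111$ is synchronizing, and $\INF 0\,111\,00(000)^n11\,0\INF$ is in the image while $\INF 0\,111\,0(000)^n11\,0\INF$ is not; for ECA $58$ the word $0100$ is synchronizing, and $\INF 010\INF$ is in the image while $\INF 010^n10\INF$ is not for any $n$. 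In each pair the two configurations differ only in the length of a single long run, so for $n$ large they have exactly the same set of length-$k$ subwords for any fixed $k$; an SFT defined by forbidden words of length $k$ would therefore have to contain both or neither, a contradiction. Without some such named and checked family, you have a plan rather than a proof; your own text concedes that the ``bookkeeping'' is the main obstacle, but that bookkeeping \emph{is} the proof, and the synchronizing words are precisely the tool that makes it a finite, local check rather than an $n$-dependent reachability argument.

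Separately, discard the proposed ``shortcut'': it is doubly flawed. Proposition~\ref{prop:soficnotsplit} states that regularity implies SFT image, so its contrapositive derives non-regularity \emph{from} a proper sofic image; you cannot run it in the direction you suggest, since non-regularity does not imply a proper sofic image (ECA $9$ and $28$ are non-regular with SFT images). Moreover, Table~\ref{tab:Conditions} is a summary of the results of this very section, including this proposition, so citing it here assumes exactly what is to be proven.
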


\begin{proof}
For ECA $27$,
the word $0100$ is synchronizing for the de Bruijn automaton. It is then easy to verify that for large $n$, $\INF 0 1 1 0 (00)^n 1 \INF$ is in the image subshift, but $\INF 0 1 1 0 0 (00)^n 1 \INF$ is not, which immediately shows that the image is proper sofic.

For ECA $41$, we similarly see that $111$ is synchronizing, and then that $\INF 0 111 00 (000)^n 11 0 \INF$ is in the image subshift, but $\INF 0 111 0 (000)^n 11 0 \INF$ is not. For ECA $58$ we similarly see that $0100$ is synchronizing, and from this it is clear that $\INF 0 1 0 \INF$ is in the image subshift, but $\INF 0 1 0^n 1 0 \INF$ is not.
\end{proof}

\begin{remark}
We found the above proofs by performing the subset construction on the de Bruijn automaton and looking for cycles. Analyzing cycles is a general way to prove proper soficity, in the sense that if we have a DFA for the language of a proper sofic subshift, we can always find words $u, v, w$ such that $uv^n$ and $v^nw$ are in the language for all $n$, but the words $u v^n w$ are not (for any $n$). To see this, apply a suitable variant of the pumping lemma to a long first offender. Such a triple is also easy to verify from the DFA. In the case of the ECA, we found cycles consisting of singleton states (in the subset DFA), and following the suggestion of the anonymous referee, we rephrased the proof in terms of synchronizing words instead.
\end{remark}

\begin{proposition}
\label{prop:ECA9Image}
ECA 9 has SFT image with first offenders
\[1011, 10101, 11001, 11000011, 110000101. \]
\end{proposition}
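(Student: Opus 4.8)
The plan is to show that the image $X$ of the ECA $9$ is exactly the SFT defined by the listed forbidden words, by establishing two inclusions. The ``easy'' direction is to verify that each of the five words $1011$, $10101$, $11001$, $11000011$, $110000101$ genuinely has no $f$-preimage, so that $X$ is contained in the SFT $Y$ they define. The harder direction is to show that every point of $Y$ is actually in the image of $f$, i.e.\ $Y \subseteq X$, which gives $X = Y$ and simultaneously certifies that these are \emph{all} the first offenders (no shorter or different forbidden word is needed).

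First I would recall the local rule $000, 011 \to 1$, all else $\to 0$, and read off from the de Bruijn graph in Figure~\ref{fig:ECA9} which words are synchronizing. As noted in the proof that ECA $9$ satisfies the weak periodic point condition, the words $11$ and $101$ force a unique preimage of length $\geq 2$; I would make this precise by tracking, for a candidate image word, the set of de Bruijn edges (pairs $x_{[i-1,i+1]}$) that could have produced it, i.e.\ by running the subset construction on the de Bruijn automaton. To prove that each listed word is a first offender, I would exhibit for each the backward analysis showing the preimage edge-set becomes empty: reading the word right-to-left (as in Lemma~\ref{lem:9}), at each step intersect the admissible incoming edges with those consistent with the next output symbol, and check that the process dies exactly at the given length while every proper subword still has a valid preimage (so the word is \emph{minimal} forbidden).

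For the reverse inclusion $Y \subseteq X$, the plan is to give a preimage-construction procedure for points of $Y$, in the same transducer style used in Proposition~\ref{prop:ECA28Image}. Concretely, since $11$ and $101$ are synchronizing, every occurrence of such a pattern pins down the preimage locally; between these synchronizing anchors the point looks like a run governed by the ``shift-and-flip'' section that works on $\lang^{-1}((1000^*)^*)$, as observed in the WPP proof. I would argue that the constraints imposed by $Y$ (namely the absence of the five offenders) are exactly what is needed for these local preimage choices to be mutually compatible and extendable across the whole line, so that a global preimage exists. By density one can first handle transitive points and then pass to the closure, concluding $Y \subseteq X$.

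The main obstacle is the bookkeeping in the second inclusion: one must verify that the forbidden-word constraints are \emph{sufficient} (not merely necessary) for a consistent preimage, which amounts to checking that the only obstructions to gluing the locally-forced preimage blocks are precisely the five offenders. I expect this is cleanest to organize by the subset-construction DFA for the image language: the claim $X = Y$ is equivalent to saying that DFA, minimized, recognizes exactly the SFT $Y$, and the five first offenders are the minimal words driving it to the empty state. In practice I would either cite Proposition~\ref{prop:SFTDecidable} and the accompanying computation, or give the explicit finite case analysis on the de Bruijn graph, the latter being more in keeping with the ``ad hoc human proofs'' promised at the start of Section~\ref{sec:SFTImages}.
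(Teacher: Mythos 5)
Your proposal is correct in substance, and in its ``practical'' form it coincides with the paper's proof: the paper performs the subset construction and complementation on the de Bruijn automaton to obtain a DFA for the orphans (Figure~\ref{fig:DFAECA9}), renames each state by the maximal relevant suffix it tracks (Figure~\ref{fig:Renamed}), and observes that the resulting automaton is exactly the one detecting occurrences of the five listed words, the key point being that having seen $1100001$ is equivalent to having seen $101$. This single identification yields both inclusions and the minimality of the offenders at once. Where you genuinely differ is in your primary plan: splitting the claim into $X \subseteq Y$ (a finite orphan check) and $Y \subseteq X$ (a transducer-style preimage construction with synchronizing anchors $11$ and $101$, shift-and-flip filling in between, and a density argument over transitive points), in the style of Proposition~\ref{prop:ECA28Image} for ECA 28. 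That route is viable in principle ($Y$ is mixing, since all five offenders begin and end with $1$ and contain at most four consecutive $0$s, so transitive points are dense and the closure step is sound), but its crux --- that avoiding the five words is \emph{sufficient} for the locally forced preimage blocks to glue consistently --- is precisely the bookkeeping you flag as ``the main obstacle'' and never discharge; the automaton identification is what discharges it, which is presumably why the paper does not repeat the ECA 28 argument here. Note also a subtlety your two-inclusion version should make explicit: proving $X = Y$ together with each of the five words being an orphan whose proper factors are non-orphans shows the five are \emph{among} the first offenders, but not that they are \emph{all} of them; for that you need that every word avoiding the five lies in $\lang(Y)$, which here has a one-line proof (pad with $0$s on both sides; no offender can be created, since all offenders begin and end with $1$), and which again falls out of the DFA identification for free.
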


\begin{proof}
%After computing a DFA for the language of the orphans with the subset construction and merging the equivalent states $\{00, 01, 10\}$ and $\{00,01,10,11\}$, this is immediately seen from the automaton. Namely,
Performing the subset construction and complementation on the de Bruijn automaton, and merging the state $\{00,01,10\}$ into $\{0,1\}^2$ (as they are clearly equivalent), we obtain a DFA for the orphans of ECA $9$.

The DFA is shown in Figure~\ref{fig:DFAECA9}. By renaming the states by the maximal relevant suffix we have seen, we can directly interpret this as a DFA for one-way infinite words that lack the listed offenders, see Figure~\ref{fig:Renamed}. The main thing to note is that having seen the $1100001$ is equivalent to having seen $101$, as continuations $01, 1$ fail and continuation $00$ resets the state.
\end{proof}

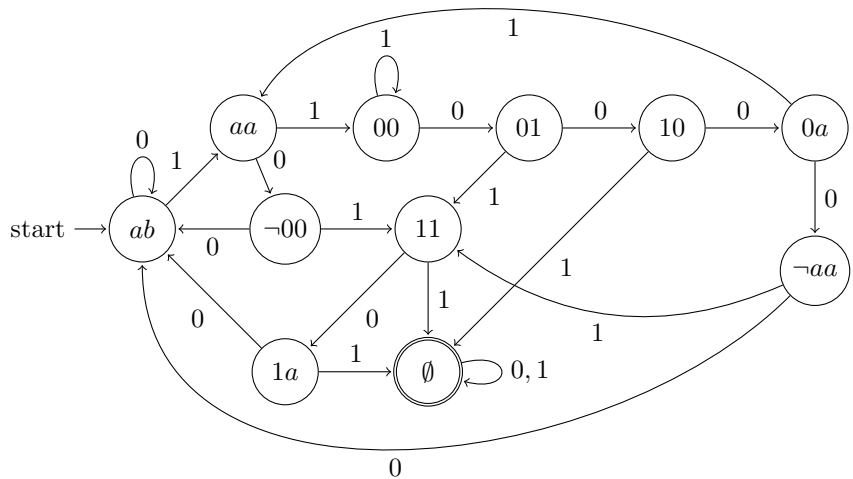
\begin{figure}
\begin{center}
\begin{tikzpicture}[shorten >=1pt,node distance=1.9cm,on grid,auto]
% \draw[help lines] (0,0) grid (3,2);
\node[state,initial]	(all)						{$ab$};
\node[state]			(zzoo)	[above right=of all]	{$aa$};
\node[state]			(zoozoo) [right=of all]	{$\neg 00$};
\node[state]			(zz)		[right=of zzoo]		{$00$};
\node[state]			(zo)		[right=of zz]			{$01$};
\node[state]			(oz)		[right=of zo]			{$10$};
\node[state]			(zzzo)	[right=of oz]		{$0a$};
\node[state]			(zooz)	[below=of zzzo]		{$\neg aa$};
\node[state]			(oo)		[right=of zoozoo]		{$11$};
\node[state,accepting]	(empty)	[below=of oo]			{$\emptyset$};
\node[state]			(ozoo)	[left=of empty]		{$1a$};

\path[->]	(all) 	edge [loop above] node {$0$} (all)
				edge 			node {$1$} (zzoo)
		(zzoo)	edge 			node {$0$} (zoozoo)
				edge 			node {$1$} (zz)
		(zoozoo)	edge 			node {$0$} (all)
				edge 			node {$1$} (oo)
		(zz)		edge 			node {$0$} (zo)
				edge [loop above]	node {$1$} (zz)
		(zo)		edge 			node {$0$} (oz)
				edge 			node {$1$} (oo)
		(oz)		edge 			node {$0$} (zzzo)
				edge 			node {$1$} (empty)
		(zzzo)	edge 			node {$0$} (zooz)
				edge [out=135,in=60,looseness=0.75]	node {$1$} (zzoo)
		(zooz)	edge [out=225,in=270, looseness=1.05] 			node {$0$} (all)
				edge [bend left=30]		node {$1$} (oo)
		(oo)		edge 			node {$0$} (ozoo)
				edge 			node {$1$} (empty)
		(ozoo)	edge 			node {$0$} (all)
				edge 			node {$1$} (empty)
		(empty)	edge [loop right]  node {$0,1$} (empty)
;
\end{tikzpicture}
\end{center}
\caption{DFA for orphans of ECA 9. For conciseness we use the binary variables $a,b$ to express arbitrary bits, e.g.\ $aa$ represents the state $\{00, 11\}$. By $\neg$ we denote complement, e.g. $\neg aa$ represents $\{01, 10\}$.}
\label{fig:DFAECA9}
\end{figure}

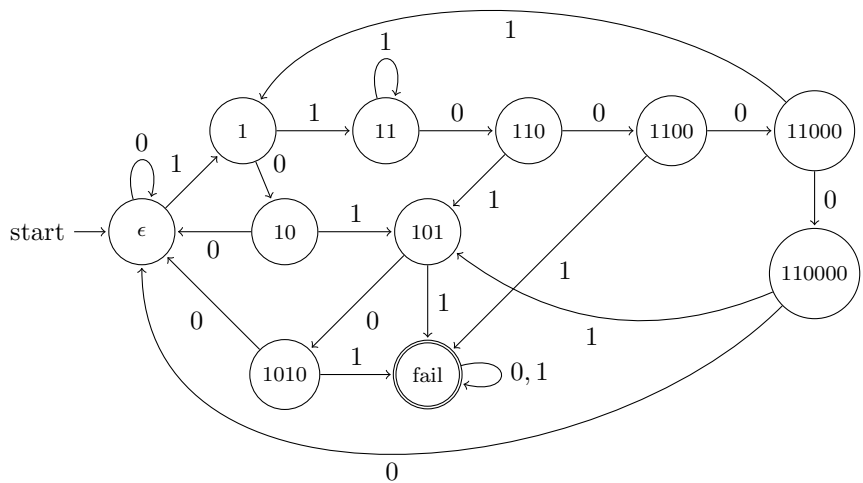
\begin{figure}
\begin{center}
\begin{tikzpicture}[shorten >=1pt,node distance=1.9cm,on grid,auto]
% \draw[help lines] (0,0) grid (3,2);
\node[state,initial]	(all)						{\footnotesize$\epsilon$};
\node[state]			(zzoo)	[above right=of all]	{\footnotesize$1$};
\node[state]			(zoozoo) [right=of all]		{\footnotesize$10$};
\node[state]			(zz)		[right=of zzoo]		{\footnotesize$11$};
\node[state]			(zo)		[right=of zz]			{\footnotesize$110$};
\node[state]			(oz)		[right=of zo]			{\footnotesize$1100$};
\node[state]			(zzzo)	[right=of oz]			{\footnotesize$11000$};
\node[state]			(zooz)	[below=of zzzo]		{\footnotesize$110000$};
\node[state]			(oo)		[right=of zoozoo]		{\footnotesize$101$};
\node[state,accepting]	(empty)	[below=of oo]			{\footnotesize fail};
\node[state]			(ozoo)	[left=of empty]		{\footnotesize$1010$};

\path[->]	(all) 	edge [loop above] node {$0$} (all)
				edge 			node {$1$} (zzoo)
		(zzoo)	edge 			node {$0$} (zoozoo)
				edge 			node {$1$} (zz)
		(zoozoo)	edge 			node {$0$} (all)
				edge 			node {$1$} (oo)
		(zz)		edge 			node {$0$} (zo)
				edge [loop above]	node {$1$} (zz)
		(zo)		edge 			node {$0$} (oz)
				edge 			node {$1$} (oo)
		(oz)		edge 			node {$0$} (zzzo)
				edge 			node {$1$} (empty)
		(zzzo)	edge 			node {$0$} (zooz)
				edge [out=135,in=60,looseness=0.75]	node {$1$} (zzoo)
		(zooz)	edge [out=225,in=270, looseness=1.05] 			node {$0$} (all)
				edge [bend left=30]		node {$1$} (oo)
		(oo)		edge 			node {$0$} (ozoo)
				edge 			node {$1$} (empty)
		(ozoo)	edge 			node {$0$} (all)
				edge 			node {$1$} (empty)
		(empty)	edge [loop right]  node {$0,1$} (empty)
		;
\end{tikzpicture}
\end{center}
\caption{The DFA from Figure~\ref{fig:DFAECA9} with states renamed with the suffix they are tracking.}
\label{fig:Renamed}
\end{figure}

\section{Von Neumann regularity of elementary CA}
\label{sec:Searches}

\begin{theorem}
The elementary CA with numbers 6, 7, 23, 33, 57 and 77 are regular.
\end{theorem}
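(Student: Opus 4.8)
The plan is entirely constructive: for each of the six rules I would exhibit an explicit cellular automaton $g : \{0,1\}^\ZZ \to \{0,1\}^\ZZ$ satisfying $f g f = f$, where $f$ denotes the ECA in question. Such a $g$ is by definition a weak inverse of $f$, so by the first lemma of Section~3 (a weak inverse yields a generalized inverse, hence regularity) any one of them witnesses that $f$ is von Neumann regular. No explicit description of the image subshift is needed for this positive direction, although conceptually $g$ is exactly the object Theorem~\ref{thm:MixingSFTEquivalence} tells us to look for: a right inverse of the codomain restriction $f : \{0,1\}^\ZZ \to f(\{0,1\}^\ZZ)$, extended arbitrarily to the full shift. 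The crucial feature is that \emph{verification} is a finite computation. Both $fgf$ and $f$ are cellular automata on the full shift, and two cellular automata coincide precisely when their local rules agree over a common window; once a candidate $g$ is written down, expressing $fgf$ and $f$ as local rules of a common radius and comparing them on all words of that length settles $fgf = f$ by a bounded search.

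The real work is locating $g$. First I would fix a candidate radius $r$ and attempt to determine the local rule of $g$ by combining two ingredients. The weak periodic point condition, which by the remark following Theorem~\ref{thm:SplitEpicSolved} is necessary for split epicness and hence for regularity, forces any right inverse to send each periodic point of the image to a preimage of the \emph{same} period; running the periodic-point semialgorithm of Section~4 on short periodic points therefore pins down many values of the local rule of $g$, since a section must eventually begin writing a periodic preimage once it only sees the periodic pattern. The local-rule values left undetermined by these constraints are then filled in by brute-force enumeration, and each completed candidate is tested against the identity $fgf = f$ as above.

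The main obstacle is the size of the search space: there are $2^{2^{2r+1}}$ local rules of radius $r$, so naive enumeration is infeasible beyond very small $r$, and the constraints extracted from periodic points are what make the search terminate quickly in practice. The paper reports that radius at most four always suffices for these six rules, so one increases $r$ from $0$ upward and, guided by the periodic-point constraints, finds a witness well before the combinatorics become prohibitive. Concretely, the proof amounts to recording, for each of ECA $6, 7, 23, 33, 57$ and $77$, one explicit local rule $g$ of radius at most $4$ together with the (computer-checked) equality $fgf = f$; the optimality of each stated radius is then confirmed separately by verifying that no candidate inverse of smaller radius passes the same test.
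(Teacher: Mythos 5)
Your proposal is correct and is essentially the paper's own proof: the paper also finds explicit weak inverses by a computer search that proceeds through radii in increasing order, uses the same-period preimage constraints on periodic points to pin down entries of the candidate local rule, fills in the remaining entries by recursive enumeration, and verifies the resulting rules (listed as hex codes, with optimal radii $2$--$4$) by an independent finite check of $fgf = f$. The only cosmetic difference is that the paper organizes the constraint step as a fixed-point propagation between the periodic-point choices and the partial local rule, rather than constraints followed by plain brute force.
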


\begin{proof}
We found weak inverses by computer. A program (written for Python version 3.9) for finding inverses can be found at \cite{Sa22RegularScripts} in {\tt find\_weak\_inverses.py}. In Section~\ref{sec:Procedure} we describe the simple procedure used to find the weak inverses, and in Section~\ref{sec:Inverses} we give hexadecimal codes for them.
\end{proof}

\subsection{The procedure}
\label{sec:Procedure}

To find the weak inverses for $6$, $7$, $23$, $33$, $57$ and $77$, we used the following very simple procedure (executed by computer). We go through the possible radii in increasing order, and for each we enumerate the (possibly empty) list of weak inverses of that radius, until we find the first radius for which an inverse exists.

For a given radius $r$, the weak inverses $g$ can be listed as follows. First, we consider all periodic points up to some period $p$, and compute their periodic images under the ECA $f$ being considered. Each periodic point in the image of $f$ must be mapped by $g$ to one of its $f$-preimages with the same period, and we keep track of the possible choices with a mapping $d$ from periodic points (i.e.\ cyclic words) to their possible $g$-images. We then compute the words of length $2r+1$ in the image of the ECA, and construct a partial local rule for the weak inverse, call this local rule $\ell$ (initially all the entries hold an unknown value $?$).

We then iterate the following simplification steps, which are simply the obvious mutual consistency conditions for $d$ and $\ell$, until reaching a fixed point:
\begin{itemize}
\item If a periodic point has only one preimage left in $d$, set the corresponding values in $\ell$ and drop this point from $d$.
\item If some possible choices in $d$ are inconsistent with $\ell$, remove them.
\end{itemize}
If after reaching the fixed point we have not yet determined the $\ell$-image of some word, pick a random such word, recursively try both values for the $\ell$-image and enumerate the results.

We initially also used the following consistency condition:
\begin{itemize}
\item If all the possible preimages of a particular periodic point have a particular bit $b$ in position $i$, then we can safely make $\ell$ map the word at $[i-r, i+r]$ to $b$. 
\end{itemize}
We also experimented with more sensible choices for the choice of a new image in $\ell$. With a naive implementation, these only made the calculation slower, but such ideas (and even the strong periodic point condition) might be more useful with larger neighborhoods and state sets.

Finding the inverses using this procedure takes a few seconds with the highly non-optimized Python program {\tt find\_weak\_inverses.py} in \cite{Sa22RegularScripts}. For ECA $7$, finding the unique weak inverse takes only a few minutes when performed by hand. The optimal radii and $p$ used are listed in Table~\ref{tab:Optimals}, and we give the actual inverses in the following section.

\begin{table}
\begin{center}
\begin{tabular}{|l|l|l|l|}
 \hline 
 & optimal $r$ & number of inverses & recommended $p$ \\
 \hline
ECA 6 & 3 & 2 & 9 \\
\hline
ECA 7 & 2 & 1 & 6 \\
\hline
ECA 23 & 2 & 1 & 5 \\
\hline
ECA 33 & 2 & 1 & 6 \\
\hline
ECA 57 & 4 & 32 & 11 \\
\hline
ECA 77 & 2 & 1 & 6 \\
\hline
\end{tabular}
\end{center}
\caption{For the regular ECA, the optimal radius for a weak inverse, the number of inverses (counting only the possible behaviors on the words in the image of the ECA), and the $p$-value we used when executing the algorithm.}
\label{tab:Optimals}
\end{table}

\subsection{The weak inverses}
\label{sec:Inverses}

In the conference version \cite{Sa21a} of this paper, we picked the behavior of the weak inverse rules on an ad hoc basis on the words not in the image subshift, in order to get a nice presentation for each rule. Here, as we give the exhaustive list, it seems cleaner to simply output $0$ on the inputs that do not appear in the image subshift, which means that the rule numbers are different from those reported in the conference version even in the case when the weak inverse is ``unique''. 

To get all weak inverses of the optimal radius $r$, take one of the rules listed in the propositions below, compute the words $L$ in the image subshift of the ECA in question (for example by applying the ECA to all words of length $2r+3$), and change $0$ to $1$ on any of the words in the complement of $L$. These propositions are simply a summary of the output of the program {\tt find\_weak\_inverses.py} in \cite{Sa22RegularScripts}. There is also a separate (independent) script {\tt check\_weak\_inverses.py} in \cite{Sa22RegularScripts} that verifies these propositions.

\begin{proposition}
The radius $3$ weak inverses for ECA $6$, which map to $0$ outside the image subshift, have hex codes
\[ 00000e030f03000f00000e0f0f030e0{*} \]
where ${*}$ is replaced by one of $7, f$.
\end{proposition}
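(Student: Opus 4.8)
The plan is to reduce the statement to a finite feasibility check and then exhibit exactly two solutions. Recall that ECA $6$ is the map $f(x)_i = 1 \iff x_{[i-1,i+1]} \in \{001, 010\}$; write $X = f(\{0,1\}^\ZZ)$ for its image. By the discussion preceding Theorem~\ref{thm:MixingSFTEquivalence}, a CA $g$ is a weak inverse of $f$ precisely when $fgf = f$, which is equivalent to $f \circ g$ restricting to the identity on $X$: indeed $f(g(f(x))) = f(x)$ for all $x$ says exactly $f(g(y)) = y$ for all $y \in X$. When $g$ has radius $3$, its value $g(y)_i$ on a point $y \in X$ depends only on the length-$7$ window $y_{[i-3,i+3]} \in \lang(X)$, so the weak-inverse property constrains the local rule of $g$ only on the finite set $\lang_7(X)$ of length-$7$ words of $X$. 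The clause ``which map to $0$ outside the image subshift'' merely fixes the (irrelevant) remaining entries to $0$, selecting the canonical representative whose $128$ local-rule values are encoded by the $32$-hex-digit code. Thus it suffices to classify all functions $\lang_7(X) \to \{0,1\}$ for which $f \circ g = \ID$ on $X$.

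First I would turn this into a finite check. Since $f$ has radius $1$ and $g$ radius $3$, the composite $f \circ g$ has radius $4$, so $(f \circ g)(y)_i$ depends on $y_{[i-4,i+4]}$. Hence $f \circ g = \ID$ on $X$ holds if and only if, for every length-$9$ word $w \in \lang_9(X)$, the composite local rule sends $w$ to its central symbol. Both $\lang_7(X)$ and $\lang_9(X)$ are computable finite sets: the image language of $f$ is sofic and an automaton for it is obtained from the de Bruijn automaton by the subset construction, exactly as in Section~\ref{sec:SFTImages}. This recasts the classification as a Boolean constraint-satisfaction problem whose variables are the values $g\text{-rule}(u)$ for $u \in \lang_7(X)$ and whose constraints are indexed by $\lang_9(X)$.

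Next I would solve the problem by the constraint propagation of Section~\ref{sec:Procedure}: each length-$9$ constraint either forces the value of some length-$7$ window or prunes inconsistent choices, and iterating to a fixpoint should force every relevant rule entry except one. That single undetermined bit is what produces two solutions, consistent with the stated answer, in which the two codes differ only in the final hex digit ($7$ versus $f$, a one-bit difference). To finish I would verify directly --- a finite computation --- that each of the two resulting rules does satisfy $f \circ g = \ID$ on $X$, and that completeness holds because the propagation leaves a unique branch point; this is precisely what the scripts \texttt{find\_weak\_inverses.py} and the independent \texttt{check\_weak\_inverses.py} of \cite{Sa22RegularScripts} carry out.

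The hard part will not be conceptual --- the whole question is a finite feasibility check --- but rather the bookkeeping needed to do it rigorously by hand. At radius $3$ there are $128$ potential rule entries and the image language of ECA $6$ is far less transparent than those of ECA $9$ or $28$, so carrying out the propagation by hand is error-prone; the genuine content of a human proof would be to identify $\lang_7(X)$ and $\lang_9(X)$ explicitly and to track which windows are forced. For this reason the cleanest certification is the finite verification performed by the accompanying script, which confirms both that the two listed rules are weak inverses and that no other normalized radius-$3$ rule is.
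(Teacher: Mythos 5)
Your reduction is correct, and in spirit your proof is the same as the paper's: the paper gives no hand proof of this proposition either, stating explicitly that it is ``simply a summary of the output of the program \texttt{find\_weak\_inverses.py}'', with soundness certified by the independent script \texttt{check\_weak\_inverses.py}. Where you genuinely differ is in how the finite problem is set up. The paper's search (Section~\ref{sec:Procedure}) is organized around periodic points: it constrains the partial local rule $\ell$ by requiring each periodic point of period $\leq p$ in the image to map to an $f$-preimage of the same period, propagates between these choices and $\ell$, and branches on undetermined entries; these periodic-point constraints are only \emph{necessary} conditions, so the procedure needs a suitable $p$ (the ``recommended $p = 9$'' of Table~\ref{tab:Optimals}) and an a posteriori check that surviving candidates really satisfy $fgf = f$. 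Your formulation -- variables indexed by $\lang_7(X)$, one constraint per word of $\lang_9(X)$ demanding that the radius-$4$ composite rule of $f \circ g$ return the central symbol -- is \emph{exactly} equivalent to $f\circ g = \ID$ on $X$, hence sound and complete by construction, with no parameter $p$ to tune and no separate soundness step logically required. What you lose is what the paper's organization buys: the periodic-point constraints are cheap, tie directly into the WPP/SPP machinery developed earlier in the paper, and guide the search efficiently, which is why the paper's procedure scales to the radius-$4$ case of ECA $57$. Two small remarks: your claim that propagation alone leaves a single branch point is speculation (harmless, since exhaustive branching covers all cases either way), and the equivalence $fgf = f \iff f\circ g|_{f(X)} = \ID$ that you invoke is really the content of the proof of Theorem~\ref{thm:Equivalence}, not of the discussion before Theorem~\ref{thm:MixingSFTEquivalence}.
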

% 00000e030f03000f00000e0f0f030e07 and 00000e030f03000f00000e0f0f030e0f

\begin{proposition}
The unique radius $2$ weak inverse for ECA $7$, which maps to $0$ outside the image subshift, has hex code $21232123$.
On the image subshift of ECA $7$, this is equivalent to the ECA $35$ composed with $\sigma$, which has hex code $23232323$.
\end{proposition}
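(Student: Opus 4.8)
The plan is to determine the on-image behaviour of a radius-$2$ weak inverse by combining one explicit candidate with a forcing argument, and then to read off the two hex codes. Write $f$ for ECA $7$, so that $f(x)_i = 1 \iff x_{[i-1,i+1]} \in \{000,001,010\}$, equivalently $f(x)_i = 1 \iff x_{i-1} = 0 \wedge \neg(x_i = x_{i+1} = 1)$, and let $X = f(\{0,1\}^\ZZ)$ be its image. As in the proof of Theorem~\ref{thm:Equivalence}, a radius-$2$ CA $g$ is a weak inverse of $f$ precisely when $f(g(y)) = y$ for every $y \in X$. Since $g$ has radius $2$, the value $f(g(y))_i$ depends only on $y_{[i-3,i+3]}$, so this is a finite condition on the restriction of the local rule of $g$ to the length-$5$ words occurring in $\lang(X)$. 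First I would compute the low-order language of $X$ (the occurring words of length $\le 7$) from the de Bruijn automaton of $f$; this also records which length-$5$ words are off-image.

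Next I would establish existence by checking that $g_0 = (\text{ECA }35) \circ \sigma$ is a weak inverse. Here $g_0(y)_i = 1 \iff (y_i, y_{i+1}, y_{i+2}) \in \{000,001,101\}$, a radius-$2$ rule depending only on coordinates $i, i+1, i+2$; these are the low-order bits of the $5$-bit window index, so its rule table has period $8$ and hence the hex code $23232323$. The identity $f(g_0(y)) = y$ for $y \in X$ then reduces to evaluating both sides on each length-$7$ word of $\lang(X)$, a finite verification that can be organised window by window.

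Then I would prove uniqueness of the on-image behaviour. Let $g$ be any radius-$2$ weak inverse. Ranging the equations $f(g(y))_i = y_i$ over the length-$7$ words $u \in \lang(X)$ yields constraints tying the central bit $u_0$ to the values of the local rule of $g$ on the three length-$5$ sub-windows of $u$. I would propagate these constraints starting from windows admitting a unique consistent preimage value, showing that every length-$5$ word of $\lang(X)$ has its $g$-value forced and equal to that of $g_0$; this gives the count ``number of inverses $=1$'' in Table~\ref{tab:Optimals}. This propagation is the main obstacle: it is a finite but delicate case analysis, and one must check that the forcing actually reaches every occurring window rather than leaving some value free. It is exactly the step automated by {\tt find\_weak\_inverses.py}, against which I would cross-check the hand argument. (The same search at radius $1$ returning no solution is what makes $2$ optimal.)

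Finally I would extract the hex codes. The canonical representative outputs $0$ on every off-image length-$5$ word and the forced (hence $g_0$) value on every occurring one, so it differs from $23232323$ exactly on the off-image words where $g_0$ outputs $1$. Now $g_0$ outputs $1$ on the twelve words with $(y_i,y_{i+1},y_{i+2}) \in \{000,001,101\}$; a short computation shows that exactly two of these, namely $01001$ and $11001$, fail to occur in $\lang(X)$ (in each case the constraints force a coordinate of the preimage to be simultaneously $0$ and $1$), while the remaining ten do occur. Flipping the values on those two words turns $23232323$ into $21232123$, which proves the first assertion; and since $01001, 11001 \notin \lang(X)$, the rules $21232123$ and $23232323$ agree on all of $X$, giving the claimed equivalence with $(\text{ECA }35)\circ\sigma$ on the image subshift.
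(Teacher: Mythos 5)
Your proposal is correct, and its computational claims check out: ECA $7$ sends exactly $000,001,010$ to $1$; the word $1001$ is indeed forbidden in the image (the constraints $x_2=0$, then $x_1=1$, then $x_0=0$ contradict $f(x)_1=0$), so the only off-image windows among the twelve on which rule $23232323$ outputs $1$ are $01001$ and $11001$ (bits $9$ and $25$), and clearing precisely those bits turns $23232323$ into $21232123$. However, your route differs in organization from the paper's. The paper offers no hand argument for this proposition at all: it is stated as a summary of the output of the search script \texttt{find\_weak\_inverses.py}, independently confirmed by \texttt{check\_weak\_inverses.py}, and the search procedure described in the paper generates its constraints from periodic points (each periodic point in the image must map to a same-period preimage) plus consistency with a partial local rule, with case-splitting on undetermined entries. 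You instead generate constraints directly from the sliding-window identity $f(g(y))_i = y_i$ on the length-$7$ words of the image language, which is an equivalent but more self-contained finite characterization of ``weak inverse of radius $2$'' that never mentions periodic points; and you carry out by hand the one genuinely illuminating piece of bookkeeping, namely that the two hex codes differ exactly on forbidden windows, which explains why the canonical inverse and $35 \circ \sigma$ agree on the image. What your approach buys is a human-verifiable proof skeleton with explicit combinatorial content; what it leaves open is the uniqueness propagation, which you correctly flag as the delicate step and defer to the same finite search the paper automates (the paper's remark that the ECA $7$ inverse can be found by hand in minutes suggests this propagation does close). One small caution: uniqueness may require case-splits, not just unit propagation, so a hand proof must either exhibit the forcing chain completely or argue that every branch of a split collapses; until that is written out, your argument for ``unique'' (and for the count in Table~\ref{tab:Optimals}) rests on the computation just as the paper's does.
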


\begin{proposition}
The unique radius $2$ weak inverse for ECA $23$, which maps to $0$ outside the image subshift, has hex code $23bb003b$.
\end{proposition}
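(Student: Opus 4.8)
The plan is to treat this as a finite, fully mechanical verification in the same style as the preceding propositions, and to separate the two independent assertions hidden in the statement: first, that the rule $g$ with hex code $23bb003b$ (extended by $0$ off the image subshift) is a weak inverse of $f = $ ECA $23$; and second, that it is the \emph{unique} radius-$2$ weak inverse once its behavior outside the image subshift is normalized to $0$. Since this normalization fixes everything off the image, uniqueness is entirely a statement about the values of $g$'s local rule on the length-$5$ words occurring in the image.

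First I would record the local rule explicitly: $f(x)_i = 1 \iff x_{[i-1,i+1]} \in \{000,001,010,100\}$, i.e.\ $f$ outputs $1$ exactly when the neighborhood contains at most one $1$. I would then compute the length-$5$ language $\mathcal{L}_5(Y)$ of the image $Y = f(\{0,1\}^\ZZ)$ by applying $f$ to all $2^7 = 128$ words of length $7$ and collecting the central length-$5$ factors; these are precisely the inputs on which a radius-$2$ weak inverse is constrained. To verify that $g$ is a weak inverse I would check $fgf = f$ directly. As $f$ has radius $1$ and $g$ has radius $2$, the composite $fgf$ has radius $1+2+1 = 4$, so $fgf$ and $f$ coincide as cellular automata if and only if they agree on all $2^9 = 512$ central windows of length $9$; this is a routine finite case check, and combined with Theorem~\ref{thm:MixingSFTEquivalence} it certifies regularity.

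The substantive part is uniqueness. By the weak-inverse condition, $f \circ g\vl_Y = \ID_Y$, so for every length-$7$ window $w = y_{[-3,3]} \in \mathcal{L}_7(Y)$ the three overlapping length-$5$ sub-windows of $w$ are sent through the unknown local rule of $g$ to a length-$3$ word that $f$ must map to $y_0$. This produces a constraint-satisfaction system whose variables are the values of $g$ on the words of $\mathcal{L}_5(Y)$, and the proposition asserts that this system has a unique solution. I would establish uniqueness by constraint propagation: first identify the image words that are synchronizing for the de Bruijn automaton of $f$, since these force individual preimage bits outright, and then propagate these forced values through the length-$7$ constraints to pin down the rest. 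Equivalently, because the variable set is small and finite, one may simply search exhaustively over all assignments of $g$ on $\mathcal{L}_5(Y)$, discarding every assignment that violates some length-$7$ constraint, and confirm that exactly one survives.

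The main obstacle is precisely this uniqueness step. Checking that the stated rule is a weak inverse is an unproblematic $512$-case evaluation, but ruling out every competing radius-$2$ weak inverse on the image requires showing that the local constraints \emph{collectively force} each image-word value rather than merely admitting the given one. This is what the independent script \texttt{check\_weak\_inverses.py} in \cite{Sa22RegularScripts} verifies; a human proof would mirror the synchronization-based forcing arguments already used for the image analyses in Section~\ref{sec:SFTImages}.
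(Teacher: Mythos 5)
Your proposal is correct, and it establishes both halves of the statement, but it organizes the finite verification differently from the paper. The paper's proof of this proposition is literally ``summary of the output of \texttt{find\_weak\_inverses.py}'': the procedure of Section~\ref{sec:Procedure} enumerates candidate radius-$2$ rules using \emph{periodic-point} constraints -- each periodic point of period at most $p$ (here $p=5$, see Table~\ref{tab:Optimals}) in the image must be sent to an $f$-preimage of the same period -- via constraint propagation and branching, and the independent script \texttt{check\_weak\_inverses.py} verifies the result; uniqueness follows because every weak inverse must occur among the enumerated candidates. You instead encode the weak-inverse property itself as the constraint system: since $fgf = f$ is equivalent to $f \circ g\vl_Y = \ID_Y$ and $f \circ g$ has radius $3$, a radius-$2$ rule $g$ is a weak inverse if and only if for every length-$7$ word $w \in \lang(Y)$ the three length-$5$ windows of $w$ are mapped by $g$'s local rule to a word that $f$ sends to the center of $w$. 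This system is \emph{exactly} equivalent to being a weak inverse, so one exhaustive solve gives existence and uniqueness simultaneously, with no choice of period $p$ and no separate verification pass needed. The paper's periodic-point formulation buys a much smaller effective search space (forced preimages pin down many rule entries at once), which matters for the harder cases such as ECA $57$, and ties into the weak/strong periodic point machinery used throughout; your window-CSP formulation is the more self-contained and logically transparent uniqueness argument. Both are sound finite computations.
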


\begin{proposition}
The unique radius $2$ weak inverse for ECA $33$, which maps to $0$ outside the image subshift, has hex code $00070707$.
\end{proposition}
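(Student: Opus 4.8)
The plan is to split the statement into an \emph{existence} claim (the displayed rule really is a radius-$2$ weak inverse) and a \emph{uniqueness} claim (no other radius-$2$ weak inverse exists, once we fix the normalisation of outputting $0$ off the image), and to treat both as finite verifications organised around the local structure of $f$ and its image. First I would decode the hex code and record ECA $33$ as $f(x)_i = 1 \iff x_{[i-1,i+1]} \in \{000,101\}$, and compute its image subshift $X = f(\{0,1\}^\ZZ)$ from the de Bruijn automaton, reading off which words occur. The crucial reduction is that the weak-inverse condition $fgf = f$ is equivalent to $f\circ g\vl_X = \ID_X$, so $g$ is constrained \emph{only} on $X$; this is precisely why the behaviour off $X$ is free and may be normalised to $0$. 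Since $f$ has radius $1$ and $g$ radius $2$, the composite $f\circ g$ has radius $3$, so $f(g(y))_i = y_i$ reads off a window $y_{[i-3,i+3]}$ of length $7$: for every length-$7$ word $w = w_{-3}\cdots w_3 \in \lang(X)$ we need
\[
f\big(g(w_{[-3,1]}),\; g(w_{[-2,2]}),\; g(w_{[-1,3]})\big) = w_0,
\]
a constraint on the $g$-values of the three length-$5$ sub-windows, each of which itself lies in $\lang(X)$.

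For existence I would substitute the decoded rule into this local identity and check it holds for every length-$7$ word occurring in $X$; this is a finite check over at most $2^7$ words (fewer, after discarding words not in the image), and by Boyle's extension lemma (via the extension lemma for mixing SFTs) the resulting section extends to a genuine CA on the full shift, so it is an honest weak inverse in the sense of Theorem~\ref{thm:MixingSFTEquivalence}.

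For uniqueness I would argue by forcing. I would first pin down the rule on the ``boundary'' configurations using the weak periodic point condition: $0^\ZZ$ and $1^\ZZ$ both lie in $X$, and each has a \emph{unique} period-$1$ $f$-preimage (namely $1^\ZZ$ and $0^\ZZ$, respectively, since $f(1^\ZZ)=0^\ZZ$ and $f(0^\ZZ)=1^\ZZ$), which forces $g(00000)=1$ and $g(11111)=0$. Iterating this with further short-period points and then propagating through the length-$7$ local constraints above, anchoring the propagation at synchronizing words of $X$, I would show that every length-$5$ word of $\lang(X)$ receives a uniquely determined $g$-value, and that these agree with the decoded rule $00070707$.

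The main obstacle is this uniqueness propagation. Existence is a direct bounded check, but uniqueness requires certifying that the constraint system leaves \emph{no} freedom for \emph{any} length-$5$ image word. The delicate point is to ensure each such word can be embedded into a longer image word (or a periodic point) whose $f$-preimage is forced, so that no undetermined degree of freedom survives the propagation; locating the synchronizing words and periodic points that drive this forcing is the crux of the argument, and is exactly what the exhaustive enumeration in {\tt find\_weak\_inverses.py} performs mechanically.
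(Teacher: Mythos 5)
Your proposal is correct and is essentially the paper's own approach: the paper's ``proof'' is precisely the finite constraint-satisfaction check you describe (periodic points force $g$-values such as $g(00000)=1$, $g(11111)=0$; local consistency of the radius-$2$ rule with $f\circ g\vl_X = \ID_X$ is propagated and any residual freedom is resolved by exhaustive branching), executed mechanically by {\tt find\_weak\_inverses.py} and re-verified by {\tt check\_weak\_inverses.py}, with the length-$7$-window reformulation being an equivalent packaging of that check. The only cosmetic difference is your appeal to Boyle's extension lemma for existence, which is unnecessary here: on a full shift the partial rule extends to a CA trivially by filling in arbitrary (here, zero) values, as the paper notes after Theorem~\ref{thm:Equivalence}.
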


\begin{proposition}
The radius $4$ weak inverses of ECA $57$, which map to $0$ outside the image subshift, are
\begin{align*} & 0000000000fffe0000e{*}feff0000feff0000fc{*}f0003feff00e{*}f{*}000000feff... \\
& 00000000000ffeff00e{*}fe0f0000feff0000fc{*}f00{*}3fe0000e{*}f{*}000000feff 
\end{align*}
(this is one word spread over two lines) where ${*}{*}{*}{*}{*}{*}{*}{*}{*}$ is replaced with one of
\[ 303e3003e, 307e3307e, bc3eb003e, bc7eb307e, 303630c36, 307633c76, bc36b0c36, \]
\[ bc76b3c76, 30b63ccb6, bcb6bccb6, 30be3c0be, bcbebc0be, 30f63fcf6, 30fe3f0fe, \]
\[ bcf6bfcf6, bcfebf0fe, 733e7003e, 737e7307e, ff3ef003e, ff7ef307e, 733670c36, \]
\[ ff36f0c36, 737673c76, ff76f3c76, 73be7c0be, 73b67ccb6, ffbefc0be, ffb6fccb6, \]
\[ 73f67fcf6, 73fe7f0fe, fffeff0fe, fff6ffcf6. \]
\end{proposition}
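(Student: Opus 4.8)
The plan is to treat this as a finite, computer-verifiable statement and to split it into a \emph{correctness} part (each listed rule really is a weak inverse) and a \emph{completeness} part (there are no others). By Theorem~\ref{thm:MixingSFTEquivalence} and the full-shift extension remark following it, a radius-$4$ CA $g : \{0,1\}^\ZZ \to \{0,1\}^\ZZ$ is a weak inverse of $f = $ ECA $57$ precisely when its restriction to the image subshift $X = f(\{0,1\}^\ZZ)$ is a section of $f : \{0,1\}^\ZZ \to X$; indeed $fgf = f$ is equivalent to $f(g(y)) = y$ for all $y \in X$, since every input handed to the inner $g$ has the form $f(x) \in X$. Consequently the value of $g$ off $X$ is irrelevant to $fgf = f$, and fixing it to $0$ turns the correspondence between radius-$4$ weak inverses and radius-$4$ sections into a bijection. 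It therefore suffices to enumerate the radius-$4$ sections of $f$.

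For correctness I would note that $fgf$ has radius $6$, so the equality $fgf = f$ of global maps holds iff the two local rules agree at the central coordinate on every window of length $13$. For each of the $32$ candidate rules obtained from the template this is a check over $2^{13}$ windows; because the intermediate bits fed into the inner $g$ are always coordinates of some $f(x) \in X$, the $0$-convention off $X$ never interferes. This is exactly the independent verification carried out by {\tt check\_weak\_inverses.py} in \cite{Sa22RegularScripts}.

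For completeness I would phrase the section condition $f \circ g = \ID_X$ as a finite constraint satisfaction problem over the local rule of $g$, viewed as a map $\ell : \lang(X) \cap \{0,1\}^9 \to \{0,1\}$. Since $f \circ g$ has radius $5$, the condition is equivalent to demanding, for every length-$11$ word $w \in \lang(X)$, that applying $f$ to the three bits $\ell(w_{[1,9]}), \ell(w_{[2,10]}), \ell(w_{[3,11]})$ reproduce the central symbol of $w$. First I would compute the (proper sofic) image subshift of ECA $57$ and extract its length-$9$ and length-$11$ languages; then I would run the constraint-propagation-and-search procedure of Section~\ref{sec:Procedure}: install the values of $\ell$ forced by periodic points, propagate the mutual consistency conditions between the preimage bookkeeping $d$ and the partial rule $\ell$, and branch on the still-undetermined window values. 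The set of solutions produced is, by the assertion of the proposition, exactly the $32$-element family encoded by the nine free hexadecimal positions of the template.

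The main obstacle is that this is a genuine computer-assisted exhaustive enumeration rather than a short human argument. With radius $4$ the local rule has $2^9$ potential entries, and one must both correctly describe the image language of ECA $57$ --- which, having the largest optimal radius among the six regular rules, is the hardest case --- and trust that the search over length-$9$ window assignments is complete. The mathematical content reduces to a finite check, but establishing that \emph{exactly} $32$ solutions exist and that they coincide with the stated hex codes after the $0$-filling convention is precisely the part that cannot reasonably be done by hand; it is delegated to {\tt find\_weak\_inverses.py}, with {\tt check\_weak\_inverses.py} supplying an independent confirmation of the output.
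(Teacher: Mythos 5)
Your proposal is correct and takes essentially the same route as the paper: the paper's ``proof'' of this proposition is precisely the finite enumeration of Section~\ref{sec:Procedure} executed by {\tt find\_weak\_inverses.py} and independently confirmed by {\tt check\_weak\_inverses.py}, and your reduction to a finite check (sections of $f$ on its image, windows of length $13$ for correctness and $11$ for completeness) just makes explicit what the paper leaves implicit. One small correction: the image subshift of ECA $57$ is \emph{not} proper sofic, as your parenthetical claims --- by Proposition~\ref{prop:soficnotsplit} a regular CA on an SFT has SFT image, so if the image were proper sofic the proposition you are proving would be vacuous; this does not affect your argument, since the procedure only uses the (regular) language of the image.
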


\begin{proposition}
The unique radius $2$ weak inverse for ECA $77$, which maps to $0$ outside the image subshift, has hex code $00733177$.
\end{proposition}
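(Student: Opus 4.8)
The statement is a finite, verifiable claim about a single radius-$2$ local rule, so the plan is to reduce both \emph{existence} and \emph{uniqueness} of the weak inverse to explicit finite checks. First I would record the two local rules. ECA $77$ sends $x_{[i-1,i+1]}$ to $1$ exactly on the inputs $000,010,011,110$; equivalently it outputs $\overline{x_{i-1}}\wedge\overline{x_{i+1}}$ when $x_i=0$ and $\overline{x_{i-1}\wedge x_{i+1}}$ when $x_i=1$, and in particular it is symmetric under reversal. The candidate inverse $g$ is the radius-$2$ rule with hex code $00733177$, i.e. the length-$5$ local rule read off from the binary expansion $0000\,0000\,0111\,0011\,0011\,0001\,0111\,0111$.

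For existence I would invoke Theorem~\ref{thm:Equivalence} and Theorem~\ref{thm:MixingSFTEquivalence}: since $\{0,1\}^\ZZ$ is a mixing SFT, $g$ is a weak inverse of $f$ precisely when $f\circ g$ restricts to the identity on the image subshift $X=f(\{0,1\}^\ZZ)$, equivalently when $f\circ g\circ f=f$ on the full shift. Now $f\circ g\circ f$ is a CA of radius at most $1+2+1=4$ while $f$ has radius $1$, so the two maps agree everywhere as soon as they agree at the centre of every word of length $9$; this is a check over $2^9=512$ windows, exactly what {\tt check\_weak\_inverses.py} of \cite{Sa22RegularScripts} performs. Because ECA $77$ is regular, Proposition~\ref{prop:soficnotsplit} tells us $X$ is in fact an SFT, so one may instead verify $f\circ g=\ID$ only on the length-$7$ words of $\lang(X)$, which is the more economical check.

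The harder half is uniqueness, and the main obstacle is that the weak periodic point condition is only \emph{necessary} for being a weak inverse (the examples of ECA $9$ and $28$ show it is not sufficient), so it cannot on its own pin the rule down. I would proceed as in Section~\ref{sec:Procedure}: treat the value $g(u)$ for each length-$5$ word $u\in\lang(X)$ as an unknown; the requirement $f\circ g=\ID$ on $X$ then imposes, for each image point, a local constraint relating three consecutive $g$-values to a single image symbol. Propagating the same-period-preimage constraints coming from periodic points of period at most $p=6$ (the value recorded in Table~\ref{tab:Optimals}) together with mutual consistency narrows the admissible local rules on $\lang(X)$ to a short finite list, and I would test each surviving candidate against the finite weak-inverse check of the previous paragraph; exactly one passes. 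The values of $g$ on length-$5$ words outside $\lang(X)$ are irrelevant to the weak-inverse property and are normalised to $0$, giving precisely $00733177$. To make uniqueness human-checkable rather than purely machine-verified, I would, for each length-$5$ word $u\in\lang(X)$, exhibit a concrete eventually periodic image point containing $u$ whose consistent preimage bit over that window forces $g(u)$ to its claimed value, mirroring the forcing arguments of Section~\ref{sec:NonRegularityOfECA}. I expect this last step to be routine but tedious bookkeeping; all the conceptual content sits in the reduction to a finite constraint system.
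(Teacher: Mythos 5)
Your proposal is correct and takes essentially the same route as the paper: the paper proves this proposition purely as the output of the finite constraint-propagation-and-branching search of Section~\ref{sec:Procedure} (with $p=6$ from Table~\ref{tab:Optimals}), independently confirmed by a verification script, which matches your reduction of existence to the finite check of $f\circ g\circ f=f$ on length-$9$ windows and of uniqueness to exhaustive enumeration of the possible values of $g$ on $\lang(X)$. One small caveat: your aside that the check may be restricted to $\lang(X)$ ``because ECA $77$ is regular, hence $X$ is an SFT'' is circular (regularity is the conclusion) and also unnecessary, since the length-$7$ words of $\lang(X)$ are computable from soficity of the image alone; with that remark dropped, your argument coincides with the paper's.
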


\section{Most cellular automata are non-regular}

In the conference version \cite{Sa21a} of this paper, we stated:
\begin{quote}[\textbf{...}] Based on this, one might conjecture that regularity is common in cellular automata as radius grows.
\end{quote}

We show that in fact most cellular automata are non-regular, for rather uninteresting reasons, namely the weak periodic point condition fails with high probability. Post-composition with a shift preserves regularity, so we restrict to CA with one-sided neighborhoods in the statement. By \emph{one-sided radius $r$} we refer to neighborhood $[0,r]$. Clearly all CA with one-sided radius $r = 0$ are regular, so we restrict to larger radii. Similarly, we do not consider alphabets with one or fewer letters.

\begin{theorem}
\label{thm:RandomCA}
Let $X_{n,r}$ be a uniformly random one-sided CA with radius $r$ over an alphabet of size $n$. Then as $n+r \rightarrow \infty$, with high probability $X_{n,r}$ is non-regular, and indeed the weak periodic point condition fails.
\end{theorem}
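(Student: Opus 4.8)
The plan is to show that with high probability there exists a periodic point in the image of $X_{n,r}$ with no preimage of the same period, which by the weak periodic point condition (a necessary condition for regularity, by the Remark following Theorem~\ref{thm:SplitEpicSolved}) forces non-regularity. The natural candidates to examine are the spatially periodic points of smallest period, namely the uniform points $a^\ZZ$ for $a$ a single symbol. A one-sided radius-$r$ CA acts on the $n$ uniform points by a function $\pi : \{0,\dots,n-1\} \to \{0,\dots,n-1\}$, where $\pi(a)$ is the symbol such that $f(a^\ZZ) = \pi(a)^\ZZ$; since the local rule value on the constant word $a^{r+1}$ is an independent uniform symbol, $\pi$ is a uniformly random function on an $n$-element set. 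I would first handle the regime where $n \to \infty$ (so in particular $n+r \to \infty$ forces $n$ large whenever $r$ stays bounded), and then separately the regime where $n$ stays bounded but $r \to \infty$.

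For the large-$n$ regime, the key observation is that a uniformly random function $\pi$ on $n$ points is non-surjective with high probability: the probability that $\pi$ is a bijection is $n!/n^n \to 0$. When $\pi$ is not surjective, pick $b$ not in the image of $\pi$. Then $b^\ZZ$ lies in the image of $X_{n,r}$ if and only if $b^\ZZ$ has some $1$-periodic preimage, i.e. if and only if $b \in \pi(\{0,\dots,n-1\})$; but we need an image point witnessing WPP-failure, so instead I argue as follows. Consider a symbol $b$ with $b \notin \mathrm{im}(\pi)$ but such that $b^\ZZ$ is nonetheless in the image subshift via a \emph{longer}-period preimage. The cleanest route is to exhibit a $1$-periodic point $b^\ZZ$ that is in the image (so it has \emph{some} preimage) yet has no $1$-periodic preimage. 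I would produce this by finding $b$ outside $\mathrm{im}(\pi)$ for which the local rule allows a non-constant preimage pattern mapping onto $b^\ZZ$; a robust way is to show that for random rules, with high probability there is a period-$2$ point $(cd)^\ZZ$, with $c \neq d$, whose image is constant $b^\ZZ$ for some $b \notin \mathrm{im}(\pi)$, which I can arrange by a first-moment/counting estimate over the $\binom{n}{2}$ ordered pairs and the independent rule-values on the relevant length-$(r+1)$ windows.

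The main obstacle I anticipate is precisely this last step: merely having $\pi$ non-surjective is not enough, since a symbol $b \notin \mathrm{im}(\pi)$ might also be absent from the image subshift entirely, in which case $b^\ZZ$ provides no obligation and hence no contradiction. So the crux is to guarantee, with high probability, the \emph{simultaneous} events ``$b^\ZZ$ is in the image'' and ``$b^\ZZ$ has no period-$1$ preimage.'' I would control this by a second-moment or union-bound computation showing that the number of symbols $b$ achievable as images of non-constant short-period patterns vastly exceeds $|\mathrm{im}(\pi)| \le n$, so that some such $b$ falls outside $\mathrm{im}(\pi)$; the independence of distinct local-rule entries across the relevant windows is what makes these estimates tractable. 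For the bounded-$n$, large-$r$ regime, the same mechanism applies with even more room: the number of independent rule-entries on length-$(r+1)$ windows grows like $n^{r+1}$, so the probability that $\pi$ is surjective \emph{and} that every in-image uniform point admits a uniform preimage decays, and a direct Borel--Cantelli-style or first-moment estimate over short non-constant periodic points whose image is uniform finishes the claim. I expect the write-up to reduce, in both regimes, to verifying that a certain ``bad'' configuration (a short-period point mapping to a uniform point not realized by any constant preimage) occurs with probability tending to $1$, which is a routine but careful union-bound argument once the independence structure of the random local rule is set up.
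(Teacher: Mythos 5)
Your first regime (large $n$) follows essentially the same route as the paper's proof: the paper also certifies that a constant point lies in the image by means of a period-$2$ preimage (pairs $a \neq b$ with $f((ab)^\ZZ)$ constant, whose common image is then a uniformly random constant point, independent of the induced map on constants), and then argues that this image misses the set $I$ of symbols having constant preimages. One caution: your headline observation, that $\pi$ is non-surjective with high probability, is too weak on its own. If the complement of $\mathrm{im}(\pi)$ only had size $\geq 1$, then roughly $n/2$ independent uniform draws would miss $\mathrm{im}(\pi)$ with probability only about $1 - (1-1/n)^{n/2} \approx 1 - e^{-1/2}$, which does not tend to $1$. You need the complement of $I$ to have size linear in $n$; the paper gets this from a Chebyshev bound showing $|I| < 0.9n$ with high probability, while simultaneously the number $M$ of constant-image period-$2$ points is $\geq 0.25n$ with high probability, giving failure probability at least $1 - 0.9^{0.25n}$. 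Your proposed second-moment refinement is exactly this, so this half of your plan is sound and matches the paper.

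The genuine gap is the bounded-$n$, large-$r$ regime. Your witnesses there are still uniform points: you claim that ``the probability that $\pi$ is surjective and that every in-image uniform point admits a uniform preimage decays'' as $r \to \infty$. It does not. For fixed $n$, the induced map $\pi$ on the $n$ constant points is a uniformly random function on a fixed $n$-element set whose law does not depend on $r$, and it is a bijection with probability $n!/n^n$, a constant. On that event every uniform point in the image has a uniform preimage, so no period-$1$ witness of WPP failure can exist; hence any argument whose witnesses have period $1$ proves non-regularity with probability at most $1 - n!/n^n$ (at most $1/2$ when $n = 2$), uniformly in $r$. More generally, once $r$ exceeds roughly twice a fixed period bound $k$, the joint distribution of the images of all points of period $\leq k$ stabilizes, so no argument confined to bounded-period points can give probability tending to $1$ as $r \to \infty$ with $n$ fixed; the large total number $n^{r+1}$ of rule entries is irrelevant, since only the entries on periodic windows matter. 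The paper's fix is to let the period grow with the radius: take $p$ large with $2p \leq r$, run your large-$n$ argument with primitive period-$p$ points in the role of the constants and period-$2p$ points in the role of your period-$2$ points. The number of period-$p$ points tends to infinity with $r$, the relevant rule entries are again independent, and a coupling argument (handling the points whose images have smaller period) reduces everything to the same computation as in the large-alphabet case. Without this rescaling of the period, your second regime, and hence the theorem as stated (which quantifies over all $n + r \to \infty$, including $n$ fixed), remains unproven.
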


More precisely, for all $\epsilon > 0$ there exists $n_0$ such that for all $n+r \geq n_0$, $n \geq 2$, $r \geq 1$, $X_{n,r}$ is regular with probability at most $\epsilon$.

\begin{proof}
Let $\vl A \vl = n$ be the alphabet. Let us first show that for large $n$, $X_{n,r}$ is very likely to be non-regular no matter what the value of $r$ is. It turns out that it suffices to look at the behavior of points of period $1$ or $2$, namely we show that with high probability there is a point of period $1$ with a preimage of period $2$ but no preimage of period $1$. Let $P_1$ be the set of points of period $1$. For $a,b \in A$ with $a \neq b$ write $\iota(a,b) = abababa...$ and $\iota(a) = aaa..$. All that matters to us is how $f$ maps these points; note that this only depends on how the local rule maps inputs of the form $abab...$ for $a,b \in A$.

Note that the image of a point in $P_1$ is uniformly chosen out of points in $P_1$. Write $N$ for the random number of points in $P_1$ with a preimage in $P_1$, and write $I$ for this set. Next, let $M$ be the number of pairs $a, b \in A$, $a \neq b$, such that $f(\iota(a,b)) = f(\iota(b,a))$, meaning both $\iota(a,b)$ and $\iota(b,a)$ are mapped to a point in $P_1$.

It is clear that, conditioned on fixed values of $N$ and $M$, the following things are independent and uniformly distributed:
\begin{itemize}
\item the actual set $I$ (conditioned on $N$),
\item the $M$ many choices of images $f(\iota(a,b)) = f(\iota(b,a))$ for $a \neq b$,
\end{itemize}
From this, it follows that if we have $N < 0.9n$ and $M \geq k$, then the probability that some pair is mapped to a unary point that is not in $I$ is at least $1 - 0.9^k$, which tends to $1$ as $k \rightarrow \infty$. It remains to show that $N < 0.9n$ and $M \geq k$ with high probability where $k \rightarrow \infty$.

First, let us look at $N$. Writing $N_a \in \{0, 1\}$ for the random variable indicating that the symbol $a \in A$ has a preimage, $N = \sum_a N_a$, we have
\[ E(N) = \sum_a E(N_a) = n (1 - 1/n)^n = n(1 - 1/e) + O(1) \]
and
\begin{align*}
\Var(N) &= E(N^2) - E(N)^2 \\
&= \sum_a E(N_a^2) + \sum_{a \neq b} E(N_a N_b) - E(N)^2 \\
%&= \left[ n(1 - 1/e) \right]  +  \left[ n(n-1) (1 - 2(1 - 1/n)^n + (1 - 2/n)^n) \right]  - \left[ n^2(1 - 1/e)^2 \right]  + O(1) \\
&= n(1 - 1/e)  +  n (n - 1) (1 - 2/e + 1/e^2)  -  n^2(1 - 1/e)^2   +  O(n) \\
&= O(n)
\end{align*}
Here recall that for fixed $k$, $(1 + k/n)^n = e^k + O(1/n)$.

The claim now follows from Chebyshev's inequality, which states that the difference between the value of a random variable and its mean is likely to be of the order of the square root of its variance. More precisely, writing $\Var(N) = \sigma^2 = \alpha n$ (where $\alpha = O(1)$ is technically a function of $n$) we have $0.2 n = 0.2 \sqrt{n/\alpha} \sigma$, and by Chebyshev's inequality
\begin{align*}
 P(N > 0.9n) &\leq P(\vl N-E(N) \vl > 0.2 n)  = \Prob(\vl N-E(N) \vl > 0.2 \sqrt{n/\alpha} \sigma) \\
& \leq 1 / 0.04 n \alpha = O(1 / n).
\end{align*}

Next, we look at $M$. Order the alphabet and write $M = \sum_{a < b} M_{a, b}$ where $M_{a, b} \in \{0, 1\}$ is the random variable where $M_{a,b} = 1$ denotes $f(\iota(a,b)) = f(\iota(b,a))$. We see that $M$ is simply a sum of independent random variables, thus follows the Bernoulli distribution $M \sim B(\binom{n}{2}, 1/n) = B(m, p)$. The expected value is $E(M) = pm = (n-1)/2$, and the variance is
\[ \Var(M) = mp(1-p) = \binom{n}{2} (1-1/n)/n = O(n), \]
so again with high probability the value is above, say $0.25 n$. Of course, we have $k = 0.25 n \rightarrow \infty$.

To conclude, it clearly suffices to show that for any fixed $n$, cellular automata are non-regular with high probability for large $r$. The argument is essentially the same as the above, with minor changes: Instead of unary points, we use points of period $p$ with $2p \leq r$ and $p$ large. Most words of length $p$ are \emph{primitive}, i.e.\ the corresponding periodic point has least period $p$, write again $P_p$ for this set. The set $P_p$ splits into equivalence classes under the shift, and if we pick the local rule at random, then the image of each $x \in P_p$ is picked uniformly from the set $Q_p$ of points of (not necessarily least) period $p$.

The images for different points in $P_p$ are independent from each other, so this is in fact the same distribution as we considered above, except that some periodic points may be mapped to a point of smaller period. Thus, by a coupling argument the probability that 90\% of points in $P_p$ have preimages in $P_p$ is clearly at most as large as it was in the previous process.

Now, we just have to show that there are at least $k$ points in $P_p$ with preimages in $P_{2p}$ with high probability, with $k$ tending to infinity with $r$. We observe that images of points in $P_{2p}$ are picked uniformly at random from points in $Q_{2p}$, and are independent from each other and from the choices of images of points in $P_p$. Thus, the calculation is the same as above, since given that a point in $P_{2p}$ maps to a point in $Q_p$, the probability that it did not map to $P_p$ is negligible.
\end{proof}

%%===========================================================================================%%
%% If you are submitting to one of the Nature Portfolio journals, using the eJP submission   %%
%% system, please include the references within the manuscript file itself. You may do this  %%
%% by copying the reference list from your .bbl file, paste it into the main manuscript .tex %%
%% file, and delete the associated \verb+\bibliography+ commands.                            %%
%%===========================================================================================%%

\section*{Acknowledgements}

We thank Jarkko Kari for observing that Proposition~\ref{prop:SurjectiveNotInjective} works in all dimensions. We thank Johan Kopra for pointing out that Lemma~\ref{lem:InjImpliesSurj} is easier to prove than to find in \cite{LiMa95}. We thank the anonymous referees for suggesting significant improvements to the presentation, and simpler proofs. In particular, the proof of Theorem~\ref{thm:RandomCA} was simplified essentially.

\bibliographystyle{vancouver}
\bibliography{../../../../bib/bib}{}% common bib file
%% if required, the content of .bbl file can be included here once bbl is generated
%%\input sn-article.bbl

%% Default %%
%%\input sn-sample-bib.tex%

\end{document}